\theoremstyle{plain}
\newtheorem{theorem}{Theorem}[section]
\newtheorem{lemma}[theorem]{Lemma}
\newtheorem{corollary}[theorem]{Corollary}
\newtheorem{proposition}[theorem]{Proposition}
\theoremstyle{definition} 
\newtheorem{example}[theorem]{Example}
\newtheorem{definition}[theorem]{Definition}
\newtheorem{remark}[theorem]{Remark}
\newtheorem{assumption}[theorem]{Assumption}
\crefname{definition}{Definition}{Definitions}
\crefname{lemma}{Lemma}{Lemmatta}
\crefname{theorem}{Theorem}{Theorems}
\crefname{proposition}{Proposition}{Propositions}
\crefname{equation}{Equation}{Equations}
\crefname{section}{Section}{Sections}
\crefname{example}{Example}{Examples}
\crefname{corollary}{Corollary}{Corollaries}
\crefname{remark}{Remark}{Remarks}
\crefname{figure}{Figure}{Figures}
\crefname{enumi}{Condition}{Conditions}
\crefname{footnote}{Footnote}{Footnotes}
\crefname{assumption}{Assumption}{Assumptions}
\DeclareMathOperator*{\argmax}{argmax}
\DeclareMathOperator{\succz}{succ}
\DeclareMathAlphabet{\mathpzc}{OT1}{pzc}{m}{it}
\newcommand{\eps}{\varepsilon}
\newcommand{\st}{\text{ such that }}
\newcommand{\ow}{\text{otherwise}}
\newcommand{\R}{\mathbb{R}}
\newcommand{\card}[1]{\left|#1\right|}
\newcommand{\sbr}[1]{\left[#1\right]}
\newcommand{\cbr}[1]{\left\{#1\right\}}
\newcommand{\p}[1]{\left(#1\right)}
\newcommand{\set}[1]{\cbr{1,\ldots,#1}}
\newcommand{\prob}{\mathds{P}}
\newcommand{\expect}{\mathds{E}}
\newcommand{\exintquantity}{X}
\newcommand{\exinttransfer}{T}
\newcommand{\choice}{W}
\newcommand{\lowchoice}{\underline{\choice}}
\newcommand{\highchoice}{\overline{\choice}}
\newcommand{\player}{\xi}
\newcommand{\experiment}{\psi}
\newcommand{\action}{a}
\newcommand{\potentialBidders}{B}
\newcommand{\realBidders}{R}
\newcommand{\shillBidders}{S}
\newcommand{\numBidders}{N}
\newcommand{\pReal}{p}
\newcommand{\valDist}{F}
\newcommand{\val}{v}
\newcommand{\numPossVals}{M}
\newcommand{\type}{\theta}
\newcommand{\typeDist}{F}
\newcommand{\typepmf}{f}
\newcommand{\typeSpace}{\vartheta}
\newcommand{\game}{G}
\newcommand{\actionQuantity}{x}
\newcommand{\actionTransfer}{t}
\newcommand{\util}{u}
\newcommand{\strategy}{\sigma}
\newcommand{\strategyParameterized}{\strategy(\type; \potentialBidders)}
\newcommand{\reserveIndex}{\rho}
\newcommand{\reserve}{\type^\reserveIndex}
\newcommand{\optimalReserve}{\type^{\reserveIndex^*}}
\newcommand{\signal}{s}
\newcommand{\signalSet}{\mathcal{S}}
\newcommand{\quantity}{\tilde{\actionQuantity}}
\newcommand{\transfer}{\tilde{\actionTransfer}}
\newcommand{\actionSet}{A}
\newcommand{\devActionSet}{\mathcal{\actionSet}^{\experiment}}
\newcommand{\devValSet}{\vartheta^{\experiment}}
\newcommand{\menuRule}{\mu}
\newcommand{\possibleValues}{\Theta}
\newcommand{\highValue}{\overline{{\possibleValues}}}
\newcommand{\lowValue}{\underline{{\possibleValues}}}
\newcommand{\nextPlayer}{\vec{\player}}
\newcommand{\directAuction}{(\quantity,\transfer,\menuRule,\player_0)}
\newcommand{\optDirectAuction}{(\quantity^*,\transfer,\menuRule,\player_0)}
\newcommand{\effDirectAuction}{(\quantity^E,\transfer,\menuRule,\player_0)}
\newcommand{\reportedAction}[1]{\action^{\leadsto #1}}
\newcommand{\reportedValue}[1]{\type^{\leadsto #1}}
\newcommand{\publicExperiment}{\experiment = \mathrm{Id}}
\newcommand{\privateExperiment}{\experiment = \emptyset}
\newcommand{\ordering}{\rhd}
\newcommand{\contDist}{\mathpzc{\typeDist}}
\newcommand{\contpdf}{\mathpzc{\typepmf}}
\newcommand{\worstCaseQueries}{Q}
\newcommand{\optTransfer}{\transfer^*(\quantity,\menuRule,\player_0,\val,\typeDist)}
\newcommand{\possibleHistories}{H}
\newcommand{\history}{h}
\newcommand{\informationSet}{\mathcal{I}}
\newcommand{\possibleTerminalHistories}{Z}
\newcommand{\terminalState}{z}
\newcommand{\recentAction}{\mathcal{A}}
\newcommand{\playerFunction}{P}
\newcommand{\affOrder}{\succeq_{\mathrm{Aff}}}
\newcommand{\infoOrder}{\succeq_{\mathrm{Info}}}
\newcommand{\valOrder}{\succeq_{\mathrm{Com}}}
\newcommand{\evenSpacing}{\delta}
\newcommand{\screenValSuper}{Y}
\newcommand{\screenVal}{\type^{\screenValSuper}}
\newcommand{\strong}{strong}
\newcommand{\Strong}{Strong}
\newcommand{\strongly}{strongly}
\newcommand{\Strongly}{Strongly}
\newcommand{\weak}{weak}
\newcommand{\Weak}{Weak}
\newcommand{\weakly}{weakly}
\newcommand{\Weakly}{Weakly}
\newcommand{\strategyproof}{ex-post incentive compatible}
\newcommand{\STrategyproof}{Ex-Post Incentive Compatible}
\newcommand{\strategyproofness}{ex-post incentive compatiblility}
	\title{Shill-Proof Auctions\thanks{The authors thank Mohammad Akbarpour, Eric Budish, Jeremy Bulow, Peter Cramton, Jacob Leshno, Shengwu Li, Joshua Mollner, Stephen Morris, Marek Pycia, Alex Rees-Jones, Roberto Saitto, Clayton Thomas, and seminar audiences at a16z crypto, EC '25, MIT, SITE, and Stony Brook for helpful comments.
			Komo gratefully acknowledges support from the NSF Graduate Research Fellowship Program.
			Kominers gratefully acknowledges support from the Ng Fund and the Mathematics in Economics Research Fund of the Harvard Center of Mathematical Sciences and Applications. 
			Roughgarden's research at Columbia University is  supported in part by NSF awards
			CCF-2006737 and CNS-2212745, and research awards from the Briger Family Digital Finance Lab and the Center
			for Digital Finance and Technologies.
			Part of this work was conducted during the Simons Laufer Mathematical Sciences Institute Fall 2023 program on the Mathematics and Computer Science of Market and Mechanism Design, which was supported by NSF Grant No.\ DMS-1928930 and by the Alfred P.\ Sloan Foundation under grant G-2021-16778.
			Kominers and Roughgarden are members of the Research Team at a16z crypto (for general a16z disclosures, see https://www.a16z.com/disclosures/). Komo was an intern on the a16z crypto Research Team during part of this research. Notwithstanding, the ideas and opinions expressed herein are those of the authors, rather than of a16z or its affiliates. 
			Kominers and Roughgarden also advise companies on marketplace and incentive design, including auction design. 
			Any errors or omissions remain the sole responsibility of the authors.
	}}
	\author{Andrew Komo\footnote{MIT Economics, Email: \href{komo@mit.edu}{komo@mit.edu}} \and Scott Duke Kominers\footnote{Harvard Business School, Harvard University, and a16z crypto, Email: \href{kominers@fas.harvard.edu}{kominers@fas.harvard.edu}} \and Tim Roughgarden\footnote{a16z crypto and Columbia University, Email: \href{tim.roughgarden@gmail.com}{tim.roughgarden@gmail.com}}}
	\date{\today}
\begin{document}
 	
\maketitle

\begin{abstract}
	We characterize single-item auction formats that are \textit{shill-proof} in the sense that a profit-maximizing seller has no incentive to submit shill bids. We distinguish between \textit{\strong} shill-proofness, in which a seller with full knowledge of bidders' valuations can never profit from shilling, and \textit{\weak} shill-proofness, which requires only that the expected equilibrium profit from shilling is non-positive.
	The Dutch auction (with a suitable reserve) is the unique (revenue-)optimal and \strongly~shill-proof auction. Any deterministic auction can satisfy only two properties in the set $\{$static, strategy-proof, \weakly~shill-proof$\}$.
	Our main results extend to settings with affiliated and interdependent values.
\end{abstract}

\newpage

\section{Introduction}

\subsection{Shill Bidding in Auctions}\label{ss:shill}

\textbf{Shill Bidding in Practice. }
Auction theory typically assumes that an auction is carried out as
described (by the seller or a third party) and focuses solely on
the bidders' incentives.
Reality is often different. For example, while major
auction houses like Christie's or Sotheby's may appear to be carrying
out textbook English (ascending) auctions, a degree of skullduggery is often
afoot.  According to a \textit{New York Times} article from 2000:
\begin{quote}
Some tricks of the trade, like an auctioneer's drumming up excitement
by acknowledging nonexistent bids only he hears and potential buyers
who bid with nearly imperceptible secret signals, have been around for
decades. Making up bids, for instance, is known as ``bidding off the
chandelier'' from an era when the grand auction rooms were adorned
with ornate lighting.\footnote{See ``Genteel Auction Houses Turning Aggressive,'' \textit{New York Times}, April 24, 2000.}
\end{quote}
The practice
continues to this day: Christie's Conditions of Sale for their
flagship New York location, in a section titled ``Auctioneer's
Discretion,'' states (among other things) that ``The auctioneer
can\ldots move the bidding backwards or forwards in any way he or she
may decide\ldots.''\footnote{See \url{https://www.christies.com/help/buying-guide-important-information/conditions-of-sale}.}

Such chandelier bids or \textit{shill} bids---bids submitted by the seller
in order to manipulate the final selling price---appear to be
particularly common in online auctions.
For example, eBay has long gone out of its way to emphasize that shill
bidding is forbidden and will be punished:
\begin{quote}
 We want to maintain a fair marketplace for all our users, and as
 such, shill bidding is prohibited on eBay. [\ldots] eBay has a number of
 systems in place to detect and monitor bidding patterns and
 practices. If we identify any malicious behavior, we'll take steps to
 prevent it.\footnote{See \url{https://www.ebay.com/help/policies/selling-policies/selling-practices-policy/shill-bidding-policy?id=4353}.}
\end{quote}
According to many eBay users, however, shill bidding remains
rampant. Here's a sample quote from the eBay discussion forums:
\begin{quote}
  The Sellers post a Buy Now price 3--4 times the actual cost of the
  item.  Then they place the item on an auction at \$0.01. This is to get
  as many views as possible.  The shill comes in shortly after the
    auction starts and
\ldots is there to prevent the item from being sold below
  their profit margin.\footnote{See
\url{https://community.ebay.com/t5/Buying/My-experience-with-Shill-bidders/td-p/30402514}.}
\end{quote}
\cite{chenEtAl20} find that nearly 10\% of all bidders on eBay Motors are shill bidders.

\textbf{Shill-Proof Auctions. }
Much of auction theory to date encourages truthful
bidding through careful auction design, while punting on challenges
like seller deviations and collusion via appeal to unmodeled concepts
such as the out-of-mechanism enforcement of rules.
Anecdotes about eBay and other online
platforms suggest that such methods are only partially
effective at deterring seller deviations. Thus, it makes sense to ask: To what extent can these deviations
instead be disincentivized through an auction's design?

The goal of this paper is to understand which auction formats
are ``shill-proof'' in the sense that a seller cannot profit through the
submission of shill bids. 
We show that shill bidding can matter even in private-value auctions.
The reader might wonder why shill bids can have an impact in the private-values case---assuming that the choice of reserve price does not affect
participation (as it does in the eBay example), isn't a shill bid the
same thing as a reserve price?

The answer depends on when the seller has an opportunity to shill and
the information available to them at that time.  For example, consider
an English auction in which the seller also participates via shill
bidding. Suppose the valuations of the (real) bidders are private and drawn i.i.d.\
from a regular distribution~$\valDist$ and that the opening bid of the
auction is set optimally (for revenue), to the monopoly price~$\optimalReserve$
of~$\valDist$. As the auction proceeds, with the offered price~$p$ starting
at~$\optimalReserve$ and increasing from there (in increments of~$\epsilon$, say),
the seller can shill bid at any time. Suppose that the only additional
information known to the seller at a given round of the auction is
that the remaining bidders are willing to pay at least~$p$. Then, the
seller asks himself: ``now that I know how many bidders are willing to
pay at least~$p$, do I want to shill and reset the reserve price
to~$p+\epsilon$?'' Under our assumption that~$F$ is
regular, the answer is ``no,'' and an expected revenue--maximizing
seller will never shill.\footnote{Auction theory experts will now immediately recognize that the English auction with an optimally chosen reserve price is \textit{not} generally shill-proof in this sense when the valuation distribution is not regular or the values are not private. See, for example, Footnote 18 of \cite{MW82} for more discussion of the latter point.}

Now suppose that the seller has full knowledge of bidders' realized
valuations. In this scenario, the seller will certainly, in some cases, want to
shill in an English auction to push the price up to just below the
highest of the bidders' valuations. Lest this informational assumption---that the seller knows the full valuation profile---seem impossibly demanding, consider
the Dutch (descending) auction (with an arbitrary
reserve price). Here, any shill bid by the seller terminates the
auction immediately, leaving the seller holding the item and earning
zero revenue. Therefore, even if the seller knows the full valuation profile, he would not want to shill bid.

We map out a theory of ``shill-proof'' auctions, focusing
on the following basic questions:
\begin{itemize}

\item Which auction formats are ``\strongly~shill-proof'' in the sense
  of the Dutch auction, i.e., with shill bidding being unprofitable even with full
  knowledge of bidders' realized valuations?

\item Which auction formats are ``\weakly~shill-proof'' in the sense of
  the English auction (with bidders' valuations drawn i.i.d.~from a
  regular distribution and an optimally chosen reserve price),  i.e., with
  shill bidding being unprofitable in expectation at equilibrium?

\item To what extent are \strong~and \weak~shill-proofness compatible
  with other desirable properties such as optimality, efficiency,
  \strategyproofness, and sealed-bid implementation? To what extent is shill-proofness dependent on the bidders' valuation structure?
  
\end{itemize}

\subsection{Overview of Results}\label{ss:overview}

Iterative auction formats like Dutch and English auctions play a central role in our theory, and accordingly we study (real and shill) bidding in the extensive-form game that is induced by a choice of auction format, relying on a framework for extensive-form auction analysis developed by~\citet{li17} and~\citet{akbarpourLi20}.
We consider single-item auctions with~$N$ bidders. 
A subset of these are shill bidders, which we model as bidders with zero private value for the item and with utility equal to the seller's revenue.\footnote{\label{fn:modeling}The prior literature has sometimes modeled shill bidding via an unknown number of bidders, with some subset of the bidders who end up participating in the auction being shills. Our framework is essentially equivalent: we can take $N$ to be large and require~$0$ to be in the support of the valuation distribution; and a bidder with value~$0$ is equivalent (in terms of outcomes) to a bidder not arriving.} For most of the paper, we assume that non-shill bidders have private valuations drawn i.i.d.\ from a known distribution. In \cref{sec:aff_and_sp}, we generalize to any affiliated type distribution and any interdependent value function satisfying some curvature assumptions (\cref{assum:interdep}). We also assume that shill bidders observe all actions. An auction is then \textit{\weakly~shill-proof} (\cref{def:wsp}) if there exists an equilibrium of the induced extensive-form game in which the shill bidders never shill (i.e., always bid their true private value of~$0$). An auction is \textit{\strongly~shill-proof} (\cref{def:ssp}) if, moreover, shill bidders' equilibrium strategies are ex-post strategies.
In our first result, we focus on \textit{public} auctions (\cref{def:public}), meaning auctions in which every bidder's action is publicly observable. This is arguably the most natural model for the analysis of typical iterative auctions such as Dutch and English auctions.
We then turn to arbitrary information structures to prove our other results.

Next, we summarize the main results of this paper; see also \cref{f:summary}.

\textbf{\Strongly~Shill-Proof Auctions. }
Our main result (\cref{thm:dutch_ssp}) uniquely characterizes \strongly~shill-proof auctions: the Dutch auction (with consistent tie-breaking and monopoly reserve price) is \strongly~shill-proof and optimal (i.e., maximizes the seller's expected revenue), and is the \textit{only} such auction in the public setting. In particular, \strongly~shill-proof optimal auctions cannot avoid using a large number of rounds,
and they cannot be \strategyproof~(for real bidders). The rough intuition for the proof of this result is that: (i) No matter the information structure, strongly shill-proof auctions must be pay-as-bid (\cref{lem:ssp_pab}); (ii) for any auction format other than a Dutch auction, there exists a history in which some bidder~$i$ can indicate that her value is higher than $0$ without the auction ending immediately; (iii) incentive compatibility in tandem with the public setting then implies that this information effectively induces the auction to revise its reserve price upward or increases perceived competitiveness for the item being sold, which reduces bid shading; and thus (iv) there exist valuations for the bidders such that, if bidder~$i$ is a shill bidder, shilling will increase the seller's revenue.\footnote{We focus on single-item auctions, but our uniqueness results a fortiori provide an upper bound on what is possible for multiunit auctions, as well. We leave formal study of multi-unit auctions and other settings, such as sequential auctions, to future work.
}

\textbf{Weakly Shill-Proof and \STrategyproof~Static Auctions.}
We next turn to investigating the richer design space of \weakly~shill-proof auctions.
Our main result about weakly shill-proof auctions (\cref{thm:wsp_trilemma}), focuses on \textit{single-action} auctions, meaning auction formats that induce extensive-form games in which each bidder moves exactly once.
In our framework, a single-action auction is \textit{static} when each bidder moves simultaneously.
We prove that no \weakly~shill-proof, single-action auction can satisfy even a very weak \strategyproofness~condition (\cref{def:mild}). 
Thus, an auction can satisfy two and only two of the properties in the set
$\{$single-action, \strategyproof, \weakly~shill-proof$\}$.\footnote{Assuming a regular valuation distribution and a corresponding optimal reserve price, a second-price auction is single-action and \strategyproof; a first-price auction is single-action and \weakly~shill-proof; and an English auction is \strategyproof~and \weakly~shill-proof.}

\textbf{\Weakly~Shill-Proof and Efficient Auctions. }
We then investigate \textit{efficient} (and \weakly~shill-proof) public auctions.
We prove that a Dutch auction with a reserve price of~$0$ is the unique prior-independent auction (in the sense of~\citet{DRYgeb}, with no dependence whatsoever on the valuation distribution) that is both efficient and \weakly~shill-proof (\cref{cor:dutch_robust_wsp}).
In the regular, IPV case, we show (in \cref{prop:dutch_robust_wsp}) that fixing the monopoly price, a prior-independent, \weakly~shill-proof and efficient auction must conclude with a Dutch auction when all bidders' valuations are known to be below said price.
A format such as beginning with an English auction at the monopoly price and then, should there be no takers, concluding with a Dutch auction is an example of an auction that is \weakly~shill-proof and efficient given the monopoly price.\footnote{In fact, this auction format closely resembles the Honolulu--Sydney fish auction documented by \citet{hafalirEtAl23}.}

\textbf{\Weakly~Shill-Proof and \STrategyproof~Optimal Auctions. }
The previous two results imply that \strategyproof~auctions cannot be both \strongly~shill-proof and optimal, nor can they be (robustly) \weakly~shill-proof and efficient. The English auction (with an optimal reserve price) is, as we've noted, \weakly~shill-proof, optimal, and \strategyproof~in the regular, IPV case.
Is it the unique such auction?
Does this combination of properties require a potentially large number of rounds? 
Our next result (\cref{prop:wsp_sp_upper_bound}) shows that, in general, the answer is no:
In fact, although we can never find a \weakly~shill-proof, optimal, and \strategyproof~auction that finishes in one round, we can always find a weakly shill-proof auction and a valuation distribution such that the worst-case number of auction rounds required is an arbitrarily small fraction of the number of rounds needed in the English auction.

\textbf{Affiliated and Interdependent Values.}
Our last theorem, \cref{prop:aff_shill_order}, shows that as the type distribution becomes less affiliated (\cref{def:aff_order}) and the values become less commonly valued (\cref{def:val_order}), then both the set of strongly shill-proof auctions and the set of weakly shill-proof auctions expand.
We then apply \cref{prop:aff_shill_order} to show most of our characterization results  (\cref{thm:dutch_ssp}, \cref{thm:wsp_trilemma}, and \cref{cor:dutch_robust_wsp}) hold in a more general, affiliated environment.
\Cref{prop:aff_shill_order} means that shill-proofness admits a partial order with respect to both the value function and the affiliation structure.
Our theorem holds for any affiliation structure, most commonly used interdependent value functions (the function must satisfy \cref{assum:interdep}), and any extensive-form auction where an optimal transfer rule is used.
The last restriction is necessary because we consider discrete types and so there are a multitude of incentive compatible transfers that could yield different revenue.
We generalize \cref{lem:ssp_pab} in \cref{prop:ssp_pab} to show that even with affiliation, a strongly shill-proof and optimal auction must be pay-as-bid with a reserve structure on the allocation.

\begin{figure}
\centering
\begin{subfigure}[t]{0.48\textwidth}
    \footnotesize\setlength\tabcolsep{3pt}
    \resizebox{\textwidth}{!}{
    \begin{tabular}{|c|c|c|}
        \hline
        & Static & Not Static \\
        \hline
        Strategy- & Impossible & Ascending, Screening Auction \\
        Proof & (\cref{thm:wsp_trilemma}) & (\cref{prop:wsp_sp_upper_bound}) \\
        \hline
        Not Strategy- & First-Price Auction & Dutch Auction \\
        Proof & ~ & (\cref{thm:dutch_ssp}) \\
        \hline
    \end{tabular}}
    \caption{\Weakly~shill-proof and optimal auctions}
\end{subfigure}~
\begin{subfigure}[t]{0.48\textwidth}
    \footnotesize \setlength\tabcolsep{3pt}
    \resizebox{\textwidth}{!}{
	\begin{tabular}{|c|c|c|}
        \hline
        & Static & Not Static \\
        \hline
        Strategy- & Impossible & Not Robustly \\
        Proof & (\cref{thm:wsp_trilemma}) & (\cref{prop:dutch_robust_wsp}) \\ 
        \hline
        Not Strategy- & Not Robustly & Dutch Auction \\
        Proof & (\cref{prop:dutch_robust_wsp}) & (Robustly Unique, \cref{prop:dutch_robust_wsp}) \\
        \hline
    \end{tabular}}
    \caption{\Weakly~shill-proof and efficient auctions}
\end{subfigure}
\par
\begin{subfigure}[t]{0.48\textwidth}
    \footnotesize \setlength\tabcolsep{3pt}
    \resizebox{\textwidth}{!}{
    \begin{tabular}{|c|c|c|}
        \hline
        & Static & Not Static \\
        \hline
        Strategy- & Impossible  & Impossible  \\  
        Proof &  (\cref{thm:dutch_ssp}) &  (\cref{thm:dutch_ssp}) \\  
        \hline
        Not Strategy- & Impossible  & Dutch Auction  \\
        Proof &  (\cref{thm:dutch_ssp}) &  (Unique, \cref{thm:dutch_ssp}) \\
        \hline
    \end{tabular}}
    \caption{\Strongly~shill-proof and optimal auctions}
\end{subfigure}~
\begin{subfigure}[t]{0.48\textwidth}
    \footnotesize \setlength\tabcolsep{3pt}
    \resizebox{\textwidth}{!}{
	\begin{tabular}{|c|c|c|}
        \hline
        & Static & Not Static \\
        \hline
        Strategy- & Impossible  & Impossible  \\  
        Proof &  (\cref{prop:dutch_robust_wsp}) &  (\cref{prop:dutch_robust_wsp}) \\  
        \hline
        Not Strategy- & Impossible  & Dutch Auction  \\
        Proof & (\cref{prop:dutch_robust_wsp}) & (Unique, \cref{prop:dutch_robust_wsp}) \\
        \hline
    \end{tabular}}
    \caption{\Strongly~shill-proof and efficient auctions}
\end{subfigure}  
\caption{\textbf{Summary of Results.} Characterization of single-item auction formats that are \strongly~or \weakly~shill-proof, along with other properties such as optimality, efficiency, \strategyproofness, and sealed-bid implementations.} \label{f:summary}
\end{figure}

\subsection{Related Work}

While the idea and practice of shill bidding by a seller have long been well known, the auction theory literature on the topic is surprisingly thin.
\citet{chakrabortyKosmopoulou04} consider common value auctions and focus on technological barriers (as opposed to auction formats) that can mitigate shill bidding.
\citet{lamy09} studies shill bidding specifically in English auctions in which bidders' valuations are affiliated in the sense of \citet{MW82}, and proves that shill bidding effectively cancels out the effects of affiliation in equilibrium due to real bidders conditioning on bids being fake (see also \citet{I04}).
We also consider shill bidding in affiliated environments, but consider a much larger class of possible extensive-form games.

\cite{PS05} consider a model similar to a second-price auction, motivated by ``cheating'' by online platforms that can announce a manipulated auction outcome subsequent to collecting all of the bidders' bids.
More recently, a number of works (e.g., \cite{roughgarden21,LSZ19,beos,CS23}) have considered shill bidding in the context of blockchain transaction fee mechanism design, with an emphasis on knapsack auctions that are \strategyproof, shill-proof, and robust to various forms of collusion. \citet{ausubelMilgrom06} and \citet{dayMilgrom08} consider shill bids by \textit{bidders} in a multi-item auction, who are looking to exploit complementarities to lower their payments in VCG-type mechanisms---as opposed to shill bids by a seller looking to increase revenue, as is the case of this paper.\footnote{More broadly there is a literature on sybil-resistance referred to as ``false-name proofness'' (see, e.g., \cite{conitzerEtAl10} for a reference).}
Contemporaneous work by \cite{shinozaki24} and \cite{zeng24} also study shill bidding, but take a different approach to the problem.
While we study ex-interim deterrence against shill bidding in extensive form games, they group auctions into equivalence classes based on the outcome and primarily focus on ex-ante deterrence against the seller inserting additional bidders into the auction to increase perceived competition.
The results in those papers are complementary to ours: While we show that the dynamics of the auction are important for preventing shill bidding, they show that when the shill bidders can increase perceived competition outside of taking actions in the auction, only the posted-price mechanism is non-manipulable (see also \cref{foot:zeng}).

Our theory of shill-proof auctions is similar in spirit to the theory of credible mechanisms developed by \citet{akbarpourLi20}, and leverages their framework for extensive-form auction analysis. 
That said, shill-proofness differs conceptually from credibility as shill-proofness focuses on the auctioneer's incentives to insert fake bids, whereas credibility focuses on the auctioneer's incentive to truthfully report the actions of a bidder to other bidders.
Furthermore, the results in this paper are also qualitatively different.
For example, there are a multitude of credible auctions, but only one \strongly~shill-proof auction and there are a multitude of strategy-proof and \weakly~shill-proof auctions, but only one strategy-proof and credible auction (see \cref{ss:shill_v_cred} for more discussion).
In \cref{prop:cred_nesting}, we prove that strong shill-proofness is a stronger condition than credibility and weak shill-proofness is a weaker condition in the sense that if an auction is strongly shill-proof, then it is credible, which in turn implies that it is weakly shill-proof.

More recent research on credible mechanisms, usually with a focus on evading the impossibility results of \citet{akbarpourLi20} under extra assumptions (such as adding cryptographic tools), includes the work of \citet{essaidiEtAl22}, \citet{ferreiraWeinberg20}, and \citet{chitraEtAl23}.
More distantly related papers include that of \citet{hauptHitzig22}, who prove a uniqueness result for the Dutch auction under contextual privacy constraints.

\subsection{Outline of the Paper}
In \cref{sec:model}, we present our formal model of shill bidding in auctions.
\Cref{sec:ssp} studies strong shill-proofness and shows the ways in which Dutch auctions are uniquely optimal at disincentivizing shill bidding.
\Cref{sec:wsp} explores which formats are weakly shill-proof and provides a trilemma result.
\Cref{sec:aff_and_sp} generalizes the model to a setting with affiliation and interdependent values, proves that shill-proofness admits, and ordering with respect to affiliation and interdependent values, and generalizes our results.
In \cref{sec:discussion}, we conclude the paper by discussing extensions.

\section{Model}\label{sec:model}
In this paper, we consider extensive-form, single item auctions.
An extensive-form game $\game$ is a tuple of possible histories $\possibleHistories$, and, for each history $\history \in \possibleHistories$, functions mapping $\history$ to: (i) a player taking an action, $\playerFunction(\history)$; (ii) a set of possible actions, $\actionSet(\history)$; (iii) an information set, $\informationSet(\history)$; and (iv) the most recent action taken, $\recentAction(\history)$.
As further notation, we denote the starting history of the game by $\history_{\emptyset}$ and the set of terminal histories as $\possibleTerminalHistories$; we say $\history' \prec \history$ if $\history'$ precedes $\history$, i.e., there exists a sequence of actions that lead from $\history'$ to $\history$.

We restrict attention to single-item auctions, which means that for every terminal history $\terminalState \in \possibleTerminalHistories$, we can associate an allocation and transfer vector: $\terminalState = (\actionQuantity,\actionTransfer)$, with $\sum_{i=1}^{\numBidders} \actionQuantity_i \le 1$ and $x_i \in \cbr{0,1}$ for all $i$.
Abusing notation slightly, we use $\actionQuantity(\terminalState),\actionTransfer(\terminalState)$ to mean the vectors $(\actionQuantity,\actionTransfer)$ associated with the terminal history $\terminalState$.
We also assume perfect recall and finite depth. (\cref{def:game} in the Supplemental Appendix gives a formal, thorough, and standard definition of extensive-form games.)

\subsection{Bidders -- Real and Shill}
In the auction, there is a set of potential bidders $\potentialBidders$, with $\card{\potentialBidders} = \numBidders$, who might participate.
We assume that the seller's value is commonly known to be $0$.
Of these potential bidders, a set of real bidders $\realBidders$ actually participate.
Each bidder $i \in \potentialBidders$ has an independent probability $\pReal$ of participating, $\prob\sbr{i \in \realBidders} = \pReal$.\footnote{This randomness  plays little role in our analysis---we impose it only so that the overarching structure of our model has bidders with ex-ante, symmetric, independent private values. See also \cref{fn:modeling}.} 
The other bidders, $\shillBidders = \potentialBidders \backslash \realBidders$, are shill bidders whose incentives are completely aligned with the seller/auctioneer's, i.e., their utility is defined by the sum of real bidders' transfers: for $i \in \shillBidders, \util_i(\terminalState) = \sum_{j \in \realBidders} \actionTransfer_j(\terminalState)$.
Each real bidder $i \in \realBidders$ has value $\type_i$ for the item being sold where $\type_i \sim \typeDist$ independently for each $i$.
Each real bidder has quasi-linear utility: for $i \in \realBidders$,
\iftoggle{compressLines}{$\util_i(\terminalState) = \actionQuantity_i(\terminalState) \type_i - \actionTransfer_i(\terminalState).$}{\[ \util_i(\terminalState) = \actionQuantity_i(\terminalState) \type_i - \actionTransfer_i(\terminalState). \]}
We assume $\typeDist$ is discrete, with support $\typeSpace$ consisting of the ordered atoms $0 = \type^1 < \type^2 < \ldots < \type^{\numPossVals}$, and we define $\typepmf(\type^k) = \prob_{w \sim \typeDist}\sbr{w = \type^k}$ to be the probability mass function (pmf) of the distribution.
As notation, for each shill bidder $i \in \shillBidders$, we assign $\type_i = 0$ and let $\type = (\type_1,\ldots,\type_{\numBidders})$.
The choice of values for shill bidders does not affect their incentives, and by supposing that their values are $0$, we can define efficiency and optimality (revenue-maximization) in terms of only $\type$ instead of $\type$ and $\realBidders$.\footnote{When considering optimal auctions, we naturally assume the seller only cares about raising revenue from real bidders rather than shill bidders, since shill bidders are proxies for the seller themself.}
Observe that given how $\type$ is generated, we are in the standard, symmetric, single-item independent private values (IPV) setting. 

Note that (i) whether a given bidder is real and (ii) the bidders' values for the item are not built into the extensive-form game $\game$.
Instead, bidders' strategies are a function of their types.
Like most papers in the extensive-form auction literature, we study games with a finite type space because defining auctions with a continuum of types requires defining a general class of continuous-time games.
To the authors' knowledge, there is no theory of continuous-time games that rivals the generality and flexibility of extensive-form games.

Real bidders have no ex-ante information about who else is a real bidder; they only know that each other bidder is real with probability $\pReal$.\footnote{Unlike the assumption that $\numBidders$ is fixed, this assumption is an economically substantive one: The only way for shill bidders to manipulate the outcome of the auction is for shill bidders to take actions in the auction. If the bidders instead knew who were the real bidders, shill bidders could have an incentive to appear as real bidders in order to increase perceived competition.
\label{foot:zeng}}
However, throughout the course of the auction, they can update their beliefs about which bidders are real and adjust their actions accordingly.
We assume that shill bidders know the set of shill bidders and observe all previous actions taken.
Formally, for any history $\history$, if $\playerFunction(\history) \in \shillBidders$, then $\informationSet(\history) = \cbr{h}$.
This assumption rules out games with simultaneity (including static games) from the perspective of the shill bidders, but not real bidders.\footnote{In extensive form games, simultaneity is modeled as the information set of a player having multiple elements. Without cryptography or other unmodeled technologies, we view it as reasonable to assume that while the auction may appear simultaneous to the real bidders, actions are taking place sequentially and the seller can observe those actions.}

Our equilibrium concept is pure-strategy Perfect Bayesian Equilibrium; a formal definition of the auction equilibrium $(\game,\strategy)$ can be found in \cref{def:game_eq}.
We write $\strategy(\type;\realBidders)$ for the strategy profile when the value profile is $\type$ and the realized set of real bidders is $\realBidders$.
Perfect Bayesian equilibria are defined without consideration of group deviations by the shill bidders.
However, with this setup, it is without loss to focus only on individual shill bidder's incentives as opposed to any group actions because all the shill bidders have the same objective function and information available to them.

\subsection{Auction Environment}
Throughout the paper, we focus on auction equilibria that are ex-post monotone and individually rational:
The auction equilibrium $(\game,\strategy)$ is \textbf{monotone} if, for all $i$, $j$, $\type_{-j}$, and $\type_j > \type_j'$,
\iftoggle{compressLines}{$\sbr{\actionTransfer_i\p{\strategyParameterized} > 0 \implies \actionTransfer_i\p{\strategyParameterized} \ge \actionTransfer_i\p{\strategy(\type_j',\type_{-j};\potentialBidders)}},$}{\[\actionTransfer_i\p{\strategyParameterized} > 0 \implies \actionTransfer_i\p{\strategyParameterized} \ge \actionTransfer_i\p{\strategy(\type_j',\type_{-j};\potentialBidders)},\]}
and is \textbf{individually rational} (IR) if, for all $\val$ and $i \in \potentialBidders$,
\iftoggle{compressLines}{$\actionQuantity_i(\strategyParameterized)\val_i - \actionTransfer_i(\strategyParameterized) \ge 0.$}{\[ \actionQuantity_i(\strategyParameterized)\val_i - \actionTransfer_i(\strategyParameterized) \ge 0. \]}
The monotonicity condition on transfers is satisfied by all standard single-item auction formats such as English auctions, Vickrey auctions, Dutch auctions, and sealed first-price auctions.
Moreover, note that we impose the monotonicity condition primarily for convenience; all our results continue to hold if we instead just assume that total transfers are weakly higher whenever any bidder reports a value higher than $0$.
The ex-post IR condition rules out all-pay auctions and ensures that only the winner pays the seller.\footnote{Unlike with continuous types, with discrete types there is positive probability of multiple bidders having the same highest value and so an optimal or efficient auction make randomize between bidders. For many of our results, we will assume \cref{def:orderly} to rule out this case.}
We also make this assumption primarily for convenience, versions of all our main results would still hold if we were to relax to ex-interim IR.

\subsection{Shill-Proofness}
Next, we define our key shill-proofness desiderata.
We are interested in auction equilibria in which shill bidders do not shill. 
Formally, this corresponds to requiring that shill bidders always act like real bidders who have value $0$ for the item---since real bidders who have value $0$ will never enter non-trivial bids in equilibrium, requiring shill bidders to have the same actions in equilibrium in effect means that shilling does not occur.

\begin{definition}\label{def:wsp}
	An auction equilibrium $(\game,\strategy)$ is \textbf{\weakly~shill-proof} if $\strategy$ is invariant to the realization of $\shillBidders$, i.e., for all $\type$ and $\shillBidders,\shillBidders' \subseteq \cbr{i : \type_i = 0}$:
	\iftoggle{compressLines}{$\strategy\p{\type;\potentialBidders\backslash\shillBidders} = \strategy\p{\type;\potentialBidders\backslash\shillBidders'}$.}{\[ \strategy\p{\type;\potentialBidders\backslash\shillBidders} = \strategy\p{\type;\potentialBidders\backslash\shillBidders'}. \]}
\end{definition}

Note that \cref{def:wsp} is a statement about an equilibrium of an auction---it is possible (although we have not found an example of this) that an auction may have both shill-proof equilibria and non-shill-proof equilibria.

We can also strengthen our no-shilling criterion from equilibrium to ex-post strategy:
\begin{definition}\label{def:ssp}
	An auction equilibrium $(\game,\strategy)$ is \textbf{\strongly~shill-proof} if it is \weakly~shill-proof and $\strategy$ is an ex-post strategy profile for shill bidders, i.e., for all $\strategy'$, $\shillBidders$, and $\type_{-\shillBidders}$,
	\[  \sum_{j \in \realBidders} \actionTransfer_j(\strategy(0,\type_{-\shillBidders};\realBidders)) \ge \sum_{j \in \realBidders} \actionTransfer_j\p{\strategy_{\shillBidders}',\strategy_{-\shillBidders}(0,\type_{-\shillBidders};\realBidders)}. \]
\end{definition}
\Strong~shill-proofness is obviously preferable to \weak~shill-proofness (all else equal), especially if there are concerns about a seller acquiring information about real bidders' valuations beyond what is encoded by the prior. As we will see, however, the design space of \weak~shill-proof auctions is meaningfully larger than that of \strong~shill-proof auctions.

\subsection{Revelation Principle}

In order to make progress in understanding shill-proof auction formats, the following revelation principle will be helpful: for every auction equilibrium $\p{\game,\strategy}$, there exists a direct auction that can be summarized by a direct allocation rule $\quantity : \typeSpace^{\numBidders} \to [0,1]^{\numBidders}$, a direct transfer rule $\transfer: \typeSpace^{\numBidders} \to \R^{\numBidders}$, a menu rule
\[\menuRule: \mathcal{P}\p{\typeSpace^{\numBidders}} \times \potentialBidders \to \bigcup_{L=2}^{\numPossVals} \p{\cbr{\mathcal{T} \in \Pi(\typeSpace) \mid \card{\mathcal{T}} = L} \times \potentialBidders^{L}}\]
where $\Pi(\typeSpace)$ is the set of all possible partitions over the type space, and a starting player $\player_0 \in \potentialBidders$.\footnote{This revelation principle is similar to those found in, for example, \citet{ashlagiGonczarowski18,mackenzie20,mackenzieZhou22, pyciaTroyan23}.}
The first input to the menu rule $\menuRule$ is a set~$\possibleValues$ of valuation profiles of the form $\possibleValues = \possibleValues_1 \times \possibleValues_2 \times \cdots \times \possibleValues_{\numBidders}$ with~$\possibleValues_i \subseteq \typeSpace$ for all~$i$---intuitively, the valuation profiles that are, in equilibrium, consistent with a particular history. 
The second input is a player~$\player$ who is to move next. 
The output of the rule is a collection $\cbr{\p{\choice_{\ell}, \nextPlayer_{\ell}}}_{\ell \in \set{L}}$, 
where the $\choice_{\ell}$'s are a partition of~$\possibleValues_{\player}$ (the equilibrium strategy $\strategy$ determines the partition; player~$\player$ will truthfully choose a subset 
based on her valuation) and~$\nextPlayer_{\ell}$ indicates the next player to move should 
player~$\player$ choose~$\choice_{\ell}$.
Under $\strategy$, the player $\player$ will always select the partition $\choice_{\ell}$ such that $\type_{\player} \in \choice_{\ell}$.
If $\nextPlayer_{\ell} = \emptyset$, then the game ends should choice $\ell$ be selected by the bidder~$\player$.
For a typical iterative auction, one generally has $L=2$ with the two sets corresponding to types above and below some value, respectively.
Or, for a single-action auction, the $W_{\ell}$'s are generally singletons, with one per type in $\possibleValues_i$.

We show that for any implementable outcome $(\quantity,\transfer)$ of the auction, we can always find a menu rule that is ``informative"---the set of possible outcomes differs across partition selections\footnote{This notion of informativeness is very similar to the pruned condition from \citet{akbarpourLi20}.}---that also implements the same outcome.
So, without loss of generality, we restrict menu rules in this way and then describe an auction equilibrium by $\directAuction$. (See \cref{lem:aug_rev_prin} in the Appendix for a more formal treatment.)
We refer to $\directAuction$ as an \textit{auction} when convenient.
As is always the case with direct mechanisms, the auction encompasses both the game form and the equilibrium, i.e., by appealing to the revelation principle we have implicitly selected the equilibrium.

Finally, as notation for later sections, for a set $\possibleValues$ of valuation profiles, define $\highValue_i = \max\cbr{\type_i : \type_i \in \possibleValues_i}$ to be the maximum possible value of bidder $i$; $\highValue_{-i}$ to be the maximum possible value of bidders $j \ne i$; and $\highValue = \max_i \cbr{\highValue_i}$.
We define $\lowValue_i,\lowValue_{-i}$, and $\lowValue$ as the corresponding values for minima instead of maxima.

\section{Strongly Shill-Proof Auctions}\label{sec:ssp}
\subsection{Direct Mechanisms}
In this subsection, we first show that all \strongly~shill-proof auctions must be pay-as-bid and then show that under an assumption that real bidders observe all past actions, we can precisely pin down the Dutch auction as the the only \strongly~shill-proof auction.

\begin{lemma}[Pay-as-bid]\label{lem:ssp_pab}
	If an auction $\directAuction$ is \strongly~shill-proof, then it must be a pay-as-bid auction. Formally, for all $\player,\type_{\player}$, and $\type_{-\player},\type_{-\player}'$,
	\[ \quantity_{\player}\p{\type_{\player},\type_{-\player}} = \quantity_{\player}\p{\type_{\player},\type_{-\player}'} \implies \transfer_{\player}\p{\type_{\player},\type_{-\player}} = \transfer_{\player}\p{\type_{\player},\type_{-\player}'}. \]
\end{lemma}

The proof of \cref{lem:ssp_pab} and all other results can be found in the Appendix.
Observe that \cref{lem:ssp_pab} holds (as does \cref{thm:dutch_ssp}) even if we relax the assumption on shill bidders' information sets because \strong~shill-proofness means that shill bidders want to report $0$ even if they know the precise valuations of other bidders ex-ante.
To see why \cref{lem:ssp_pab} is true, consider the case where $\realBidders = \cbr{\player}$. Then, the shill bidders will report whichever values maximize $\transfer_{\player}$ and so $\transfer_{\player}$ must be constant across all outcomes with the same allocation.

Recall that with continuous types, two bidders have the same type with probability $0$ and the optimal allocation rule is uniquely defined up to measure-zero events.
However, since types are discrete, the probability of ties is non-zero and so we have to define a tie-breaking rule.
We adopt the notion of orderliness introduced by \citet{akbarpourLi20}: there exists a fixed priority order---independent of values---over which bidder wins an item if there is a tie.\footnote{For example, if ties are broken lexicographically, then the auction is orderly.}
We clarify their definition by explicitly linking the priority order to the ex-post allocation rule.
\begin{definition} \label{def:orderly}
	An auction equilibrium $(\game,\strategy)$ is \textbf{orderly} if there exists a total ordering $\ordering$ over $(\type_i,i)$ with the following properties: For all $i$ and $j$,
	\begin{enumerate}[(i)]
		\item $\type_i > \type_j \implies (\type_i,i) \ordering (\type_j,j)$;
		\item if there exists $m$ such that $(\type^m,i) \ordering (\type^m,j)$, then for all $k$, $(\type^k,i) \ordering (\type^k,j)$; and
		\item for all $\type$, if $\quantity_i(\type) \in \cbr{0,1}$ and $(\type_i, i) \ordering (\type_j,j)$, then $
	\quantity_i(\type) > \quantity_j(\type)$.
	\end{enumerate}
\end{definition}

We can now define the orderly allocation rule for any auction with reserve type $\reserve$ as
\[ \quantity^{\reserveIndex}_i(\type) = \mathds{1}\cbr{\type_i \ge \reserve, \p{\type_i,i} = \max_{\ordering} \cbr{\p{\type_j,j}}_{j \in \potentialBidders}}. \]
Then, the orderly, efficient allocation rule is $\quantity^E = \quantity^1$ and the orderly, optimal allocation rule is $\quantity^* = \quantity^{\reserveIndex^*}$ for some $\reserveIndex^*$.

We can now explicitly define the revenue-maximizing pay-as-bid bidding functions. 
Note that we need to select among multiple IC bidding functions because the discrete type space implies that the IC constraints need not bind with equality.
Using \cref{lem:ex_int_funcs} from the Appendix, the bidding function for bidder $i$ is given by
\[ b^1_i(\type^m) = \type^m - \sum_{k: \type^k < \type^m} (\type^{k+1} - \type^k)\frac{\p{\typeDist(\type^{k})}^{i-1}\p{\typeDist(\type^{k-1})}^{\numBidders-i-1}}{\p{\typeDist(\type^{m})}^{i-1}\p{\typeDist(\type^{m-1})}^{\numBidders-i-1}}. \]
The transfer rule is then $\transfer^E = b^1 \cdot \quantity^E$ and $\transfer^* = b^1 \cdot \quantity^*$ for the efficient and optimal auctions, respectively.
Observe that the asymmetry in the bidding and transfer functions arises purely from the discrete types and the tie-breaking rule, not from the definition of shill-proofness.

\subsection{Indirect Mechanisms}
Now that we have pinned down the direct mechanism, we turn to the indirect implementation.
The information structure available to real bidders matters for which mechanisms are strongly shill-proof.
If there is no information leakage between bidders, then the classic first-price auction is strongly shill-proof.
However, if any bidder can learn any outcome-relevant information about another bidder's actions, then the first-price auction is not strongly shill-proof.
In \cref{ss:dutch}, we return to the primary setting studied in this work, public auctions.

To expand, the single-action, first-price auction where each bidder sequentially makes a bid will be strongly shill-proof when the game is static, i.e., when each bidder takes her single action with no information about what other bidders have done.
However, we can show that the single-action, first-price auction is strongly shill-proof only in static games---if there is ever a chance that a bidder learns some outcome-informative information about what actions previous bidders have taken, then the auction is no longer strongly shill-proof.
Formally, in an orderly auction, for some $i,j$, and $\type$ with $(\type^{\numPossVals},j) \ordering (\type_i,i) \ordering (\optimalReserve,j)$, let $h_{\type}$ be the (unique) history reached in equilibrium by $\sigma(\type; \potentialBidders)$ where $P(h) = j$ and let $h_0$ be the corresponding history reached by types $(0,\type_{-i})$.
We say that an information structure is \textbf{strictly more informative} than a static game if there exist $i$, $j$, and $\type$ such that $\informationSet(h_{\type}) \ne \informationSet(h_0)$ and $\quantity_j(\type) = 1$.
Note that if there exists such a type vector, then there must exist a type vector where $\quantity_j(\type) = 0$.
\begin{proposition}\label{prop:fpa_ssp}
	Under a static information structure, the single-action, first-price auction is strongly shill-proof.
	However, for any information structure strictly more informative than a static game, the single-action, first price auction is not strongly shill-proof.
\end{proposition}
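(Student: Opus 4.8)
The plan is to check the two requirements of \cref{def:ssp} directly in the static case and then argue the converse by contradiction, exhibiting a profitable shill deviation.

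\textbf{Static direction.} First I would verify weak shill-proofness: under a static structure a real bidder's action depends only on her own type, so a bidder with value $0$---real or shill---submits the identical bid $b^1_i(0)=0$, and hence $\strategy$ is invariant to $\shillBidders$. For the ex-post requirement, fix a shill $i$, a profile $\type_{-\shillBidders}$, and any deviation $\strategy_i'$. Since real bidders never observe $i$'s action, their bids are held fixed regardless of what $i$ does, so $i$'s bid can only affect the allocation. As the auction is pay-as-bid and (ex-post) IR, only the winner pays and she pays her own bid; hence either $i$ loses---leaving the real winner and her payment unchanged---or $i$ wins, in which case every real bidder pays $0$ and revenue falls to $0$. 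In both cases the revenue is at most its no-shill value $\sum_{j\in\realBidders}\actionTransfer_j(\strategy(0,\type_{-\shillBidders};\realBidders))$, which is exactly the inequality in \cref{def:ssp}. (This is consistent with \cref{lem:ssp_pab}, since the first-price auction is pay-as-bid.)

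\textbf{More-informative direction.} Suppose toward a contradiction that the first-price auction is \strongly~shill-proof under a structure strictly more informative than a static game, and take the witnessing $i,j,\type$ with $(\type^{\numPossVals},j)\ordering(\type_i,i)\ordering(\optimalReserve,j)$, $\quantity_j(\type)=1$, and $\informationSet(h_{\type})\neq\informationSet(h_0)$. The ordering chain forces $\type_i>\optimalReserve$, so a report of $\type_i$ is a serious, above-reserve bid, while $j$ can still outrank $i$ and win. I would let $i$ be a shill and compare two plays: the no-shill play, in which $i$ bids as value $0$ and $j$ reaches $\informationSet(h_0)$; and the deviation in which $i$ submits the value-$\type_i$ bid, so that---since the remaining real bidders best-respond to the observed actions---play coincides with the equilibrium path of $\type$: bidder $j$ reaches $\informationSet(h_{\type})$, wins because $\quantity_j(\type)=1$, and pays her bid $\transfer_j(\type)$.

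\textbf{The key step and main obstacle.} What remains---and is the hard part---is to show that $j$'s payment is strictly larger on the deviation path, i.e. that learning of $i$'s serious bid makes $j$ shade strictly less. Weak monotonicity of $j$'s bid in the perceived competition is routine (by the monotonicity of the auction equilibrium together with the standard first-price comparative static), giving that $j$'s bid at $\informationSet(h_{\type})$ weakly exceeds her bid at $\informationSet(h_0)$. The strictness is precisely where $\informationSet(h_{\type})\neq\informationSet(h_0)$ must be converted into a revenue gain: were $j$'s bid identical at the two information sets for every value of $j$, then $i$'s report could never change an outcome in which $j$ is the pivotal mover, contradicting that the menu rule is (without loss) informative---this is the content of the remark that some profile must have $\quantity_j(\type)=0$. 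Hence for some realization $j$ bids strictly more at $\informationSet(h_{\type})$; fixing the real values at that realization with $i$ shilling, the seller's revenue, which equals the winner's own bid, strictly increases---either $j$ wins on both paths and pays strictly more, or $j$ wins only after the deviation, replacing a runner-up's lower bid with $j$'s higher one. Either way the shill strictly profits, contradicting ex-post optimality. The delicate points to get right are pinning down, via the orderliness conditions and informativeness, a profile in which $i$'s report is pivotal for $j$'s winning bid, and ensuring the competitor whose payment changes is a \emph{real} bidder, so that the change registers in the shill's objective $\sum_{j\in\realBidders}\actionTransfer_j$.
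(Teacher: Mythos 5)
Your static direction is correct and is essentially the paper's own argument: under no information leakage the game coincides with its direct pay-as-bid mechanism, a shill's action can affect payments only through the allocation, and by orderliness changing the allocation means the shill herself wins, which yields zero revenue.

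The converse direction, however, has a genuine gap at exactly the step you flag as the crux. You obtain strictness from the claim that if $j$'s bid function were identical at $\informationSet(h_{\type})$ and $\informationSet(h_0)$ for every value of $j$, then $i$'s report ``could never change an outcome in which $j$ is the pivotal mover,'' contradicting informativeness of the menu rule. That implication is false. Informativeness (\cref{lem:aug_rev_prin}), and likewise the outcome-relevance clause $\quantity_j(\type)=1$ in the definition of a strictly more informative structure, can be satisfied entirely through bidder $i$'s \emph{own} action: when $i$ bids as $\type_i$ she may herself take the item (or change her own transfer), so outcomes differ across the two histories even if $j$'s bid is completely unresponsive to her signal. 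The experiment $\experiment_j$ is a primitive of the environment; nothing in its definition, nor in pruning/informativeness, forces $j$'s equilibrium bid to vary across the two information sets. So the contradiction you invoke never materializes, and the scenario in which $j$ bids identically at both signals---in which case your shill deviation produces no revenue gain through $j$---is not excluded by your argument.

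What actually rules that scenario out in the paper is not informativeness but the transfer characterization forced by strong shill-proofness together with optimality. By \cref{lem:ssp_epab} (resting on \cref{lem:ssp_pab} and \cref{lem:ex_int_funcs}), conditional on winning, $j$'s payment must equal the IC-pinned ex-interim formula: her type minus shading terms weighted by her ex-interim probabilities of winning with lower types. At $\informationSet(h_{\type})$ the minimal type with which $j$ believes she can win strictly exceeds $\optimalReserve$, since she must beat $(\type_i,i)\ordering(\optimalReserve,j)$; hence the shading terms vanish and her payment conditional on winning at that minimal type equals the type itself. At $\informationSet(h_0)$, full support gives her strictly positive probability of winning with strictly lower types (e.g., when all opponents fall below the reserve), so at least one shading term is strictly positive and her payment conditional on winning is strictly smaller. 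This forced wedge in payments, evaluated at a realization where $j$ still wins and $i$ is a shill, is the profitable deviation. Your proof needs this incentive-compatibility-based argument (or an equivalent one) in place of the informativeness appeal.
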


A core intuition underlying \cref{prop:fpa_ssp} is that the stronger a bidder perceives her competition, the less she shades her bid in a first-price auction.
So, if a shill bidder $i$ is able to distinguish that she has a value that could change the outcome of the auction to bidder $j$, then there must be some value $\type_j$ under bidder $j$ perceives the auction as more competitive; this is occurs if and only if there is outcome-relevant information leaked from bidder $i$ to $j$.
This will increase revenue if $j$ wins and so the first-price auction is not strongly shill-proof.

As noted in \cref{sec:model}, implementing no information leakage (or, equivalently, simultaneity) between bidders is often an unrealistic assumption.
Even with some technology that might deter information leakage in the actual mechanism, information leakage through side channels is inevitable in many settings.
So, we take \cref{prop:fpa_ssp} as evidence therefore that the first-price auction is, in many settings, potentially subject to manipulation by shill bidders.

\subsection{Dutch Auctions}\label{ss:dutch}
Now that we have shown that the extensive form of an auction affects whether or not it is strongly shill-proof, we turn to studying which extensive forms are always strongly shill-proof.
We show that Dutch auctions are strongly shill-proof, no matter the information structure, and show that that there exists a natural information structure---the public setting---such that Dutch auctions are uniquely strongly shill-proof.

To begin, we formally define the public information structure:
\begin{definition}\label{def:public}
	An auction equilibrium $\p{\game,\strategy}$ is \textbf{public} if the information set at any history is all previous actions taken.
	Formally, for any history $\history$, $\informationSet(\history) = \cbr{h}$.
\end{definition}

Public auctions are common in practice: from open air fish markets, to auctions on eBay, participants often can see every action other bidders take before choosing what to do.\footnote{Non-examples of public auctions include the FCC spectrum auctions, where bidders typically only learn information on other bidders' actions in rounds (see \citet{milgromSegal17} for more information).}
A second interpretation of an auction being public is that all information is totally leakable: If an auction is strongly shill-proof and public, then shill bidders can credibly signal to all other bidders any actions they have taken in the auction and it will not affect the outcome.
In general, we believe that it is natural to assume that shill bidders actions are visible in settings where considering shill bidding is important.
Otherwise, shill bidding can only manipulate the outcome ex-post, which differs from the normal interpretation of shill bidding and its applications.

Next, we define the Dutch auction with reserve price $\reserve$.
The Dutch auction is defined as the auction which begins by offering each bidder $i$ the item at $b^1_i(\type^{\numPossVals})$, and then if no bidder chooses to buy the item at that price, the item is offered for $b^1_i(\type^{\numPossVals-1})$ and so on until either a bidder has chosen to buy the item or the price to be offered drops below $b^1_i(\reserve)$.
Note that the optimal Dutch auction is the Dutch auction with reserve price $\optimalReserve$.
We consider only orderly auctions and therefore, at each price level, bidders are offered the opportunity to buy the item in priority order.
Formally:
\begin{definition}\label{def:dutch}
	The \textbf{Dutch auction with reserve price $\reserve$} is defined by the allocation rule $\quantity^{\reserveIndex}$, first-price transfer rule $\transfer^1 = \quantity^{\reserveIndex} \cdot b^1$, initial player $\p{\cdot,\player_0} = \max_{\ordering} \cbr{\p{0,i}}$, and menu
	\begin{align*}
		\menuRule^D_{\reserveIndex}(\possibleValues,\player)& = \cbr{\p{\choice_{\mathsf{L}},\nextPlayer_{\mathsf{L}}},\p{\choice_{\mathsf{H}},\nextPlayer_{\mathsf{H}}}}, \\
		\text{where } &\choice_{\mathsf{H}} = \cbr{\highValue_{\player}}, \choice_{\mathsf{L}} = \possibleValues_{\player}\backslash\cbr{\highValue_{\player}}, \nextPlayer_{\mathsf{H}}=\emptyset, \text{ and } \\
		&		\nextPlayer_{\mathsf{L}} = \begin{cases}
			\p{\cdot,\tilde{\player}} = \max_{\ordering} \cbr{\p{\highValue_i,i} : i \ne \player} & \exists i \ne \player \st \card{\possibleValues_i} > 1 \text{ and } \highValue_{i} \ge \reserve \\
			\emptyset & \ow
		\end{cases}.
	\end{align*}
\end{definition}

\begin{theorem} \label{thm:dutch_ssp}
	A Dutch auction with any reserve price is \strong~shill-proof.
	Furthermore, if a public, orderly and optimal auction is strongly shill-proof, then it is the Dutch auction with reserve price $\optimalReserve$.
\end{theorem}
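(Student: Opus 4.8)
The plan is to prove the two claims separately, beginning with the (easier) positive direction.

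\textbf{Dutch auctions are \strongly~shill-proof.} In the Dutch auction of \cref{def:dutch}, the only ``substantive'' action available to a queried bidder $\player$ is to select $\choice_{\mathsf{H}} = \{\highValue_\player\}$, which sets $\nextPlayer_{\mathsf{H}} = \emptyset$ and thus awards the item to $\player$ and terminates the game. I would argue that, for a shill bidder $i$, this action is ex-post weakly dominated by always declining (selecting the low piece $\choice_{\mathsf{L}}$ at every query, exactly as a value-$0$ real bidder would). Indeed, if $i$ ever accepts, then $i$ wins the item, so $\actionQuantity_j = 0$ and $\transfer_j = 0$ for every real bidder $j$, and the shill's payoff $\sum_{j \in \realBidders}\transfer_j = 0$. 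If instead $i$ never accepts, the continuation awards the item to a real bidder (or to no one) at a nonnegative transfer, so $\sum_{j \in \realBidders}\transfer_j \ge 0$. Hence declining is an ex-post best response for every $i \in \shillBidders$ and every $\type_{-\shillBidders}$, which is precisely \cref{def:ssp}; and since declining coincides with the value-$0$ real-bidder strategy, the induced equilibrium is \weakly~shill-proof as well. This holds for any reserve and any information structure.

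\textbf{Uniqueness in the public, orderly, optimal case.} Fix such an auction $\optDirectAuction$ and suppose it is \strongly~shill-proof. First, \cref{lem:ssp_pab} forces it to be pay-as-bid; combined with optimality and orderliness (\cref{def:orderly}), the allocation must be the orderly optimal rule $\quantity^* = \quantity^{\reserveIndex^*}$ (so the reserve is $\optimalReserve$), and among incentive-compatible pay-as-bid transfers the revenue-maximizing choice is pinned to $\transfer^* = b^1 \cdot \quantity^*$ by \cref{lem:ex_int_funcs}. It remains to show the menu rule must equal $\menuRule^D_{\reserveIndex^*}$. The heart of the argument is the following structural claim: at every reachable history, a queried bidder cannot indicate an outcome-relevant value above the reserve $\optimalReserve$ without the game terminating with that bidder winning. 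I would prove the contrapositive by exhibiting a profitable shill deviation. Suppose there is a reachable history $\history$ with active player $i$ and an informative piece $\choice_\ell \ni \type^k$ with $\type^k > \optimalReserve$ and $\nextPlayer_\ell \ne \emptyset$. Because the auction is public (\cref{def:public}), every later-moving bidder observes whether $i$ selected $\choice_\ell$. Comparing the continuation in which $i$ reports $0$ with the one in which $i$ (a shill) selects $\choice_\ell$: by the \textbf{monotone} condition raising $i$'s reported value weakly raises every $\transfer_j$, so no real bidder's payment falls; by informativeness of the menu together with monotonicity there is a profile $\type_{-i}$ for which some real bidder $j$ moving in the continuation still wins---orderliness lets me choose $(\type_j,j) \ordering (\type^k, i)$ so that $i$ does not win---yet pays strictly more than at the $i$-reports-$0$ history, since the publicly revealed competition reduces $j$'s equilibrium bid shading. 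For this profile, if $i \in \shillBidders$ the deviation strictly increases $\sum_{j \in \realBidders}\transfer_j$, contradicting \cref{def:ssp}.

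Given the claim, I would finish by showing it forces the menu to be $\menuRule^D_{\reserveIndex^*}$. Since no queried bidder can ever reveal an above-reserve value without immediately winning and ending the game, the only information the auction can accumulate along any nonterminating path is a collection of \emph{upper} bounds on bidders' values; equivalently, the sets $\possibleValues_i$ can only shrink from the top. Hence every nonterminal menu must split the active player's set into the top remaining value $\choice_{\mathsf{H}} = \{\highValue_\player\}$ (which, by the claim and orderliness, ends the game with $\player$ winning at price $b^1_\player(\highValue_\player)$) and the remainder $\choice_{\mathsf{L}}$; orderliness forces the next queried player to be the highest-priority remaining bidder, and optimality forces the descent to stop exactly at $\optimalReserve$. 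These are precisely the transitions of $\menuRule^D_{\reserveIndex^*}$, with starting player $\max_{\ordering}\{(0,i)\}$ as in \cref{def:dutch}, so the auction is the Dutch auction with reserve $\optimalReserve$.

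The main obstacle is the structural claim in the third paragraph: making precise, via the \textbf{monotone} condition, informativeness, and orderly tie-breaking, that publicly revealing an above-reserve value strictly lowers a subsequent winner's shading (strictly raises her transfer) while keeping the winner real and the shill a loser. Establishing this \emph{strict} revenue increase---rather than a merely weak one---for a carefully chosen type profile is where the real work lies; the remaining steps are bookkeeping about the menu structure and the optimal reserve.
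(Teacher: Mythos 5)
Your first part (Dutch auctions are \strongly~shill-proof) is correct and matches the paper's argument: a shill bid terminates the game with the shill winning, so by the winner-paying property revenue is $0$, which is weakly dominated by declining. Two secondary issues in the uniqueness part before the main one: (a) your structural claim as literally stated---no informative piece $\choice_\ell$ \emph{containing} some $\type^k > \optimalReserve$ may have $\nextPlayer_\ell \ne \emptyset$---is violated by the Dutch auction itself, whose low piece $\choice_{\mathsf{L}}$ contains above-reserve values and does not terminate; you need to restate it in terms of the piece's minimum $\lowchoice_\ell$ (choosing the piece \emph{signals} a value above the reserve). (b) The strict (not weak) revenue increase that your contrapositive needs cannot come from the monotonicity condition, which only gives weak inequalities; the paper gets strictness from the explicit ex-interim transfer formula (\cref{lem:ex_int_funcs}) combined with the extended pay-as-bid property (\cref{lem:ssp_epab}). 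You flag this yourself, and it is fillable along the paper's lines.

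The fatal gap is different: even granting your structural claim in its corrected form, it only constrains menu pieces whose minimum is at least $\optimalReserve$, i.e., deviations that effectively raise the ex-interim reserve (these are the paper's Cases 1, 2a, and 2b). It says nothing about informative pieces $\choice$ with $\lowchoice \in (0,\optimalReserve)$: a menu in which a bidder can reveal ``my value is positive but below the reserve'' without the game ending is fully consistent with your claim, yet is not a Dutch menu. Consequently your final bookkeeping step---``the sets $\possibleValues_i$ can only shrink from the top, hence the menu is $\menuRule^D_{\reserveIndex^*}$''---is a non sequitur: below-reserve distinctions let the sets shrink from the bottom as well. Ruling these out is exactly the paper's Case 2c, and it requires an argument of a different kind, because shilling into such a piece does \emph{not} move the effective reserve; instead it increases the perceived competition faced by remaining bidders, which in a pay-as-bid auction strictly reduces bid shading by the eventual (real) winner. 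Making that strict is the delicate part of the paper's proof: one compares the ex-interim transfer of the winner's final partition under the shill strategy versus truthful $0$-reporting, and shows via the ratios of win probabilities (the paper's $\zeta_{i,m} \ge 1$ computation, with at least one strict inequality) that the winner's payment strictly increases. Without this case your argument leaves a whole family of non-Dutch, below-reserve-screening auctions unexcluded, so the uniqueness claim is unproven.
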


\begin{proof}[Proof Sketch.]
The Dutch auction is \strongly~shill-proof because any shill bid immediately ends the auction---and in that case there would be no transfers from other bidders.
To gain an intuition for why uniqueness holds, note that the key defining property of the Dutch auction is that any bid immediately ends the auction.
Then, we prove the result in four cases.
First, we consider an auction where the player rotation differs from a Dutch auction; we then show that if a shill bidder indicates she has the highest possible remaining value, this has the effect of ex-interim increasing the effective reserve price.
Since the auction is public and only the highest value is allocated, when the types are realized so that some bidder with a higher priority order has the same value (and non-shill bidders have a low value), raising the reserve must increase revenue from the winning bidder.\footnote{If the auction were not public, then real bidders' beliefs might not change ex-interim and so their transfers might not change either. For example, in a sealed first-price auction, a shill bidder's actions have no effect on the transfers of other bidders.}
The next case considers what happens when bidders are queried in priority order, and the minimum of a partition is greater than the reserve price, but less than the highest value of other bidders.
This raises the ex-interim reserve price and so by the same argument as above, shill bidding is profitable.
In the third case, we consider what happens when the minimum in the partition is equal to the highest value of other bidders.
By construction, this can only occur when a bidder is the lowest priority and so, applying our orderliness assumption, this raises the ex-interim reserve to be the highest type still possible in equilibrium for other bidders without immediately allocating the good to the shill bidder.
We can then apply the same argument as the first two cases to conclude that shill bidding would be profitable in this case.
In the final case, we consider an auction where the minimum in the partition is below the reserve price.
In this case, the ex-interim reserve has not changed, so the logic from the previous cases does not apply.
Instead, shill bidding in this case makes it appear as if there is more competition for the item, and we prove that this causes ``less bid shading,'' i.e., higher final transfers from the winner.
\end{proof}

Observe that although the standard intuition about shill bidding is that its' purpose is to influence other bidders' perception about the common value for an item, \cref{thm:dutch_ssp} shows that just being able to influence other bidders perception about the probability they win the item without directly winning the item \textit{in a single case} is enough to limit the number of possible auction formats to just Dutch auctions.
Moreover, under public information, such a case always exists for non-Dutch auctions.
The public information structure is the most informative, while a static information structure is the least informative.
We show in the Supplemental Appendix that as the information structure becomes less informative, the number of strongly shill-proof auctions weakly increases (\cref{prop:info_shill_order}).

\section{Weakly Shill-Proof Auctions}\label{sec:wsp}
In \cref{lem:ssp_pab}, we demonstrate that strong shill-proofness uniquely pins down the transfer rule; and in \cref{thm:dutch_ssp}, we demonstrate that once we consider public auctions, we uniquely pin down the extensive form as well.
In this section, we turn to studying our more permissive definition, weak shill-proofness, and first show the limits of possible extensive forms that are weakly shill-proof and then exhibit extensive-form auctions that are weakly shill-proof and strategy-proof, including a new auction format that can be arbitrarily faster than the English auction. 

\subsection{Trilemma}\label{ss:single_action}
In this subsection, we prove that it is impossible to find an auction that is weakly shill-proof, strategically simple for even one of the real bidders, and \textit{fast} in the sense that each bidder takes exactly one action.
To make this trilemma as general as possible, we do not even require the auction to be optimal or efficient.
Instead, we are only using condition (iii) of our orderliness definition (\cref{def:orderly}), which requires that a bidder who does not have the highest type must not win the item.
Note that such a condition is without loss for any symmetric and deterministic mechanism.

We begin by defining a \textbf{single-action auction}. 
An auction is considered single-action when each bidder takes precisely one action in the auction (under all possible histories). 
More formally, for any $\history_{\numBidders} \in Z$, let $\history_{\emptyset} \prec \history_1 \prec \ldots \history_{N-1} \prec \history_{\numBidders}$ be the sequence of preceding histories. 
Then, an auction is \textbf{single-action} if for all $i \in \potentialBidders$, there exists a unique $n \le \numBidders$ such that $i = \playerFunction(\history_{n})$.
Without loss, we label the bidders $1,\ldots,\numBidders$, in the order that they move and label the action taken by bidder $i$ as $\action_i$.\footnote{The bidder ordering and therefore the labeling can be endogenous to actions taken.}

For expositional purposes, instead of tracking the information set $\informationSet_i$ of a bidder $i \in \realBidders$, we assume that the information $i$ has when taking an action is a signal $\signal_i \in \signalSet_i$.
This signal is generated via a deterministic function $\experiment_i: \p{\bigtimes_{j < i} \actionSet_j} \to \signalSet_i$ called an \textbf{experiment}.\footnote{Abusing notation, we also sometimes take $\experiment_i: \typeSpace^{i-1} \to \signalSet_i$, i.e., the experiment maps values to signals instead of actions.}
For notational convenience, we assume that $\experiment_i$ is surjective for all $i$.
We can think of the experiment as a garbling of the previous bidders' actions---the experiment can pool together multiple actions from previous bidders to a single signal and so a signal is not always perfectly informative of previous actions.
We can recover the public setting with a fully informative experiment, i.e., $\publicExperiment$, the identity mapping.
We can capture classical static game settings via an uninformative experiment that always returns the same output, $\privateExperiment$.
We use $\experiment^{-1}_i(\signal_i)$ to denote the set of $\type_{-i}$ that are possible from the perspective of bidder $i$ given its signal. 

A revelation principle holds in this setting: for any single-action auction, we can define the direct allocation and transfer rules as $\quantity(\type)$ and $\transfer(\type)$, respectively, with the appropriate incentive compatibility and individual rationality constraints for real bidders (\cref{lem:one_shot_rev_principle} in the Appendix), and appropriate IC constraints for \weak~and \strong~shill-proofness (\cref{lem:wsp_val_equiv_general,lem:ssp_val_equiv}, respectively, in the Appendix).

To finish defining all the terms necessary for our main result of this section, we formally define ex-post incentive compatibility and then give a weaker notion of \strategyproofness~in the single-action auction setting: \strategyproofness~for at least a single bidder.
\begin{definition} \label{def:strategy_proof}
	An auction $\directAuction$ is \textbf{\strategyproof} if it is an ex-post strategy for real bidders to report their values truthfully: for all real bidders $i \in \realBidders$, $\type$, and $\type_i'$,
	\[ \quantity(\type) \cdot \type_i - \transfer(\type) \ge \quantity(\type_i',\type_{-i}) \cdot \type_i - \transfer(\type_i',\type_{-i}). \]
\end{definition}
\begin{definition} \label{def:mild}
	A single-action auction $\directAuction$ is \textbf{mildly \strategyproof} if there exists a real bidder $i < \numBidders$ such that truthfulness is an ex-post strategy for $i$ conditional on the realization of her signal: there exists bidder $i < \numBidders$, such that for all $\type_i$, $\type_i'$, $\signal_i$, and $\type_{-i},\type_{-i}' \in \experiment_i^{-1}(\signal_i)$: $ \quantity_i(\type) \cdot \type_i - \transfer_i(\type) \ge \quantity_i\p{\type'} \cdot \type_i - \transfer_i\p{\type'}$.\footnote{We exclude the last bidder who takes an action from our definition because a take-it-or-leave-it offer to that bidder can be optimal and \strategyproof. \cref{thm:wsp_trilemma} would still hold if we instead defined mild \strategyproofness~to mean \strategyproof~for at least two bidders.}
\end{definition}

\begin{theorem}	\label{thm:wsp_trilemma}
	There exists no orderly auction that is single-action, mildly \strategyproof,~and \weakly~shill-proof. 
\end{theorem}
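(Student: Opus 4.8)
The plan is to argue by contradiction: suppose some orderly, single-action auction $\directAuction$ is both mildly \strategyproof\ and \weakly\ shill-proof, and let $i < \numBidders$ be the bidder for whom truthfulness is ex-post optimal (\cref{def:mild}). I would work throughout with the direct mechanism $(\quantity,\transfer)$ furnished by the single-action revelation principle (\cref{lem:one_shot_rev_principle}), together with the incentive constraints that \cref{lem:wsp_val_equiv_general} attaches to weak shill-proofness. Two structural features of single-action auctions do the heavy lifting, and I would isolate them first: (a) no play terminates early—each of the $\numBidders$ bidders takes exactly one action on every path; and (b) the menu rule is informative, so every such action is outcome-relevant. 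Because $i < \numBidders$, at least one bidder moves strictly after $i$, and I will use a value-$0$ bidder in that later position as the shill.

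The heart of the argument is a single conflict about the winner's transfer. Mild \strategyproofness\ is exactly ex-post incentive compatibility for $i$, so for each fixed $\type_{-i}$ the allocation $\quantity_i(\cdot,\type_{-i})$ is monotone and the winning payment is the associated critical value; by orderliness (\cref{def:orderly}) this critical value is governed by the highest competing report, so raising a competitor's report can only (weakly) raise what $i$ must pay. Next I would exhibit a profile in which $i$ clinches the item—her value is maximal and she has priority over the designated later bidder $\ell$, with all other bidders at value $0$—so that $i$ wins for every action $\ell$ can take, while $\ell$, a value-$0$ bidder, still moves. Since losers pay nothing (ex-post IR), the only outcome $\ell$'s move can change is the winner's transfer $\transfer_i$, so informativeness (b) forces $\transfer_i$ to take at least two distinct values as $\ell$ varies its report. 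But weak shill-proofness (\cref{def:wsp,lem:wsp_val_equiv_general}), applied to the realization in which $\ell$ is the shill and is the \emph{last} informative mover (so that its report-$0$ best response is genuinely ex-post), says that $\ell$ reporting its true value $0$ must maximize the real bidders' total payment, i.e.\ must maximize $\transfer_i$. Thus the minimal-competition report $0$ simultaneously minimizes $\transfer_i$ (critical-value monotonicity) and maximizes it (shill-proofness), which is incompatible with $\transfer_i$ varying. This is precisely the tension a first-price rule escapes by charging the winner her own bid—no competitor can move it—but that any \strategyproof\ winner payment reintroduces.

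The main obstacle is reconciling this clean picture with the discreteness of $\typeSpace$ and with tie-breaking. Because types are discrete, ex-post incentive compatibility pins the winning payment only to the interval between the highest losing type and the lowest winning type, so ``minimal competition minimizes the payment'' is not literally an equality; an adversarial transfer rule could place the baseline payment high within its interval and each higher-competition payment low within its own. I expect to close this by pinning down the baseline: when $i$ wins against all-zero competition already at value $0$ (top priority), individual rationality forces her baseline payment to be exactly $0$, and weak shill-proofness then drives every higher-competition payment to $0$ as well, contradicting informativeness for all $\numPossVals \ge 2$. The delicate case is when the mild-\strategyproofness\ bidder does not have top priority—there the baseline is only bounded by the second-lowest type $\type^2$, and one must either drive the later shill's reports far enough above it that the ex-post-IC lower bound $\type^{k-1}$ strictly exceeds the baseline, or chain the weak-shill-proofness inequalities across several competitor levels and combine them with the critical-value ordering to force a strict clash. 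A secondary point is the passage from the expected notion of weak shill-proofness to the per-branch transfer inequality, which I would justify by placing the shill as the final informative mover, so that reporting $0$ is required to be a sequentially rational (hence ex-post) best response.
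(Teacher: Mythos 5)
Your core tension is real, and several ingredients match what the paper actually does: working in the direct mechanism of \cref{lem:one_shot_rev_principle}, invoking \cref{lem:wsp_val_equiv_general}, and placing the shill last so that \weak~shill-proofness becomes a per-branch inequality. But the proposal has a genuine gap, and it sits exactly where you flagged the ``delicate'' part. First, your baseline-pinning patch fails: at the profile where bidder $i$ has value $0$ and everyone else reports $0$, a shill report $v>0$ does not raise the competition $i$ faces while leaving her the winner---by orderliness it changes the \emph{allocation} (the shill now wins). Outcomes across the shill's reports then differ through the allocation, so informativeness is satisfied with no variation in $\transfer_i$, and \weak~shill-proofness imposes nothing beyond $\transfer_i(0)\ge 0$ (revenue from real bidders is zero under any nonzero shill report). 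No contradiction arises there. Second, and more fundamentally, informativeness cannot serve as the contradiction target in the residual cases. At the true clinching profile ($\type_i=\type^{\numPossVals}$, priority over $\ell$, others at $0$), \weak~shill-proofness gives $\transfer_i(v)\le\transfer_i(0)$, and IR combined with the invariance of the winning payment to $i$'s own report (your own critical-value observation, the paper's \cref{lem:msp_form}) forces $\transfer_i(0)\le\type^2$, indeed $\transfer_i(0)\le 0$ when $i$ has top priority. So to contradict informativeness you must rule out $\transfer_i(v)<\transfer_i(0)$, i.e., rule out payments being pushed down to zero or below---precisely where the paper's standing monotonicity assumption has no force (it binds only when the transfer is strictly positive) and where, as you concede, discrete types give you no critical-value monotonicity. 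Closing this requires a step you never supply: if the winner's payment against a high later report is low or negative, then a \emph{real} low type of $i$ profitably over-reports $\type^{\numPossVals}$, violating her incentive compatibility.

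That missing step is the engine of the paper's proof, which is organized differently: \cref{lem:wsp_form} (\weak~shill-proofness $\Rightarrow$ the winner's transfer is invariant to later bidders' reports) and \cref{lem:msp_form} (mild ex-post incentive compatibility $\Rightarrow$ invariance to $i$'s own report) make the winning payment a function of $i$'s signal alone, and the contradiction is then with bidder $i$'s \emph{interim} incentive constraint: reporting $\type^{\numPossVals}$ wins strictly more often at the same payment, and the events gained carry strictly positive surplus, so truthful reporting cannot be an equilibrium. Your second sketched option for the non-top-priority case---pitting the ex-post-IC lower bound $\transfer_i(v)\ge\max\cbr{\type^k:\type^k<v}$ against the shill-proofness/IR upper bound $\transfer_i(v)\le\type^2$---can in fact be made to work when $\numPossVals\ge 4$ and is an attractive shortcut, but it is left as an undeveloped alternative, and the small-$\numPossVals$ and nonpositive-payment configurations still force you back to the over-reporting argument you do not have. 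As written, the proposal is not a proof.
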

\begin{proof}[Proof Sketch]
	Consider any real bidder $i < \numBidders$.
	By \weak~shill-proofness, the transfer from bidder $i$, conditional on winning (or losing) the auction, is invariant to the values of bidders who take actions after her (\cref{lem:wsp_form} in the Appendix). 
	If this were not the case, then if every bidder $j>i$ is a shill bidder, the shill bidders would report the values that would maximize the transfer from the winning bidder. 
	By mild \strategyproofness, the transfer from bidder $i$, conditional on winning the auction, is invariant to her value (\cref{lem:msp_form} in the Appendix).
	This is because if there were multiple winning reports with different transfer amounts, only the smallest transfer amount would make truthful reporting of the value an ex-post strategy. 
	So, in every single-action, optimal auction, the transfer from the winning bidder $i$, can depend only on the values reported by bidders before $i$.
	But, this means that if a bidder has positive utility for winning the item (as would be the case if $\type_i > \type_j$ for all $j < i$), then she should report $\type^{\numPossVals}$ to maximize the probability of winning (without changing the transfer paid upon winning).
	Thus, the auction must treat bidder $i$ as if she reported $\type^{\numPossVals}$, which violates the allocation rule of an optimal auction.
\end{proof}

Observe that while there are no auctions at the intersection of all three conditions in \cref{thm:wsp_trilemma}, we can find multiple auctions at the intersection of any two of those conditions.
For example, assuming an IPV and regular (\cref{def:regular}) environment, the second-price auction is single-action and mildly \strategyproof; the first-price auction is single-action and \weakly~shill-proof; and the English auction is mildly \strategyproof~and \weakly~shill-proof.
None of the examples just described are unique; there exist other single-action and mildly \strategyproof~auctions, single-action and weakly shill-proof auctions, and mildly \strategyproof~and \weakly~shill-proof auctions.

\subsection{Other Weakly Shill-Proof Auctions}
Now that we have shown an impossibility result regarding weakly shill-proof auctions, in this subsection, we will provide some characterization results for weakly shill-proof auctions.
Recall from the introduction that even in an IPV environment, an English auction need not be weakly shill-proof when the type distribution is irregular.
So, for tractability, for the rest of this subsection, we suppose that bidders have IPV types and regular type distributions.
We use the definition of regularity for discrete types found in \cite{elkind07}:
\begin{definition}\label{def:regular}
	A distribution $F$ is \textbf{regular} if for all $k$, the virtual value
	\iftoggle{compressLines}{$\varphi^k = \type^k - (\type^{k+1} - \type^k) \frac{1 - \typeDist(\type^k)}{\typepmf(\type^{k})}$}{\[ \varphi^k = \type^k - (\type^{k+1} - \type^k) \frac{1 - \typeDist(\type^k)}{\typepmf(\type^{k})} \]}
	is non-decreasing.
\end{definition}

\subsubsection{\Weakly~Shill-Proof and Efficient Auctions} \label{ss:wsp_dutch_auction}
To further analyze when concerns about shill-bidding lead to Dutch auctions, we now examine the case of efficient auctions.
Shill bidding is important to consider in the efficient auction context because the designer may be interested in allocating goods efficiently even while sellers are trying maximize revenue.
For example, in two-sided marketplaces like eBay, the auctioneer/market designer places positive welfare weight on both buyers and sellers, whereas sellers typically do not---and sellers may certainly try to shill bid in these settings, as discussed in the Introduction.
Our next result shows that in order for an auction to be robustly \textit{\weakly} shill-proof and efficient, part of its game tree must be a Dutch auction.
If an auction is not \textit{robustly} weakly shill-proof, we mean that we can find a value distribution such that the auction is not \weakly~shill-proof.
More formally, if the auction $\directAuction$ is parameterized by the optimal reserve $\optimalReserve$, the number of atoms below the reserve $\underline{\numPossVals}$ and the number of atoms weakly above the reserve $\overline{\numPossVals}$,\footnote{Here, $\underline{\numPossVals}+\overline{\numPossVals} = \numPossVals$.} then the following result holds.
\begin{definition}\label{def:semi_dutch}
	An efficient auction $\effDirectAuction$ is a \textbf{semi-Dutch auction with cutoff $\optimalReserve$} if 
	for any $\val$ such that $\max_i\cbr{\val_i} < \optimalReserve$:
	\begin{enumerate}[(i)]
		\item $\widecheck{\possibleValues} = \cbr{w : w < \optimalReserve}^{\numBidders}$ is reached; and \label{cond:semi_dutch_cutoff}
		\item $\menuRule(\possibleValues,\player) = \menuRule^D(\possibleValues,\player)$ for any player $\player$ and possible values $\possibleValues \subseteq \widecheck{\possibleValues}$ where $\menuRule^D$ is the menu rule for the Dutch auction with reserve price $0$ from \cref{def:dutch}. \label{cond:semi_dutch_menu}
	\end{enumerate}
\end{definition}

\begin{proposition} \label{prop:dutch_robust_wsp}
	For every public and efficient auction that is not a semi-Dutch auction with cutoff $\optimalReserve$, there exists a regular value distribution with optimal reserve $\optimalReserve$ under which the auction is not \weakly~shill-proof.
\end{proposition}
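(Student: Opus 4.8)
The plan is to argue directly: given a public, efficient auction that fails \cref{def:semi_dutch}, I build a regular distribution with monopoly reserve exactly $\optimalReserve$ under which some shill has a strictly profitable deviation, so that no no-shill equilibrium survives. The key localization is that any violation of semi-Dutchness lives entirely on the sub-tree where every surviving type lies below $\optimalReserve$, i.e.\ inside the box $\widecheck{\possibleValues}$. On that sub-tree efficiency forces the item to the highest bidder, while the definition of the monopoly reserve makes every relevant virtual value $\varphi^k$ strictly negative (each surviving $\type^k<\optimalReserve$). A negative virtual value says exactly that the seller extracts less than the revenue-maximizing amount from an inframarginal winner, so any action that---through publicness---raises the eventual winner's equilibrium payment without the shill itself winning is strictly profitable. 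The proof therefore reduces to (a) exhibiting one reachable below-reserve history at which a shill can raise the winner's payment without winning, and (b) choosing the distribution so that this local gain dominates in expectation.

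For (a) I split on which clause of \cref{def:semi_dutch} fails. If clause~(ii) fails, there is a reachable history with surviving set $\possibleValues\subseteq\widecheck{\possibleValues}$ and mover $\player$ at which $\menuRule(\possibleValues,\player)\ne\menuRule^D(\possibleValues,\player)$, the reserve-$0$ Dutch menu. Inside this sub-tree the effective reserve is $0$ and only the top surviving type is allocated, so---exactly as in the four-case dissection behind \cref{thm:dutch_ssp}---any departure from $\menuRule^D$ (a non-Dutch rotation, a low cell whose minimum exceeds $\lowValue_\player$, or a high cell that fails to end the game) hands a shill occupying $\player$ an action that signals stronger competition or a higher effective reserve without winning. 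By publicness every rival observes this action, so on a positive-probability set of rival types the eventual winner shades less and pays strictly more, while on the complement the shill does not win; since every surviving type lies below $\optimalReserve$ and hence has $\varphi^k<0$, the higher payment is a strict revenue gain. If instead clause~(i) fails, some all-below-reserve profile never reaches $\widecheck{\possibleValues}$, i.e.\ the auction terminates or branches while some bidder's type is still possibly at or above $\optimalReserve$; efficiency and publicness then let a shill furnish precisely the information driving that premature move, again raising the winner's payment without winning. Either way I obtain a below-reserve history $h$, a shill mover, and a deviation that is revenue-nonnegative on every $h$-consistent profile and strictly positive on a conditionally positive-probability set of rival types.

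For (b) I condition on reaching $h$, which reduces the comparison to the conditional law of rival types given $\possibleValues$; as $h$ involves only sub-reserve types, this law is governed solely by the sub-reserve atoms of the distribution. I then select a regular $F$ with monopoly price $\optimalReserve$ and the prescribed atom counts $\underline{\numPossVals},\overline{\numPossVals}$ that loads its sub-reserve mass overwhelmingly onto the favorable rival profiles, so the strictly positive pointwise gains dominate the nonpositive contributions and the shill's expected continuation revenue at $h$ exceeds its no-shill value. This step is the main obstacle: loading sub-reserve mass, preserving regularity, and pinning the reserve at $\optimalReserve$ are coupled, because each sub-reserve mass $\typepmf(\type^k)$ enters its own virtual value directly and every higher virtual value through the cumulative $\typeDist$, and pushing too much mass onto a single sub-reserve atom can drive its virtual value nonnegative and thereby move the monopoly price below $\optimalReserve$. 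I expect to resolve this using the slack in the sub-reserve region: fix the at-or-above-reserve atoms at any regular configuration with monopoly price $\optimalReserve$, and then shape the sub-reserve masses---subject only to the open inequalities $\varphi^k<0$ for $k\le\underline{\numPossVals}$---by a perturbation/limiting argument that concentrates weight on the favorable rival types while keeping every sub-reserve $\varphi^k$ strictly negative and the sequence $(\varphi^k)$ nondecreasing. Checking that this limit preserves regularity and leaves the reserve at $\optimalReserve$ is the one genuinely technical computation; granting it, the conditional gain at $h$ is strictly positive, so under $F$ the auction admits no no-shill equilibrium and is not \weakly~shill-proof.
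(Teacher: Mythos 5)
Your high-level plan is the same as the paper's (localize the failure of \cref{def:semi_dutch} to below-reserve histories, then engineer a regular distribution with reserve $\optimalReserve$ that makes the shill's deviation profitable), but step (a) contains a false claim that is not a cosmetic slip---it is precisely the difficulty the paper's construction exists to overcome. You assert the deviation is ``revenue-nonnegative on every $h$-consistent profile and strictly positive on a conditionally positive-probability set.'' That cannot be right: in an efficient auction the item always sells, and when the shill's elevated report turns out to be highest under $\ordering$, the shill wins and revenue is zero by \cref{lem:winner_paying}---a strict pointwise loss whenever a real bidder would otherwise have paid a positive amount. If your pointwise claim were true, \emph{any} full-support regular distribution with reserve $\optimalReserve$ would witness the failure of weak shill-proofness and no distributional construction would be needed at all; this is contradicted by \cref{ex:robust_wsp}, where a non-semi-Dutch auction \emph{is} \weakly~shill-proof under one regular distribution and not under another. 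The deviation's profitability is inherently an expected trade-off (gain from reduced bid shading when a real bidder still wins, versus total loss when the shill wins), which is why the conclusion holds only for a carefully chosen subclass of distributions. Your ex-post reasoning imported from the four cases of \cref{thm:dutch_ssp} does not transfer, because that theorem concerns \emph{strong} shill-proofness, where the shill conditions on the realized valuations.

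Relatedly, your appeal to negative virtual values below $\optimalReserve$ is insufficient for the expected comparison. At an interior history the shill faces an upper-truncated conditional distribution (rival types restricted to $\possibleValues_i$ with maximum $\highValue_i$), so the relevant one-step comparison involves the truncated ratio $\bigl(\typeDist(\highValue_i)-\typeDist(\type^k)\bigr)/\typepmf(\type^k)$ rather than $\bigl(1-\typeDist(\type^k)\bigr)/\typepmf(\type^k)$; a type can have negative virtual value in the sense of \cref{def:regular} while the truncated analogue is positive. The paper's fix is the sparsity condition (\cref{def:sparse}), $\type^k - (\type^{k+1}-\type^k)\typepmf(\type^{k+1})/\typepmf(\type^k) < 0$, which is exactly the worst case over all upper truncations; \cref{lem:shill_regular_strategy} then carries out the expected-revenue computation, and \cref{lem:dutch_sparse_wsp} converts it into the statement that sparse distributions force semi-Dutchness. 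Your step (b) gestures at the right repair (``load sub-reserve mass while keeping every $\varphi^k<0$, regularity, and the reserve at $\optimalReserve$''), but you explicitly leave that coupled construction as an unproven perturbation/limit, and it is the crux: the paper resolves it with an explicit family---evenly spaced sub-reserve atoms with masses $e^{-\lambda(k-2)\evenSpacing}-e^{-\lambda(k-1)\evenSpacing}$---and verifies regularity, sparsity, and the location of the reserve simultaneously before filling in the above-reserve atoms. Without the corrected expectation argument and an explicit (or at least fully verified) distribution, the proposal does not close.
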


The key step in the proof of \cref{prop:dutch_robust_wsp} resembles the proof of \cref{thm:dutch_ssp}---in any non-Dutch auction, shill bidders can ex-interim ``raise the reserve price'' by changing their actions.
However, given that we are interested in \weak~shill-proofness instead of \strong~shill-proofness, we have to examine shill bidders' incentives when we take expectations over real bidders' values instead of conditioning directly on their values.
Regularity implies that above~$\optimalReserve$, shill bidders do not have an incentive to shill bid in auction formats such as the English auction (see \cref{ss:shill}).
However, below $\optimalReserve$, we can always find a regular distribution such that the ex-interim expected value of raising the reserve price is always positive; in particular, we can find a value distribution where the atoms are far enough apart that raising the reserve price a single ``level'' generates a large amount of additional revenue.
In the Appendix, we construct the claimed sub-class of regular distributions (see \cref{def:sparse}).
So, below~$\optimalReserve$, the auction must resemble the Dutch auction; the class of all such auctions is precisely all semi-Dutch auctions with cutoff $\optimalReserve$.
In the Supplemental Appendix (\cref{ex:robust_wsp}), we provide an example of an auction format that is not semi-Dutch and that is weakly shill-proof for some value distributions but not others; and the following example presents a real-world setting that roughly fits the premises of \cref{prop:dutch_robust_wsp} where a semi-Dutch auction (that is not a Dutch auction) is used.

\begin{example}
	The Honolulu-Sydney fish auctions and Istanbul flower auctions documented by \citet{hafalirEtAl23} blend elements of the Dutch and English auctions:
	The auction begins at some intermediate price and if anyone bids, then the price ascends like in the English auction. 
	If no one bids, the price descends until someone bids like in the Dutch auction.\footnote{Honolulu-Sydney auction, once someone bids, other bidders can counter-bid and raise the price once more. However, in practice there is little counter-bidding. On the theoretical side, in an IPV setting, there exists an equilibrium where there is no counter-bidding. Counter-bidding once the Dutch auction starts is not allowed in the Istanbul flower auction.}
	
	The Honolulu-Sydney auction plausibly fits the technical assumptions made in \cref{prop:dutch_robust_wsp}:
	The auctions are public, as they take place in person and all bidders can see other bidders' actions.
	Market participants are interested in efficient outcomes because the goods are perishable and there are positive disposal costs for the sellers.
	We do not mean to imply that the Honolulu-Sydney auction was instituted precisely because it is shill-proof, but we highlight it as further evidence that in markets where it is difficult to monitor shill bidding, shill-proof mechanisms may arise.
\end{example}

Note that \cref{prop:dutch_robust_wsp} relies on the auction format being able to condition on the true, optimal reserve.
If we instead require the auction format to be completely prior-independent, then the only public, efficient, and \weakly~shill-proof auction is the Dutch auction.

\begin{corollary} \label{cor:dutch_robust_wsp}
	For any public and efficient auction that is not a Dutch auction, there exists a regular value distribution under which the auction is not \weakly~shill-proof.
\end{corollary}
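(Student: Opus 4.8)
The plan is to deduce \cref{cor:dutch_robust_wsp} from \cref{prop:dutch_robust_wsp} by exploiting the fact that a fixed, prior-independent auction cannot tailor its cutoff to the realized monopoly reserve. Fix a public, efficient, orderly auction $\directAuction$ that is not the Dutch auction with reserve $0$. By \cref{prop:dutch_robust_wsp}, being \weakly~shill-proof under a regular distribution with monopoly reserve $\optimalReserve$ forces the auction to be a semi-Dutch auction with cutoff $\optimalReserve$ (\cref{def:semi_dutch}). The crucial observation is that the below-cutoff region $\cbr{w : w < \optimalReserve}^{\numBidders}$ grows as $\optimalReserve$ increases, so by selecting a distribution whose monopoly reserve sits at the \emph{top} atom $\type^{\numPossVals}$ we force Dutch behavior on essentially the entire type space with a single application of the proposition.

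The first step is a construction: I would exhibit a regular distribution $F^*$ (\cref{def:regular}) whose monopoly reserve is the top atom, i.e.\ $\optimalReserve = \type^{\numPossVals}$. Since $1 - \typeDist(\type^{\numPossVals}) = 0$, the top atom always has non-negative virtual value, so it suffices to make every lower virtual value negative while keeping the sequence non-decreasing; placing enough mass on $\type^{\numPossVals}$ relative to the lower atoms (as in the sparse family of \cref{def:sparse}) accomplishes this. For this $F^*$, \cref{prop:dutch_robust_wsp} says that if the auction is \weakly~shill-proof then it is semi-Dutch with cutoff $\type^{\numPossVals}$: it must coincide with the reserve-$0$ Dutch menu of \cref{def:dutch} on every subprofile $\possibleValues \subseteq \cbr{w : w < \type^{\numPossVals}}^{\numBidders}$, and it must reach that belief whenever all realized values lie below $\type^{\numPossVals}$.

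It then remains to pin down behavior on the single remaining price level, the top atom $\type^{\numPossVals}$ (profiles in which some bidder attains the maximal value), and to argue that there the auction must again agree with the Dutch auction. Here I would re-use the mechanism behind \cref{thm:dutch_ssp}: because the auction is orderly (\cref{def:orderly}), efficiency forces the item to the highest-priority bidder at value $\type^{\numPossVals}$, so bidders must be offered the top price in the fixed priority order $\ordering$; and a top indication must end the game immediately, since otherwise a shill bidder indicating $\type^{\numPossVals}$ would reveal competition to a later real bidder, reduce that bidder's bid shading, and raise the seller's revenue for a suitable regular distribution, contradicting \weak~shill-proofness. Combining the two steps shows that any public, efficient, orderly auction that is \weakly~shill-proof under all regular distributions must be exactly the reserve-$0$ Dutch auction; contrapositively, any non-Dutch such auction fails \weak~shill-proofness under $F^*$ (or, in the residual top-level case, under the distribution used in the shill argument).

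I expect the main obstacle to be this last step. The semi-Dutch characterization of \cref{prop:dutch_robust_wsp} constrains only below-cutoff behavior and leaves above-cutoff behavior entirely free, and since the top atom lies above every achievable cutoff, no single application of the proposition reaches it. Bridging the gap requires re-deriving the Dutch structure at the top from efficiency, orderliness, and the shill-reveals-competition argument---in effect a one-level version of \cref{thm:dutch_ssp}. A secondary subtlety is verifying that the distribution $F^*$ with monopoly reserve at the top atom is genuinely regular, which couples the admissible mass allocations with the atom spacings.
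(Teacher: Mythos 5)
Your proposal is correct and follows the paper's own route: the paper proves the corollary by (a) observing that a semi-Dutch auction with cutoff $\optimalReserve = \type^{\numPossVals}$ is simply a Dutch auction, and (b) applying \cref{prop:dutch_robust_wsp} to a regular distribution whose optimal reserve is the top atom---your $F^*$, which the sparse family constructed in the proposition's proof supplies. The substantive difference is that the ``main obstacle'' you identify is exactly what the paper compresses into the word ``observe,'' and your caution is warranted: \cref{def:semi_dutch} with cutoff $\type^{\numPossVals}$ constrains the menu rule only on states $\possibleValues \subseteq \widecheck{\possibleValues}$, so it pins down neither the order in which bidders face the top-level query nor whether a top report ends the game; a game form that, say, queries the top level in reverse priority order and keeps querying higher-priority bidders after a top report (so as to respect orderliness) satisfies the definition without being the Dutch auction of \cref{def:dutch}. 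Hence the bridging step you sketch is genuinely needed to close the corollary as stated. One correction to that step: orderliness and efficiency alone do not force priority-order querying at the top level---they constrain the allocation, not the order of moves---so the out-of-order case must also be eliminated by the shill argument rather than attributed to orderliness. Concretely, a low-priority shill reporting $\type^{\numPossVals}$ forces a higher-priority real bidder holding the top value to pay $\type^{\numPossVals}$ outright (she can no longer win at any lower type, so her information rent vanishes), at the cost of forfeiting all revenue when no real bidder holds the top value, and under a sparse distribution with reserve at the top atom this tradeoff is profitable---the same one-level version of \cref{thm:dutch_ssp} you invoke for immediate termination. With that amendment your plan is complete and is, if anything, more careful than the paper's own two-line proof.
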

\begin{proof}
	Observe that a semi-Dutch auction with cutoff $\optimalReserve = \type^{\numPossVals}$ is simply a Dutch auction.
	Then, apply \cref{prop:dutch_robust_wsp} for a regular distribution with optimal reserve $\optimalReserve = \type^{\numPossVals}$.
\end{proof}

\subsubsection{\Weakly~Shill-Proof and Strategy-Proof Auctions} \label{ss:strategy_proof}
We have shown that the only optimal auction with a feasible equilibrium for real bids and an ex-post strategy for shill bidders not to shill (\strong~shill-proofness) is the Dutch auction.
We now investigate the dual question: what optimal auctions have an ex-post strategy for real bidders (\strategyproofness) and an equilibrium under which no shill bidding occurs (\weak~shill-proofness)?

An optimal auction is \strategyproof~if and only if has the second-price transfer rule \cite[Proposition 8]{akbarpourLi20}:
\[ \transfer^2_i(\type) = \quantity_i^*(\type) \cdot \max\cbr{\optimalReserve,\text{second-highest value in $\{\type_1,\ldots,\type_{\numBidders}\}$}}. \]
Note that if a shill bidder knew the valuations of all other bidders, then shill bidding would turn a second-price auction into a first price auction, which bounds the expected profit for a shill bidder from shilling.
So, in order to find a \strategyproof~and \weakly~shill-proof auction, we must find a menu rule that implements a second-price auction where the expected gain from shill bidding is sufficiently small at shill bidders' information sets.
As discussed in \cref{ss:shill}, for regular value distributions, one \weakly~shill-proof, \strategyproof, and optimal auction is the English auction.
We formalize the English auction in our framework as follows:

\begin{definition}
	The \textbf{English auction with reserve price $\optimalReserve$} is defined as the auction with the optimal allocation rule $\quantity^*$, second-price transfer rule $\transfer^2$, initial player $\p{\cdot,\player_0} = \min_{\ordering} \cbr{\p{0,i}}_{i \in \potentialBidders}$, and menu
	\begin{align*}
		\menuRule^{E}(\possibleValues,\player) &= \cbr{\p{\choice_{\mathsf{L}},\nextPlayer_{\mathsf{L}}},\p{\choice_{\mathsf{H}},\nextPlayer_{\mathsf{H}}}}, \\
		\text{where } &\choice_{\mathsf{L}} = \cbr{\type \in \possibleValues_{\player}: \type < \optimalReserve} \cup \cbr{\lowValue_{\player}}, \choice_{\mathsf{H}} = \possibleValues_{\player} \backslash \choice_{\mathsf{L}}, \\
		&\nextPlayer_{\mathsf{L}} = \nextPlayer_{\mathsf{H}} = \begin{cases}
			\p{\cdot,\tilde{\player}} = \min_{\ordering} \cbr{\p{\lowValue_i,i} : i \ne \player, \highValue_i = \type^{\numPossVals}, \card{\possibleValues_i} > 1} & \highValue_{-\player} = \type^{\numPossVals} \\
			\emptyset & \ow
		\end{cases}.
	\end{align*}
\end{definition}
\begin{remark}
	The English auction with reserve price $\optimalReserve$ is \weakly~shill-proof, \strategyproof, and optimal (when the value distribution is regular).
	Depending on the information bidders have when taking actions, that auction can also be made dominant-strategy.\footnote{See \cref{ss:xp_v_ds} for a discussion of how our results translate into dominant-strategy contexts.}
\end{remark}

The English auction is not the only \strategyproof~and \weakly~shill-proof auction.
While the English auction is used frequently, one drawback is that it is ``slow"---each bidder can be queried on their willingness-to-pay on the order of $\numPossVals$ times.
Specifically, let $\worstCaseQueries^{E}(\typeDist) = M - \reserveIndex^* + 1$ be the worst-case number of times a bidder must be queried.
To explore whether there are \weakly~shill-proof and \strategyproof~auctions that require fewer rounds of communication, we introduce a natural ``compression'' of the English auction that comprises the following two phases:
\begin{enumerate}
	\item An English auction is run from $\optimalReserve$ to some $\screenVal$.
	\item If necessary, a second-price auction is then run among players who have not dropped out before the value level of $\screenVal$.
\end{enumerate}

\begin{definition}
	The \textbf{ascending, screening auction} with screen level $\screenVal$ is defined by the optimal allocation rule $\quantity^*$, second-price transfer rule $\transfer^2$, initial player $\p{\cdot,\player_0} = \min_{\ordering} \cbr{\p{\highValue_i,i}}$, and menu
	\begin{align*}
		&\menuRule(\possibleValues,\player) = \begin{cases}
			\menuRule^{E}(\possibleValues,\player) & \exists i \st \lowValue_i \le \screenVal \text{ and } \highValue_i = \type^{\numPossVals} \\
			\cbr{\p{\cbr{\type^k},\nextPlayer^{\, k} }}_{k \in \cbr{\screenValSuper+1,\ldots,\numPossVals}} & \ow
		\end{cases}, \\
		&\text{where } \p{\cdot,\nextPlayer^{\, k} } = \max_{\ordering}\cbr{(0,i) : \card{\possibleValues_i} > 1, \lowValue_i = \screenVal} \text{ for $k < \numPossVals$ and } \nextPlayer^{\, \numPossVals} = \emptyset.
	\end{align*}
\end{definition}
The ascending, screening auction reduces the maximum number of times each bidder can be queried to
\iftoggle{compressLines}{$\worstCaseQueries^{AS,\screenValSuper}(\typeDist) = \screenValSuper - \reserveIndex^* + 2.$}{\[ \worstCaseQueries^{AS,\screenValSuper}(\typeDist) = \card{\cbr{k : \optimalReserve \le \type^k \le \screenVal}} + 1. \]}
Because the transfer rule is $\transfer^2$, the ascending, screening auction is \strategyproof~and optimal.

We use the ascending, screening auction format to explore how fast a \weakly~shill-proof, \strategyproof~and optimal auction can be, as a function of the underlying value distribution.
Our next result shows that the ascending, screening auction can be \weakly~shill-proof, \strategyproof, optimal, and take an arbitrarily small fraction of queries as compared to the English auction depending on the value distribution:
\begin{proposition} \label{prop:wsp_sp_upper_bound}
	For all $\eps > 0$, there exists a value distribution $\valDist$ and screen level $\screenVal$ such that $\worstCaseQueries^{AS,\screenValSuper}(\valDist)/\worstCaseQueries^E(\valDist) < \eps$ and the ascending, screening auction with screening level $\screenVal$ is \weakly~shill-proof, \strategyproof, and optimal.
\end{proposition}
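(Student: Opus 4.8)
The plan is to exhibit, for each $\eps > 0$, a concrete family of regular value distributions together with a choice of screen level $\screenVal$ for which the ascending, screening auction is \weakly~shill-proof while the query ratio $\worstCaseQueries^{AS,\screenValSuper}(\valDist)/\worstCaseQueries^E(\valDist)$ falls below $\eps$. Since the ascending, screening auction uses the second-price transfer rule $\transfer^2$ and the optimal allocation rule $\quantity^*$, it is automatically \strategyproof~and optimal, so the only nontrivial property to verify is \weak~shill-proofness. The two targets therefore decouple: first I would choose the distribution to control the \emph{speed} (the query ratio), and then verify that \weak~shill-proofness survives this choice.

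For the speed ratio, recall that $\worstCaseQueries^E(\valDist) = \numPossVals - \reserveIndex^* + 1$ while $\worstCaseQueries^{AS,\screenValSuper}(\valDist) = \screenValSuper - \reserveIndex^* + 2$. First I would fix the optimal reserve index $\reserveIndex^*$ to be small (say, set $\optimalReserve = \type^2$, so $\reserveIndex^* = 2$) and take $\screenVal$ to be a value level just barely above $\optimalReserve$ (say $\screenValSuper = \reserveIndex^* = 2$, giving $\worstCaseQueries^{AS,\screenValSuper} = 2$). Then the ratio is roughly $2/(\numPossVals - 1)$, which tends to $0$ as the number of atoms $\numPossVals$ above the reserve grows. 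So by taking a distribution with enough atoms strictly above $\optimalReserve$, the ratio can be driven below any $\eps$. The arithmetic here is routine; the content is entirely in the second half.

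The main obstacle is verifying \weak~shill-proofness of the ascending, screening auction for the chosen distribution, because once we compress the ascending phase into a single second-price round at the screen level, a shill bidder who participates in that final round is effectively converting a second-price competition into something closer to a first-price outcome for a winner whose value lies above $\screenVal$ (the intuition noted just before the proposition). Above $\optimalReserve$ the English auction is \weakly~shill-proof under regularity precisely because the ex-interim incentive to raise the effective reserve is non-positive; the danger is that within the pooled top segment $\{\type : \type > \type^{\screenValSuper}\}$, a shill might profitably mimic a high real bidder. The plan is to bound the shill bidder's expected gain at each of its information sets and show it is non-positive. The key leverage is that the only place a shill can act in the second phase is among bidders who survived to level $\screenVal$, and regularity guarantees that conditional on all surviving values being $\ge \type^{\screenValSuper+1} > \optimalReserve$, the expected marginal revenue from pushing the reserve up by one level is non-positive. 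Concretely, I would invoke the virtual-value monotonicity of \cref{def:regular} to show the relevant ex-interim comparison of transfers—analogous to the English-auction argument sketched in \cref{ss:shill}—and observe that the pooling at and above $\screenVal$ does not create any new profitable deviation because, conditional on reaching the second-phase sealed round, every surviving bidder already has value $> \optimalReserve$, where shilling is unprofitable. The delicate point will be handling the boundary information set where exactly the screen level $\screenVal$ is crossed, and confirming that concentrating the distribution's mass appropriately above the reserve (while keeping it regular) does not simultaneously break \weak~shill-proofness; this requires checking that the distribution chosen to make the ratio small still satisfies the inequality characterizing \weak~shill-proofness at every shill information set, which is where I would expect most of the technical work to lie.
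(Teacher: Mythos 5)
Your reduction is right as far as it goes: \strategyproofness~and optimality are immediate from $(\quantity^*,\transfer^2)$, and the query-ratio arithmetic is exactly what the paper does (fix the reserve and screen indices, let the number of atoms above grow), so everything rides on verifying \weak~shill-proofness. But the mechanism you propose for that step---that regularity makes shilling unprofitable conditional on reaching the second-price phase, because all surviving values lie above $\optimalReserve$---is not correct, and it misses the deviation that actually binds. In the second-price phase the surviving bidders report their \emph{exact} values sequentially, and shill bidders observe all prior actions. A shill who stays in through the English phase and moves after a real bidder therefore knows that bidder's exact value $v_{(1)}$ and can report a value equal to or just below it, pushing the winner's payment up to essentially $v_{(1)}$ rather than $\max\cbr{\text{screen-level price},\,v_{(2)}}$. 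This is first-price extraction of the entire gap between the highest and second-highest surviving values. Regularity of the (conditional) distribution says nothing about it, because it is not a reserve-raising deviation of the English-auction type; the English-auction logic applies only when the shill learns ``values are at least $p$,'' not when she learns values exactly. The only deterrent is a cost incurred earlier, at the shill's first information set: by staying in she risks winning the item herself (revenue $0$) when no real bidder survives to $\screenVal$.

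Consequently the actual proof must balance two quantities at the shill's first move: the expected gain, bounded above by $\prob\sbr{\text{some real bidder survives to } \screenVal}$ times the expected gap between the first and second order statistics, against the expected loss, bounded below by $\optimalReserve$ times the probability the shill ends up winning. This is precisely the content of the paper's \cref{lem:wsp_sp_bound}: it requires MHR distributions (strictly stronger than regular)---MHR is what allows the order-statistic gap to be dominated by the exponential case and bounded by roughly $\optimalReserve + \overline{\Delta}$---and it produces a quantitative lower bound on the screen level, \cref{eq:req_screen_level}. Your two targets therefore do not decouple: a generic regular distribution with many atoms above the reserve and screen level just above $\optimalReserve$ fails \weak~shill-proofness, since the extraction gain is then large while the risk of the shill winning is small. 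The paper resolves the tension by choosing sparse discretized exponentials (atoms at $\cbr{0,2,\ldots,2m}$), whose hazard rate at the reserve is so large that \cref{eq:req_screen_level} holds with the screen index equal to the reserve index, after which adding atoms above the screen costs nothing. You flag the right obstacle (second-price turning into first-price) at the outset, but the resolution you sketch does not engage with it; as written, the plan fails at exactly the step you defer.
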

The ascending, screening auction is orderly and optimal by construction; and it is \strategyproof~because the English auction phase and the second-price phase both induce the same (\strategyproof) allocation and transfer rule.
The larger $\screenVal$ is, the less that can be extracted in expectation from shill bidding and the more likely it is that a shill bidder will win the item if she shill bids.
We provide a sufficient minimum bound on $\screenVal$ based on a few moments of a distribution (not its number of atoms), such that for distributions with ``thin-enough'' right tails---in particular monotone hazard rate distributions---the ascending, screening auction is weakly shill-proof (\cref{lem:wsp_sp_bound}).
We can then construct a sequence of distributions with increasing numbers of atoms and constant $\screenVal$ to complete the proof.
The distributional requirements we need here differ from the class we find to prove \cref{prop:dutch_robust_wsp}, demonstrating that designing weakly shill-proof auctions has rich interactions with properties of the prior distribution.

\section{Affiliation, Interdependent Values, and Shill-Proofness}\label{sec:aff_and_sp}
In this section, we generalize our model to allow for interdependent values and a more general distribution of shill bidders.
We will then show that shill-proofness admits an order with respect to the type distribution and value function: the higher the level of affiliation and level of common values that is present in the environment, the harder it is for an auction format to be shill-proof, either weakly or strongly.
Next, we show that \cref{lem:ssp_pab} holds under our general model and pins down the optimal allocation rule for strongly shill-proof auctions in affiliated environments.
Finally, we show that our main results (\cref{thm:dutch_ssp,thm:wsp_trilemma}) still hold under our more general model.

\subsection{Generalized Model}
We maintain the assumptions that there is a set of potential bidders $\potentialBidders$, with $\card{\potentialBidders} = \numBidders$, who might participate and that the seller's value is commonly known to be $0$.
We also maintain our previous assumptions on real and shill bidder utilities and information sets.
However, instead of assuming that each potential bidder has probability $\pReal$ of participating, we instead assume that the set of real bidders is drawn symmetrically and randomly, i.e., for any $\realBidders',\realBidders'' \subset \potentialBidders$, if $|\realBidders'| = |\realBidders''|$, then $\prob\sbr{\realBidders = \realBidders'} = \prob\sbr{\realBidders = \realBidders''}$.
For parsimony, we also assume full support, but the probability of certain sets of real bidders can be arbitrarily small; this rules out mechanisms that condition on large coalition of bidders in unusual ways in order to become weakly shill-proof.\footnote{Our results for strong shill-proofness do not require a full support distribution over shill bidder sets.}
Letting the probability of certain realizations go to zero (or relaxing the full support assumption entirely), this treatment of the set of shill bidders nests many natural models of shill bidders.
For example, it nests a model of shill bidding where there is at most or exactly one shill bidder.
It also nests our model of shill bidding from \cref{sec:model} where each bidder has an i.i.d.\ chance of being a real bidder.

In our general model, we interpret the type $\type_i \in \typeSpace$ to flexibly represent both a private value for the item and a signal of the common value for the item.
We will expand on exactly how the type enters the valuation for real bidders below.
As notation, for each shill bidder $i \in \shillBidders$, we assign their type to be $\type_i = 0$ to represent them having the lowest possible valuation for the item.
Then, we can say types are jointly drawn from a full-support, symmetric $\typeDist$ and we define $\typepmf(\type) = \prob_{w \sim \typeDist}\sbr{w = \type}$ to be the pmf~of the distribution.
We assume that the type distribution is \textbf{affiliated} in the \cite{MW82} sense: For all $\theta, \theta' \in \typeSpace^N$,
\[ \log\typepmf(\theta \vee \theta') + \log\typepmf(\theta \wedge \theta') \ge \log\typepmf(\theta) + \log\typepmf(\theta'). \]
Affiliation captures the idea that types are positively correlated: the probability of types/signals all being high or low is higher than the probability of some signals being high and some being low.

There is a value function $\val: \typeSpace \times \typeSpace^{\numBidders-1} \to \R$ such that for each bidder $i$, her value for the item is $\val\p{\type_i, \cbr{\type_j}_{j \ne i}}$; note that this form implies that bidders have symmetric preferences over the types of other bidders.
We also make a few functional form restrictions on $\val$ below.
\begin{assumption}\label{assum:interdep}
	The value function $\val$ for a bidder satisfies the following properties:
	\begin{enumerate}[(i)]
		\item For all $k$, $\val(\type^k,0) = \type^k$;
		\item $\val$ is weakly increasing in $\type_{-i}$;
		\item If $\type_i \ge \type_j$, then $\val\p{\type_i, \cbr{\type_j, \type_{-(i,j)}}} \ge \val\p{\type_j, \cbr{\type_i, \type_{-(i,j)}}}$; and
		\item For all $\type$ such that $\type_i \ge \max_{j \ne i} \cbr{\type_j}$, $\val$ is weakly super-modular and has weakly decreasing differences.
	\end{enumerate}
\end{assumption}
Condition (i) is a (without loss) normalization conditional on the other parts of the assumption.
Condition (ii) means that $\type$ encodes common value preferences---a bidder values an item more when other bidders value it more.
Condition (iii) means that bidders value their own high-type realizations more than they value high-type realizations for other bidders. (This is the standard single-crossing assumption needed for a responsive, incentive compatible mechanism to exist.)
Condition (iv) means that conditional on having the highest ex-ante signal of value, a bidder has higher value for other bidders' signals whenever her type is high and has diminishing marginal returns to high signals.

Note that the standard, private values setting where for all $\type$, $\val\p{\type_i, \type_{-i}} = \type_i$ satisfies \cref{assum:interdep}.
Most other value functions in the literature also satisfy these conditions; for example, generalized, additive, interdependent values $\val(\type) = \type_i + \kappa \sum_{j \ne i} \type_j$ with $\kappa \le 1$ and maximum common values $\val(\type) = \max_i \cbr{\type_i}$.

\subsection{Shill-Proof Order}
To prove that increasing the level of affiliation and interdependence of values makes it more difficult for an auction to be shill-proof, let us begin by formally defining what it means for the value distribution to be ``more affiliated'' and for the value function to be ``more commonly valued.''

\begin{definition}\label{def:aff_order}
	Consider two distributions $\typeDist, \typeDist' \in \Delta(\typeSpace^{\numBidders})$ such that all marginal distributions are the same: $\typeDist_i = \typeDist'_i$, for all $i$.
	A distribution $\typeDist'$ is \textbf{more affiliated} than $\typeDist$, $\typeDist' \affOrder \typeDist$, if for all $x,y \in \typeSpace^{\numBidders}$, $\log\typepmf'(x \vee y) - \log\typepmf'(x) \ge \log\typepmf(y) - \log\typepmf(x \wedge y)$.
\end{definition}
Our definition of the affiliation order is standard and from \citet{karlinRinott80}.
When $\typeDist' \affOrder \typeDist$, values between bidders are more highly correlated under $\typeDist'$ than $\typeDist$.
We fix the marginal distributions to emphasize that our definition focuses on the correlation between bidders and not the relative strength of the distributions.
All affiliated distributions are more affiliated than the type distribution where bidders are independent: If $\typeDist$ is affiliated, then $\typeDist \affOrder \prod_i \typeDist_i$.

\begin{definition}\label{def:val_order}
	A value function $\val'$ is \textbf{more commonly valued} than $\val$, $\val' \valOrder \val$, if $\val'$ is weakly more super-modular than $\val$ and for all $\type$, $\val'(\type) \ge \val(\type)$.
\end{definition}

The point-wise comparison in \cref{def:val_order} makes sense because $\val, \val'$ are normalized so that $\val(\type^k,0) = \val'(\type^k,0) = \type^k$---and therefore the point-wise comparison means that other bidders' signals are valued more as the value function becomes more commonly valued.
The increase in super-modularity means that we also require types to be more complementary when we say that the value function is more commonly valued.
We view this as reasonable because if signals became less complementary, there would exist $\val' \valOrder \val$ such the marginal change in bidder $i$'s value based on her own type would be less under $\val'$ than under $\val$.
Any value function $\val$ satisfying \cref{assum:interdep} is more commonly valued than the value function $\val_{\mathrm{private}}(\type_i,\type_{-i}) = \type_i$ arising under the private values model.

Letting $\optTransfer$ be defined as some optimal transfer rule conditional on the extensive-form game and the primitives of the environment, we are now in position to formally state our main result for this section:
\begin{theorem}\label{prop:aff_shill_order}
	Consider affiliated type distributions $\typeDist$ and $\typeDist'$, and value functions $\val$ and $\val'$ satisfying \cref{assum:interdep}.
	Suppose $(\quantity,\optTransfer,\menuRule,\player_0)$ is orderly and strongly shill-proof. Then, if $\val' \valOrder \val$ and $\typeDist \affOrder \typeDist'$, it is the case that $(\quantity,\transfer^*(\quantity,\menuRule,\player_0,\val',\typeDist'),\menuRule,\player_0)$ is strongly shill-proof.
	The same statement holds for weak shill-proofness.
\end{theorem}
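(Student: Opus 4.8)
The plan is to treat the claim as a comparative-statics statement about a single fixed game form $(\quantity,\menuRule,\player_0)$: only the transfer rule changes as we move from the environment $(\val,\typeDist)$ to $(\val',\typeDist')$, and I want to show that the shill's incentive to deviate can only weakly shrink. First I would fix a shill $i \in \shillBidders$, a real-type profile $\type_{-\shillBidders}$, and a candidate shill deviation $\strategy'_{\shillBidders}$. Since transfers are individually rational and, in the strong case, pay-as-bid by \cref{prop:ssp_pab}, the shill's objective $\sum_{j\in\realBidders}\transfer_j$ equals the winning real bidder's bid along whichever path the menu $\menuRule$ selects; a deviation that makes a shill win yields zero real revenue and is never profitable, so I may restrict to deviations under which a real bidder still wins. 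Writing $\Delta_{\val,\typeDist}$ for the difference (deviation path minus truthful path) in total real transfers evaluated under the optimal transfer rule $\transfer^*(\quantity,\menuRule,\player_0,\val,\typeDist)$, \cref{def:ssp} says that strong shill-proofness of the starting auction is exactly $\Delta_{\val,\typeDist}\le 0$ for every such $(i,\type_{-\shillBidders},\strategy'_{\shillBidders})$. Thus it suffices to establish the monotone comparison $\Delta_{\val',\typeDist'} \le \Delta_{\val,\typeDist}$, after which strong shill-proofness of the new auction is immediate.

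Second, I would make the two transfer rules comparable by writing the winner's optimal transfer at a history explicitly. Generalizing \cref{lem:ex_int_funcs} to the affiliated, interdependent setting, the winning bid at a history is a telescoping sum of conditional value increments of $\val$ across adjacent type atoms, weighted by conditional competition probabilities computed from $\typeDist$. A higher revealed shill signal raises this bid through exactly two channels: a common-value channel (the dependence of $\val$ on the shill coordinate of $\type_{-i}$) and a competition channel (the conditional law of the competing types given the history, governed by $\typeDist$). The quantity $\Delta$ is then the net effect of the deviation on this bid, together with any change of winner or of effective reserve, which \cref{def:orderly} and the reserve structure from \cref{prop:ssp_pab} pin down and sign in the same way as in the proof of \cref{thm:dutch_ssp}.

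Third, and this is the heart of the argument, I would establish two single-dimension monotonicity lemmas and chain them, so that the conclusion environment $(\val',\typeDist')$, being weakly less affiliated and no more commonly valued, yields a weakly smaller gain. For the value channel, holding $\typeDist$ fixed, I would show that the increment in the winner's bid caused by a higher shill signal is weakly monotone in the order $\valOrder$, using the supermodularity and decreasing-differences structure of \cref{assum:interdep} part (iv) together with the pointwise comparison in \cref{def:val_order}, applied term-by-term to the telescoping transfer formula. For the affiliation channel, holding $\val$ fixed, I would show that conditioning on a higher shill signal shifts the conditional law of the competing types upward in the monotone-likelihood-ratio sense, and that this shift is weakly larger under the more affiliated distribution; this is precisely where the Karlin--Rinott comparison defining $\affOrder$ in \cref{def:aff_order} is used, after which I integrate the shift against the monotone bid integrand from the previous step. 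Composing the two lemmas along $\valOrder$ and $\affOrder$ gives $\Delta_{\val',\typeDist'}\le\Delta_{\val,\typeDist}\le 0$.

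Finally, for the weak-shill-proofness version I would run the same comparison in expectation. By \cref{def:wsp} the relevant object is the shill's expected continuation revenue at her information set, i.e., the expectation of the transfer difference over real types consistent with the observed history. Because the menu induces only rectangular conditioning events on $\typeSpace^{\numBidders}$, both affiliation and the order $\affOrder$ are preserved under this conditioning, so the same two monotonicity lemmas apply to the conditional expectations and yield $\expect[\Delta_{\val',\typeDist'}\mid\cdot]\le\expect[\Delta_{\val,\typeDist}\mid\cdot]\le 0$. I expect the affiliation channel to be the main obstacle: translating the log-supermodularity dominance of \cref{def:aff_order} into a clean monotone-likelihood-ratio statement for the conditional competition law, and then pushing it through the shading terms of the optimal-transfer formula while keeping its decreasing-differences interaction with $\val$ correctly signed, is the delicate step. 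The change-of-winner and change-of-reserve effects must also be shown to move in the same direction, which is exactly where orderliness and the reserve structure are essential.
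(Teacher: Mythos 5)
Your overall architecture coincides with the paper's: fix the game form $(\quantity,\menuRule,\player_0)$, express transfers through the ex-interim incentive-compatibility formula (the paper's \cref{lem:aff_ex_int_funcs}, the affiliated analogue of \cref{lem:ex_int_funcs}), compare the gain from a fixed shill deviation across environments one channel at a time (affiliation holding the value function fixed, then commonality holding the distribution fixed), and handle the weak case by running the same comparison in conditional expectation. The paper packages this as a contrapositive---a deviation profitable in the less affiliated, less common environment stays profitable in the more affiliated, more common one---which is logically the same monotone inequality $\Delta_{\val',\typeDist'}\le\Delta_{\val,\typeDist}$ you state; and your value-channel argument (pointwise dominance plus greater supermodularity, applied term-by-term to the telescoping transfer formula) is exactly the paper's. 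Your reading of the orders (conclusion environment weakly less affiliated and weakly less commonly valued) matches the paper's proof and introduction, even though the theorem statement's condition $\val'\valOrder\val$ is written in the opposite direction; that inconsistency is the paper's, not yours.

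The genuine gap is in your affiliation channel. You reduce it to the claim that conditioning on a higher shill report shifts the conditional law of the competing types upward \emph{in the monotone-likelihood-ratio sense}, with the shift weakly larger under the more affiliated distribution, and you then integrate this shift against a monotone integrand. That argument works only if the shill's deviation cell $S$ and her truthful cell $S'$ are completely ordered as sets. But menu cells need not be intervals: the deviation cell can have a higher minimum yet a lower maximum than the truthful cell (which contains $0$ but may also contain high types), and in that ``sandwiched'' configuration there is no MLR or first-order-stochastic shift at all, so monotonicity of the integrand cannot sign the comparison. This is precisely what the paper's \cref{lem:aff_set_order} exists to handle: it compares differences of conditional expectations over disjoint sets ordered only by their \emph{minima}, and its proof splits $S$ and $S'$ into alternating blocks and invokes \emph{decreasing differences} of the integrand (via \cref{lem:aff_dec_diff_order}), not merely monotonicity and supermodularity; the equal-marginals requirement built into \cref{def:aff_order} is also what makes the conditional weights on those blocks comparable across $\typeDist$ and $\typeDist'$. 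You correctly flag this step as ``the delicate step,'' but the mechanism you propose for it is the one that breaks exactly in the case that forces the paper's more elaborate lemma, so the crux of the theorem remains unproven in your plan.
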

\begin{proof}[Proof Sketch.]
To begin the proof sketch, let us first observe that the ex-interim transfers from each bidder is pinned down based on the ex-interim allocation rule via a bidder's incentive compatibility constraint.
\begin{lemma} \label{lem:aff_ex_int_funcs}
	For every auction $(\quantity,\optTransfer,\menuRule,\player_0)$, the ex-interim transfer rule for bidder $i$ is
	\begin{gather*}
		\exinttransfer_i(\type_i;\possibleValues) = \exintquantity_i(\type_i;\possibleValues)\val^*(\type_i,\type_i) - \sum_{m : \type^{j_m} < \type_i}  \exintquantity_i(\type^{j_m};\possibleValues) \cdot (\val^*(\type^{j_{m}},\type^{j_{m+1}};\possibleValues) - \val^*(\type^{j_{m}},\type^{j_{m}};\possibleValues)), \\ 
		\text{ where } \val^*_i(\type^{k},\type^{k'};\possibleValues) = \expect\sbr{\val(\type^k,\type_{-i}) \mid \type_i = \type^{k'}, \type_{-i} \in \possibleValues_{-i}, \quantity_i(\type^k,\type_{-i}) = 1}
	\end{gather*}
	is the expected value of the item for bidder $i$ of type $\type^{k'}$ conditional on winning
	the item and playing as if she were type $\type^k$ and $\cbr{\type^{j_m}}_{m}$ are the ordered atoms of $\possibleValues_i$.
\end{lemma}
Note that feasibility normally pins down the transfer rule via the envelope theorem.
However, we are working with discrete types instead of continuous types and so the incentive compatibility constraint is slackened as the change in allocation probability from misreporting is discrete.
So, we use optimality to select among this multitude of ex-interim transfers that maintain incentive compatibility.
(This is an adaptation of the standard envelope theorem formulation of ex-interim transfers adapted to the extensive-form game by noting that if, at any point in the game, $\exinttransfer_i$ does not take this form, there exists a profitable deviation by bidder $i$ to commit to acting as a lower type throughout the rest of the game.)

Next, we prove the following technical lemma.
\begin{lemma}\label{lem:aff_set_order}
	Consider sets $S$ and $S'$ such that $S \cap S' = \emptyset$ and $\min_{v \in S} \cbr{v} > \min_{v \in S'} \cbr{v}$.
	Then, for any weakly super-modular, non-decreasing function $g$ with decreasing differences, if $\typeDist' \affOrder \typeDist$, then for all $i$,
	\begin{equation}\label{eq:aff_set_order}
		\expect_{\type \sim \typeDist'}\sbr{g(\type) \mid \type_i \in S} - \expect_{\type \sim \typeDist'}\sbr{g(\type) \mid \type_i \in S'} \ge \expect_{\type \sim \typeDist}\sbr{g(\type) \mid \type_i \in S} - \expect_{\type \sim \typeDist}\sbr{g(\type) \mid \type_i \in S'}.
	\end{equation}
\end{lemma}

The proof then proceeds by contrapositive for affiliation and then common values.
We consider that the auction is not weakly shill-proof for distribution $\typeDist$ and consider $\typeDist' \affOrder \typeDist$.
We show that if there exists a profitable deviation under $\typeDist$, then that same deviation must be profitable under $\typeDist'$.
Any profitable deviation either manipulates a bidder's ex-interim expected transfer or ex-post manipulates her payment.
Given that we have fixed the extensive form, any deviation that is possible under $(\val,\typeDist)$ is possible under $(\val',\typeDist')$.
So, ex-post manipulation will be profitable under both and so we only need to consider ex-interim manipulations.
We can re-write expected profit of deviation as the difference between the expectation of a sum taken over the true and manipulated type reports.
We show that the sum satisfies the conditions of \cref{lem:aff_set_order} and so we can conclude that deviating increases profits.
Note that the probability that bidder $i$ wins the item is constant between the two type distributions because the allocation rule is held constant and so we have shown that the auction is not weakly shill-proof under $\typeDist'$.
The argument when strongly shill-proof is the exact same.
The proof when considering common values is also by contrapositive.
In this case, we apply the fact that $\val'$ is more super-modular and point-wise larger to directly conclude that the profit from the deviation is larger.
\end{proof}

\subsection{Extending our Characterizations}
We can now use \cref{prop:aff_shill_order} to generalize our main characterization results from previous sections.
To begin, we observe that while we know the optimal allocation rule in an IPV environment and the revenue equivalence theorem implies that any transfer rule yields the same revenue, the same need not hold with interdependent values and affiliation.
So, we must simultaneously search for an allocation and transfer rule that satisfies our desiderata.

In order to find the optimal allocation and transfer rules, we first pin down the transfer rule and then the allocation rule.
Our proof of \cref{lem:ssp_pab} holds in our general environment and so strong shill-proofness implies that the transfer rule must be pay-as-bid.
Then, having pinned down the transfer rule, we can see first that the unconstrained efficient allocation is feasible, and also that in the optimal auction we will never want to allocate to a bidder who does not have the highest type.
Thus, we can solve the monopolist screening problem to find the optimal reserve in this auction.
The formal result is as follows.
\begin{proposition}\label{prop:ssp_pab}
	Under any affiliated distribution and value function satisfying \cref{assum:interdep}, if an auction $\directAuction$ is strongly shill-proof, then it is pay-as-bid:
	For all $\player$, $\val_{\player}$, $\type_{-\player}$, and $\type_{-\player}'$,
	\[ \quantity_{\player}\p{\type_{\player},\type_{-\player}} = \quantity_{\player}\p{\type_{\player},\type_{-\player}'} \implies \transfer_{\player}\p{\type_{\player},\type_{-\player}} = \transfer_{\player}\p{\type_{\player},\type_{-\player}'}. \]
	All efficient mechanisms have an allocation rule such that
	\[ \sum_i \quantity^E_i(\type) = 1 \text{ and } i \notin \argmax_{j} \cbr{\type_j} \implies \quantity^E_i(\type) = 0. \]
	All optimal mechanisms have an allocation rule such that
	\[ \sum_i \quantity^*_i(\type) = \begin{cases}
		1 & \max_j \cbr{\type_j} \ge \optimalReserve \\
		0 & \ow 
	\end{cases} \text{ and } i \notin \argmax_{j} \cbr{\type_j} \implies \quantity^E_i(\type) = 0 \]
	for some $\optimalReserve$.
\end{proposition}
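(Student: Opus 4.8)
The plan is to treat the three assertions in order, using strong shill-proofness only to pin down the transfer rule and then reading off the two allocation rules from efficiency and optimality.

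\textbf{Pay-as-bid.} I would first argue that the proof of \cref{lem:ssp_pab} transfers verbatim, since it never invokes the private-values structure. Fixing a bidder $\player$ and own type $\type_\player$, specialize \cref{def:ssp} to the realization $\realBidders = \cbr{\player}$, so that every other bidder is a shill and the shills' common objective is exactly $\transfer_\player$. Strong shill-proofness then says reporting all zeros maximizes $\transfer_\player$, giving $\transfer_\player(\type_\player, 0) \ge \transfer_\player(\type_\player, \type_{-\player})$ for every $\type_{-\player}$. The monotonicity assumption on transfers supplies the reverse inequality $\transfer_\player(\type_\player, \type_{-\player}) \ge \transfer_\player(\type_\player, 0)$ whenever $\player$ wins and pays a positive amount (lower each coordinate of $\type_{-\player}$ to $0$ one at a time), while ex-post IR forces $\transfer_\player = 0$ on losing profiles. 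Squeezing these bounds shows $\transfer_\player$ is constant across all profiles sharing the same allocation for $\player$, which is the claimed pay-as-bid property. None of these steps uses the value function or the correlation structure, so the conclusion holds for any affiliated $\typeDist$ and any $\val$ satisfying \cref{assum:interdep}.

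\textbf{Efficient allocation.} With the transfer now pay-as-bid, a winner pays a quantity $b(\type_\player) := \transfer_\player(\type_\player, 0)$ depending on her own type alone, and IC together with monotonicity makes $b$ non-decreasing. For the efficient allocation I would observe that \cref{assum:interdep}(i)--(ii) give $\val(\type^k, \type_{-i}) \ge \val(\type^k, 0) = \type^k \ge 0$, so every bidder's value weakly exceeds the seller's value of $0$; hence always selling is efficient, i.e.\ $\sum_i \quantity^E_i = 1$. The single-crossing condition \cref{assum:interdep}(iii) then shows that assigning the item to a highest-type bidder yields weakly higher welfare than assigning it to any lower-type bidder, so $i \notin \argmax_j\cbr{\type_j} \Rightarrow \quantity^E_i = 0$. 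Feasibility---that this allocation can be paired with a pay-as-bid, incentive-compatible transfer---follows from \cref{lem:aff_ex_int_funcs}, since the highest-type-wins rule is monotone in each bidder's own report.

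\textbf{Optimal allocation.} Here I would combine the pay-as-bid payment $b(\cdot)$ with a revenue comparison. First, conditional on selling at a profile, revenue equals $b$ of the winner's type; since $b$ is non-decreasing and the highest-type-wins rule is monotone, reassigning the item to a highest-type bidder weakly raises revenue, which forces $i \notin \argmax_j\cbr{\type_j} \Rightarrow \quantity^*_i = 0$. It then remains to choose the sell region. Because the payment upon winning is $b(\type_\player)$ with $\type_\player = \max_j \type_j$ the winner's type, and the allocation must be monotone in each bidder's own report for incentive compatibility, the sell region must be an upper set in the winner's type; symmetry of $\typeDist$ and $\val$ makes this an upper set in $\max_j \type_j$, i.e.\ $\cbr{\type : \max_j \type_j \ge \optimalReserve}$ for a single threshold $\optimalReserve$, whose optimal value solves the resulting one-dimensional monopoly problem $\max_{\optimalReserve}\expect\sbr{\indicator\cbr{\max_j\type_j \ge \optimalReserve}\, b(\max_j\type_j)}$.

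\textbf{Main obstacle.} The delicate step is the optimal-allocation claim: under interdependent values the winner's expected value conditional on winning varies with the conditioning set (as \cref{lem:aff_ex_int_funcs} makes explicit), so I must verify that $b$ is a well-defined non-decreasing function of own type and, more importantly, that the pay-as-bid constraint together with monotonicity genuinely rules out a profile-dependent reserve beating a constant threshold on $\max_j\type_j$. Making the exchange argument for reallocating to the highest type rigorous in the affiliated, interdependent environment---rather than in the private-values case where revenue equivalence trivializes it---is where the real work lies.
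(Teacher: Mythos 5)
Your proposal is correct and follows essentially the same route as the paper's proof: pay-as-bid via the ex-post argument of \cref{lem:ssp_pab} with $\realBidders = \cbr{\player}$ (the paper likewise notes that argument carries over verbatim because it is ex-post), efficiency from \cref{assum:interdep}(iii) together with the seller's zero value, and optimality by observing that incentive-compatible pay-as-bid bids increase in own type so the item must go to a bidder in $\argmax_j\cbr{\type_j}$, with the reserve then obtained from the monopolist-screening problem via \cref{lem:aff_ex_int_funcs}. The ``main obstacle'' you flag at the end is handled no more rigorously in the paper itself, which simply appeals to the observation that changing the allocation conditional only on the maximum type does not change that bidder's value and to ``standard arguments'' for the reserve-type conclusion.
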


We can now generalize \cref{thm:dutch_ssp}.
Modifying the transfer rule in the Dutch auction to be the appropriate pay-as-bid rule, the following corollary holds.
\begin{corollary}\label{cor:gen_dutch_ssp}
	Under any affiliated type distribution and value function satisfying \cref{assum:interdep}, the statement of \cref{thm:dutch_ssp} holds.
\end{corollary}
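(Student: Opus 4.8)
The plan is to prove the two halves of \cref{thm:dutch_ssp} separately, reusing as much machinery as possible. The existence half (the generalized Dutch auction is strongly shill-proof) I would argue directly from the extensive form, since it is environment-independent. The uniqueness half (a public, orderly, optimal, strongly shill-proof auction must be the Dutch auction) I would obtain by combining \cref{prop:ssp_pab}, which pins down the transfer and allocation rules, with \cref{prop:aff_shill_order}, which transports strong shill-proofness down to the private-value, independent benchmark where \cref{thm:dutch_ssp} already applies.

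For existence, I take the generalized Dutch auction with reserve $\reserve$: the allocation rule $\quantity^{\reserveIndex}$ together with the pay-as-bid transfer rule implied by \cref{lem:aff_ex_int_funcs} (equivalently \cref{prop:ssp_pab}). The defining feature of the Dutch menu $\menuRule^D_{\reserveIndex}$ is unchanged, namely that the first bidder to accept the descending price immediately terminates the auction and wins. Hence if a shill bidder ever accepts, the shill wins the item and no real bidder makes a transfer, so the shill's payoff $\sum_{j \in \realBidders} \transfer_j = 0$; whereas never accepting (acting as a real type-$0$ bidder) yields $\sum_{j \in \realBidders} \transfer_j \ge 0$, since transfers are nonnegative. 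Never accepting is therefore an ex-post best response for every type profile and every realized $\realBidders$, so the auction is strongly shill-proof. Crucially, this argument never invokes the value function or the correlation structure, so it holds for every affiliated $\typeDist$ and every $\val$ satisfying \cref{assum:interdep}. Note that this half cannot be obtained from \cref{prop:aff_shill_order}, whose monotonicity runs the other way: it transports shill-proofness from more affiliated and more commonly valued environments to less affiliated and less commonly valued ones, not the reverse.

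For uniqueness, I start from a public, orderly, optimal auction $\directAuction$ that is strongly shill-proof under the affiliated, interdependent environment $(\val, \typeDist)$. First, \cref{prop:ssp_pab} forces the transfer rule to be pay-as-bid and the allocation rule to be the orderly optimal rule $\quantity^* = \quantity^{\reserveIndex^*}$ with some reserve $\optimalReserve$; what remains free is the extensive form, i.e., the menu rule $\menuRule$ and the starting player $\player_0$. Next, I specialize to the private-value, independent benchmark $(\val_{\mathrm{private}}, \prod_i \typeDist_i)$, the least commonly valued and least affiliated environment: every $\val$ satisfying \cref{assum:interdep} is more commonly valued than $\val_{\mathrm{private}}$, and every affiliated $\typeDist$ satisfies $\typeDist \affOrder \prod_i \typeDist_i$. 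By \cref{prop:aff_shill_order}, holding $(\quantity^*, \menuRule, \player_0)$ fixed and adapting only the transfer rule to the benchmark's optimal pay-as-bid rule preserves strong shill-proofness. In the benchmark we are exactly in the IPV setting of \cref{sec:ssp}, so \cref{thm:dutch_ssp} applies and pins the extensive form: $\menuRule$ must be the Dutch menu $\menuRule^D_{\reserveIndex^*}$ and $\player_0$ the Dutch starting player. Since the extensive form is identical across environments (only transfers were adapted), the original auction is the generalized Dutch auction with reserve $\optimalReserve$.

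The step I expect to be the main obstacle is squaring the reserve across environments. The monopoly reserve $\optimalReserve$ that is optimal for $(\val, \typeDist)$ need not equal the IPV-optimal reserve for the marginals $\typeDist_i$, so after the reduction the allocation $\quantity^*$ may fail to be \emph{optimal} in the benchmark, whereas the uniqueness half of \cref{thm:dutch_ssp} is stated for optimal auctions. My plan to resolve this is to observe that the menu-uniqueness portion of the proof of \cref{thm:dutch_ssp}---its case analysis showing that any non-Dutch menu lets a shill either raise the effective reserve or manufacture perceived competition---uses only orderliness (\cref{def:orderly}) and the reserve structure of the allocation, not the fact that the reserve is revenue-optimal. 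It should therefore apply verbatim to $\quantity^*$ regarded as an orderly, reserve-$\optimalReserve$ allocation in the benchmark, forcing the menu to be Dutch for that reserve. Verifying that this structural part of the argument is genuinely reserve-agnostic is the one place where I would revisit the original proof rather than cite it as a black box.
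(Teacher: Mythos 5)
Your proposal is correct and follows essentially the same route as the paper: the existence half is argued directly from the extensive form exactly as in the IPV case, and the uniqueness half chains \cref{prop:ssp_pab}, \cref{prop:aff_shill_order} (used in precisely the direction the paper uses it, transporting strong shill-proofness from the affiliated, interdependent environment down to the IPV benchmark, equivalently transporting non-shill-proofness upward), and the IPV uniqueness in \cref{thm:dutch_ssp}. The reserve-mismatch subtlety you flag is genuine but is passed over silently in the paper's own four-line proof; your proposed resolution---that the case analysis behind \cref{thm:dutch_ssp} relies only on orderliness and the reserve structure of the allocation rule, not on revenue-optimality of the reserve level itself---is the right way to close it, and is consistent with how \cref{prop:ssp_pab} is invoked to supply that reserve structure.
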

\begin{proof}
	The proof that any Dutch auction is strongly shill-proof is the exact same as in the IPV case.
	By \cref{prop:ssp_pab}, we know what the orderly and optimal allocation rule is.
	By \cref{thm:dutch_ssp}, we know that in the IPV case, no non-Dutch auction will be strongly shill-proof.
	We then apply \cref{prop:aff_shill_order} to conclude that all non-Dutch auctions are not strongly shill-proof and optimal in general.
\end{proof}

Many of our results for weakly shill-proof auctions continue to hold in our general model when we restrict to optimal transfer rules.
The generalization of \cref{thm:wsp_trilemma} is as follows.
\begin{corollary}
	Under any affiliated type distribution $\typeDist$ and value function $\val$ satisfying \cref{assum:interdep}, there exists no orderly auction $(\quantity,\optTransfer,\menuRule,\player_0)$ that is single-action, mildly ex-post incentive compatible and weakly shill-proof.
\end{corollary}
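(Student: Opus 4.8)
The natural plan is to imitate the proof of \cref{cor:gen_dutch_ssp} and reduce to the IPV trilemma \cref{thm:wsp_trilemma}. Suppose toward a contradiction that $(\quantity,\optTransfer,\menuRule,\player_0)$ is orderly, single-action, mildly ex-post incentive compatible, and weakly shill-proof under some affiliated $\typeDist$ and value function $\val$ satisfying \cref{assum:interdep}. The two extremal comparisons noted after \cref{def:aff_order} and \cref{def:val_order} place this in the ``hardest'' environment: every affiliated $\typeDist$ satisfies $\typeDist \affOrder \prod_i \typeDist_i$, and every $\val$ obeying \cref{assum:interdep} is more commonly valued than $\val_{\mathrm{private}}(\type_i,\type_{-i})=\type_i$, so $(\val_{\mathrm{private}},\prod_i\typeDist_i)$ is exactly the symmetric IPV environment of \cref{sec:model}. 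Applying \cref{prop:aff_shill_order} (the contrapositive of the lifting step in \cref{cor:gen_dutch_ssp}) would transport weak shill-proofness down to that IPV environment, and single-action and orderliness are inherited verbatim since they depend only on the fixed $\menuRule$ and $\quantity$. The gap is that \cref{prop:aff_shill_order} replaces $\optTransfer$ by the optimal IPV transfer, and mild ex-post incentive compatibility is a joint condition on allocation, transfer, and value function that an order argument does not obviously preserve.

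I would therefore sidestep the transport problem by re-running the trilemma argument directly in the general model, fixing the bidder $i<\numBidders$ witnessing mild ex-post incentive compatibility. By \cref{lem:aff_ex_int_funcs} the transfer conditional on winning is pinned down by the allocation and the conditional values. First, mild ex-post incentive compatibility forces this winning transfer to be invariant to $i$'s own report: among reports that win, the realized value $\val(\type_i,\type_{-i})$ depends on $i$'s true type rather than her report, so if two winning reports carried different transfers the ``lowest winning transfer'' argument of \cref{lem:msp_form} would break truthfulness regardless of the value function. Second, the general-model analog of \cref{lem:wsp_form} forces the winning transfer to be invariant to the reports of bidders moving after $i$, since otherwise a configuration in which all later bidders are shills (positive probability by full support) would let them raise the winner's payment. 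Combining the two invariances, the winning transfer depends only on the reports of bidders before $i$; hence reporting $\type^{\numPossVals}$ weakly increases $i$'s probability of winning at an unchanged winning payment, strictly so on some profile with positive winning surplus, and condition (iii) of \cref{def:orderly} makes this a profitable deviation --- contradicting the mild ex-post incentive compatibility we assumed.

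The delicate step is the second invariance. In the IPV proof a later shill's report is payoff-irrelevant to the winner's value and to the other bidders' beliefs, so invariance is immediate from monotonicity plus the shills' inability to profit; with interdependence and affiliation a later report shifts both the winner's conditional value and, through correlation, the beliefs of the bidders who already moved, so one must rule out that these channels let a shill raise the winner's expected transfer. I would discharge this using the general weak shill-proofness characterization (\cref{lem:wsp_val_equiv_general}) together with the monotonicity supplied by \cref{lem:aff_set_order}, which is precisely the tool \cref{prop:aff_shill_order} uses to show that more affiliation and more common value only help a deviating shill. The one subtlety to watch is that weak shill-proofness is an ex-interim condition whereas the trilemma's contradiction is ex-post; reconciling the two --- turning ``no profitable shill deviation in expectation'' into the pointwise invariance of the winning transfer --- is where I expect the bulk of the work to lie, and where the monotonicity of the transfer rule assumed throughout \cref{sec:model} does the heavy lifting.
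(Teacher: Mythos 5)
Your proposal is correct in outline but takes a genuinely different route from the paper. The paper's proof is exactly the one-line transport you describe (and reject) in your first paragraph: it applies \cref{prop:aff_shill_order} to \cref{thm:wsp_trilemma} ``with a restriction on possible transfer rules,'' carrying weak shill-proofness from the affiliated, interdependent environment down to the IPV environment, mirroring \cref{cor:gen_dutch_ssp}. The objection you raise against that route is legitimate: \cref{prop:aff_shill_order} swaps $\optTransfer$ for the IPV-optimal transfer rule, and the paper never verifies that the transported IPV auction remains mildly \strategyproof, so the quoted phrase is carrying real weight. Your alternative---re-running the trilemma directly in the general model---avoids this issue and is self-contained. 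The analog of \cref{lem:msp_form} goes through exactly as you argue, since conditional on winning, the realized value $\val(\type_i,\type_{-i})$ depends on the true type rather than the report, so monotonicity plus mild \strategyproofness\ pins down the winning transfer; and the concluding deviation to $\type^{\numPossVals}$ needs only \cref{assum:interdep}(i)--(ii) (which give $\val(\type_i,\type_{-i}) \ge \type_i$), ex-post IR, winner-paying, and \cref{def:orderly}(iii).

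One correction: the step you flag as the bulk of the work---the analog of \cref{lem:wsp_form}, i.e., turning the ex-interim shill constraint into pointwise invariance of the winning transfer---is easier than you expect and does not require \cref{lem:aff_set_order}. The same conditioning trick as in the IPV proof of \cref{lem:wsp_form} works verbatim: condition on the event (positive probability by full support) that every bidder moving after $i$ is a shill, and on the earlier bidders' reports. Since shills observe all prior actions and all later movers are shills, there is no residual uncertainty left in that conditional expectation, so the inequality in \cref{lem:wsp_val_equiv_general} (whose proof is a pure deviation argument and nowhere uses independence or private values) collapses to the ex-post comparison of $\sum_{k \le i}\transfer_k(\type_{\le i},\type'_{>i})$ against $\sum_{k \le i}\transfer_k(\type_{\le i},0)$. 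Affiliation and interdependence affect the transfer levels through \cref{lem:aff_ex_int_funcs}, but they never re-introduce uncertainty into this conditional event, which is all the argument needs. With that observation your route closes, and it arguably repairs, rather than merely replicates, the paper's own terse proof.
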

\begin{proof}
	Applying \cref{prop:aff_shill_order} to \cref{thm:wsp_trilemma} with a restriction on possible transfer rules completes the proof.
\end{proof}
While regularity is not as applicable with affiliation, our robustness result (\cref{cor:dutch_robust_wsp}) for the Dutch auction still hold.
\begin{corollary}
		Under any affiliated type distribution $\typeDist$ and value function $\val$ satisfying \cref{assum:interdep} and for any public and efficient auction $(\quantity,\optTransfer,\menuRule,\player_0)$, if the auction is not a Dutch auction with reserve price $0$, there exists a value distribution under which the auction is not weakly shill-proof.
\end{corollary}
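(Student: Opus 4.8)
The plan is to reuse the template of the two preceding corollaries (\cref{cor:gen_dutch_ssp} and the trilemma corollary): reduce to the IPV characterization \cref{cor:dutch_robust_wsp} and then transport its conclusion to the affiliated, interdependent-value environment via the monotonicity of shill-proofness in \cref{prop:aff_shill_order}. Fix the game form $(\quantity,\menuRule,\player_0)$ of a public, efficient auction that is not the Dutch auction with reserve price~$0$. The key preliminary observation is that the efficient allocation rule $\quantity^E$, the menu rule $\menuRule$, and the starting player $\player_0$ are all specified independently of the underlying value function and type distribution; only the transfer rule is environment-dependent. Hence ``being the Dutch auction with reserve price~$0$'' is a property of the game form alone, and the hypothesis that our auction fails this property is preserved as we vary the environment. (By \cref{prop:ssp_pab}, efficiency also continues to mean ``always allocate to the highest type'' in the interdependent setting, so the two notions of ``efficient, non-Dutch'' coincide across environments.)

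First I would instantiate \cref{cor:dutch_robust_wsp} in the private-values world. Since the game form is public, efficient, and not Dutch, that corollary supplies a regular IPV distribution---equivalently, a product distribution $\typeDist_0 = \prod_i (\typeDist_0)_i$ together with the private-values function $\val_{\mathrm{private}}(\type_i,\type_{-i}) = \type_i$---under which the auction, equipped with its IPV-optimal transfer rule, is not \weakly~shill-proof. Two bookkeeping facts make $\typeDist_0$ an admissible witness for the general statement: being a product distribution, $\typeDist_0$ is affiliated (independence is the bottom element of $\affOrder$), and $\val_{\mathrm{private}}$ satisfies \cref{assum:interdep}. Note that the witness distribution we produce need not coincide with the ambient $\typeDist$ of the corollary; we only need to exhibit \emph{some} value distribution under which the auction fails weak shill-proofness.

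Next I would lift this private-values witness to the given value function $\val$. By the remark following \cref{def:val_order}, any $\val$ satisfying \cref{assum:interdep} is more commonly valued than the private-values function, i.e.\ $\val \valOrder \val_{\mathrm{private}}$, while the type distribution is held fixed at $\typeDist_0$. Applying \cref{prop:aff_shill_order} in contrapositive form---indeed only its common-values half, since affiliation is held fixed and only the value function moves---gives that ``not \weakly~shill-proof'' propagates from the less commonly valued environment $(\val_{\mathrm{private}},\typeDist_0)$ up to the more commonly valued environment $(\val,\typeDist_0)$, keeping the same game form and replacing the transfer by the corresponding optimal rule $\transfer^*(\quantity,\menuRule,\player_0,\val,\typeDist_0)$. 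Thus $(\quantity,\transfer^*(\quantity,\menuRule,\player_0,\val,\typeDist_0),\menuRule,\player_0)$ is not \weakly~shill-proof, and $\typeDist_0$ is the desired witness. The only genuine subtlety---and the step I would verify most carefully---is the direction of the order in \cref{prop:aff_shill_order}: I must invoke it so that failure of weak shill-proofness propagates toward \emph{more} commonly valued environments (equivalently, the direction in which the shill-proof set contracts, as described in the paper's overview), and confirm that moving the value function alone while fixing the affiliated product distribution $\typeDist_0$ is a legitimate instance of the common-values argument underlying the proposition. Everything else is a direct appeal to \cref{cor:dutch_robust_wsp} and to the environment-independence of the game form.
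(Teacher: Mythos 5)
Your proposal is correct and follows essentially the same route as the paper, whose own (two-line) proof likewise takes the IPV witness distribution from \cref{cor:dutch_robust_wsp} and propagates the failure of weak shill-proofness to the affiliated, interdependent environment via \cref{prop:aff_shill_order}, holding the game form fixed and swapping in the environment's optimal transfer rule. Your flagged subtlety is also resolved the way you suspect: the application must use the direction in which non-shill-proofness propagates toward more commonly valued environments, which is what the proof of \cref{prop:aff_shill_order} (and the paper's overview) actually establishes, and your reduction---product distribution fixed, only the value function moving from $\val_{\mathrm{private}}$ up to $\val$---is exactly the legitimate instance needed.
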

\begin{proof}
	Since the Dutch auction with reserve price $0$ is strongly shill-proof, it must be weakly shill-proof.
	We then apply \cref{prop:aff_shill_order} to \cref{cor:dutch_robust_wsp} to rule out any other auction format being robustly weakly shill-proof and complete the proof.
\end{proof}

We conclude this section by observing that \cref{prop:wsp_sp_upper_bound} does not necessarily hold in the general model.
In fact, as we mentioned in the introduction, even the English auction is not necessarily weakly shill-proof under an IPV type distribution if the distribution is not regular.
We leave as a future research question to find conditions on type distributions such that some strategy-proof and weakly shill-proof mechanism exists.

\section{Discussion} \label{sec:discussion}
\subsection{Shill-Proofness vs. Credibility} \label{ss:shill_v_cred}
As discussed in the review of literature, another notion of ``cheating'' by the auctioneer is that of \textit{in-credibility} introduced by \citet{akbarpourLi20}.
An auction is credible if a revenue-maximizing auctioneer has no incentive to lie about what other players are doing.
That information environment differs from this paper because we assume that bidders correctly (though perhaps not fully) perceive the actions of other players and where in the game tree they are.
In the Online Appendix, we formally define credibility in our setting (\cref{def:credible}) and prove the following implications:
\begin{proposition}\label{prop:cred_nesting}
	Suppose $(\game,\strategy)$ is an orderly auction.
	If the auction is \strongly~shill-proof, then it must be credible.
	If the auction is credible, then it must be \weakly~shill-proof.
\end{proposition}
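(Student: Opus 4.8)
The plan is to establish the two nested implications by identifying the auctioneer's \emph{safe deviations} (in the sense of \cref{def:credible}) with shill-bidding deviations, and then exploiting the fact that \strong, credible, and \weak~shill-proofness differ only in the informational standard---ex-post, interim, and ex-ante, respectively---at which honest (value-$0$) play must be optimal for the auctioneer. The conceptual heart of the argument is the following correspondence, which holds precisely because in our model real bidders correctly perceive the actions taken and their position in the game tree: the only way for the auctioneer to manipulate an orderly auction undetectably is to inject additional actions attributed to not-yet-revealed bidders, and any such injection is realizable by a shill bidder, who observes all actions and is controlled by the seller. Conversely, since a real bidder cannot distinguish a shill's action from the same action taken by a real bidder of a type for which that action is honest, every shill deviation is itself a safe deviation. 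Thus the set of real-bidder transfer profiles attainable by safe deviations coincides with the set attainable by shill deviations. I would isolate this as a lemma and treat its verification---tying \cref{def:credible} to the menu-rule representation $\directAuction$ and invoking orderliness to fix tie-breaking so that the attributed-action correspondence is well defined---as the main obstacle.

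For the first implication, suppose $(\game,\strategy)$ is \strongly~shill-proof and fix any auctioneer information set in the credibility game, i.e., any history of observed messages; this history constrains the real types to some set and induces a posterior over $\type_{-\shillBidders}$ in that set. By the correspondence above, any safe continuation deviation is realized by some shill strategy $\strategy'$. \Strong~shill-proofness (\cref{def:ssp}) gives, for every $\type_{-\shillBidders}$ and every $\strategy'$, the pointwise inequality $\sum_{j \in \realBidders}\transfer_j(\strategy(0,\type_{-\shillBidders};\realBidders)) \ge \sum_{j \in \realBidders}\transfer_j(\strategy_{\shillBidders}',\strategy_{-\shillBidders}(0,\type_{-\shillBidders};\realBidders))$. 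Since this holds for each $\type_{-\shillBidders}$ consistent with the observed messages, it survives taking the posterior expectation, so honest play weakly dominates the safe deviation at that information set. As the information set was arbitrary, no profitable safe deviation exists and the auction is credible. In words, ex-post optimality of honest shill play implies interim optimality among safe deviations.

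For the second implication, suppose $(\game,\strategy)$ is credible and consider the honest strategy profile in which real bidders play truthfully and shills play as value-$0$ bidders. Real bidders are best-responding by the revelation principle underlying the representation $\directAuction$, so it remains only to check that value-$0$ play is a best response for the shills in expectation. Any shill deviation is a safe deviation by the correspondence, and a shill bidder's payoff equals the auctioneer's revenue $\sum_{j\in\realBidders}\transfer_j$; credibility states that honest play maximizes this revenue in expectation among safe deviations. Hence no shill deviation raises a shill's expected payoff, value-$0$ play is a best response, and the honest profile is an equilibrium in which shilling never occurs. Since this profile has shills behaving identically to value-$0$ real bidders, it is invariant to the realization of $\shillBidders$ and therefore \weakly~shill-proof in the sense of \cref{def:wsp}. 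Here the logic is that interim optimality among safe deviations implies ex-ante optimality of honest shill play.

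The step I expect to be most delicate is the deviation correspondence itself. Unlike the informational comparison---which, once the deviation classes are matched, is just the elementary fact that a pointwise inequality survives conditioning and averaging---the correspondence must be extracted carefully from \cref{def:credible}: I would confirm that in our model a safe deviation cannot reattribute or suppress a real bidder's action (ruled out because bidders perceive actions and tree positions correctly), that any fabricated action lies in the image of some real type's honest play (so that it is both safe and shill-feasible), and that there are enough potential bidders in $\potentialBidders$ to host the injected actions. Orderliness is used throughout to ensure that tie-breaking does not create a discrepancy between the transfers generated by a fabricated action and those generated by the corresponding shill bid.
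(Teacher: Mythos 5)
Your high-level architecture---match the auctioneer's safe deviations with shill deviations, then run an ex-post $\Rightarrow$ interim $\Rightarrow$ ex-ante comparison---is in the same spirit as the paper's proof (which establishes both implications by contrapositive, converting each profitable deviation of one kind into a profitable deviation of the other kind). But the lemma you place at the heart of the argument, that ``the set of real-bidder transfer profiles attainable by safe deviations coincides with the set attainable by shill deviations,'' is false as stated, and your plan for verifying it rests on a false premise. In the messaging game of \cref{def:credible}, a safe deviation may send different, mutually \emph{inconsistent} lies to different bidders: safety only requires that each bidder $i$'s own observation be consistent with honest play against \emph{some} $\type'_{-i}$, and these profiles may differ across $i$. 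Shill bidders cannot replicate this, since they take actual actions in the game tree that all real bidders perceive consistently; with two or more real bidders present, safe deviations are strictly more powerful than shill deviations. Your proposed verification---that ``a safe deviation cannot reattribute or suppress a real bidder's action\ldots because bidders perceive actions and tree positions correctly''---imports the shill-model information assumption into the credibility game, where it does not hold: the auctioneer's ability to misreport other bidders' actions is precisely what credibility is about.

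The repair is the step your proposal is missing, and it is exactly what the paper's proof supplies. Using orderliness, ex-post IR, and the winner-paying property (\cref{lem:winner_paying}), any outcome the auctioneer can safely deliver has a single bidder $i^*$ making a nonzero transfer, equal to $\actionTransfer_{i^*}(\strategy(\type_{i^*},\type'_{-i^*}))$ for the one lie $\type'_{-i^*}$ told to her; hence only that lie is revenue-relevant, and it \emph{can} be replicated by shills in the realization $\realBidders = \cbr{i^*}$, where every other bidder is a shill. Note this realization depends on the type profile, so what is true is a per-type-profile, winner-isolated matching, not an equality of deviation sets. Chaining with transfer monotonicity, $\actionTransfer_{i^*}(\strategy(\type_{i^*},\type'_{-i^*})) > \actionTransfer_{i^*}(\strategy(\type)) \ge \actionTransfer_{i^*}(\strategy(\type_{i^*},0))$, then contradicts \cref{def:ssp}. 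A parallel bookkeeping issue affects your second implication: shill payments are not revenue in the shill model, whereas type-$0$ bidders' payments are revenue in the credibility game, and credibility's expectation is unconditional while the shill deviation's profitability is conditional on the realization of $\shillBidders$. The paper reconciles the two accountings by a case analysis on whether the winner under the simulated shill play is a real or a type-$0$ bidder, again via winner-paying; that part of your argument is repairable with care, but the correspondence lemma as you state it cannot be proved, and the proposition needs its winner-isolated weakening instead.
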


We also define \textit{$\experiment$-credibility} for single-action auctions (\cref{def:gen_credible}) as a generalization of credibility allowing for bidders to have exogenous signals (where $\experiment$ is our notation from \cref{ss:single_action}) about the actions of other bidders as well as additional communication from the auctioneer.
Recall that we defined $\publicExperiment$ to mean that the signals reveal the actions of previous bidders and $\privateExperiment$ to be static auction setting.
We prove that the implications of \cref{prop:cred_nesting} still hold and give conditions under which credibility coincides with \strong~and \weak~shill-proofness:
\begin{proposition}\label{prop:sa_cred_nesting}
	Suppose $(\game,\strategy)$ is a single-action, orderly auction.
	If the auction is \strongly~shill-proof, then it must be $\experiment$-credible.
	If it is $(\privateExperiment)$-credible, then it is \strongly~shill-proof.
	If the auction is $\experiment$-credible, then it must be \weakly~shill-proof.
	If it is \weakly~shill-proof, then it is $(\publicExperiment)$-credible.
\end{proposition}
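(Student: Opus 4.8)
The plan is to organize all four implications around one structural observation: $\experiment$-credibility is monotone in the informativeness of $\experiment$, with $\privateExperiment$ (the static information structure) giving the \emph{strongest} credibility requirement and $\publicExperiment$ (the public structure) the \emph{weakest}. I would first record this monotonicity lemma: if $\experiment$ refines $\experiment'$ (its signals reveal more about earlier bidders' actions), then every safe deviation by the auctioneer under $\experiment$ is also safe under $\experiment'$, so the set of undetectable lies shrinks as signals sharpen; hence $\experiment'$-credibility implies $\experiment$-credibility. In particular $(\privateExperiment)$-credibility implies $\experiment$-credibility for every $\experiment$, while every $\experiment$-credibility implies $(\publicExperiment)$-credibility. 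Given this, the four claims reduce to the two equivalences that an auction is \strongly~shill-proof if and only if it is $(\privateExperiment)$-credible, and \weakly~shill-proof if and only if it is $(\publicExperiment)$-credible: the forward halves of the first and third sentences then follow by monotonicity (and are consistent with the general nesting already established in \cref{prop:cred_nesting}), while the second and fourth sentences are exactly the converse halves of the two equivalences.

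The core of the argument is a correspondence between shill-bidding deviations by the seller and safe misreporting deviations by the auctioneer, carried out separately in the two extreme regimes. In the static regime $\privateExperiment$, real bidders observe nothing about others' actions, so any action the seller inserts through a shill---equivalently, any claim the auctioneer makes about a nonexistent bid---is automatically undetectable by every real bidder. I would make this precise by mapping a shill reporting type $\type_i$ to the auctioneer asserting that bidder $i$ reported $\type_i$: in a single-action auction the two induce the same direct outcome $(\quantity,\transfer)$, and since real bidders condition on no information, the lie is safe. This identifies the seller's shill deviations with the auctioneer's safe deviations; and because the auctioneer observes all bids before communicating, it can apply whichever lie is profitable on each realized profile. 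A shill deviation that strictly raises revenue ex-post on a positive-probability profile (which exists whenever \strong~shill-proofness fails) therefore yields a safe deviation with strictly higher \emph{expected} revenue, while conversely \strong~shill-proofness forecloses every such lie. This gives the equivalence \strongly~shill-proof $\Leftrightarrow$ $(\privateExperiment)$-credible.

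In the public regime $\publicExperiment$, real bidders observe every earlier action exactly, so the auctioneer can neither hide nor alter a genuine bid without detection; the only safe deviations left are \emph{insertions} of phantom bids that the bidders cannot rule out. Crucially, because bidders never learn the identity of the real-bidder set $\realBidders$ (they know only that each potential bidder is real with the prior probability), a phantom bid is consistent with every bidder's beliefs and hence safe---and a phantom bid is exactly a shill bid. I would therefore show that, relative to the fixed equilibrium $\strategy$, safe auctioneer deviations under $\publicExperiment$ are in outcome-preserving correspondence with shill deviations from honest (value-$0$) play. Since \weak~shill-proofness says precisely that no shill deviation from this equilibrium is profitable \emph{in expectation} at the shills' information sets, it is equivalent to the absence of a profitable safe deviation, i.e.\ to $(\publicExperiment)$-credibility. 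Combining the two equivalences with the monotonicity lemma yields all four implications.

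I expect the main obstacle to be the faithful construction of these correspondences---in particular, verifying that the auctioneer deviation built from a shill strategy is genuinely \emph{safe} at every information set (and conversely that every safe deviation is realizable by shills), and reconciling the ex-post quantifier in \strong~shill-proofness with the expected-revenue, realization-adaptive nature of auctioneer deviations. The static regime is where these align most cleanly, because the seller's and the auctioneer's full observation of the bids coincide and real bidders' beliefs are inert; the delicate step there is confirming that ``be honest except on the one profitable profile'' is a well-defined safe auctioneer strategy, so that a single ex-post-profitable shill upgrades to an expected-profitable lie. In the public regime the subtlety is instead establishing that phantom-bid insertions exhaust the safe deviations, which hinges entirely on bidders' ignorance of $\realBidders$; once that is pinned down, matching revenues is immediate from the outcome-preserving correspondence.
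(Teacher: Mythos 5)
Your proposal has two genuine gaps, and the first one undermines the whole reduction. The paper's notion of $\experiment$-credibility (\cref{def:gen_credible}, via \cref{def:safe_dev}) is defined relative to the auction's \emph{own} information structure: the safe deviations are built from the equilibrium strategies $\strategy_i(\type_i;\experiment_i(\cdot))$, whose domain is the signal space of the actual $\experiment$, and the direct mechanism $(\quantity,\transfer)$ itself is induced by the equilibrium played under that $\experiment$. So for a fixed auction with an intermediate, informative $\experiment$, the objects ``$(\privateExperiment)$-credible'' and ``$(\publicExperiment)$-credible'' that your monotonicity lemma pivots through are simply not defined; and your proofs of the two equivalences invoke exactly the features of the extreme regimes (``real bidders observe nothing,'' ``real bidders observe every earlier action exactly''), so they establish nothing about an auction whose bidders have partial signals. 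Consequently, monotonicity plus the two equivalences cannot deliver the first and third sentences of the proposition, which quantify over arbitrary $\experiment$. One could try to repair this by working purely at the direct-mechanism level (\cref{lem:direct_cred}) and treating the safe-deviation set as a free parameter, but then the equivalences must be proved in that counterfactual generality --- which amounts to carrying out four separate directional arguments, precisely what the paper does (sentences 1 and 3 for arbitrary $\experiment$ via \cref{lem:ssp_val_equiv} and the contrapositive construction feeding \cref{lem:direct_cred}; sentences 2 and 4 only for auctions whose structure literally is $\privateExperiment$ or $\publicExperiment$, the latter case additionally split by \cref{lem:wsp_not_ssp}). The reduction therefore buys nothing even if patched.

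The second gap is your claimed identification of safe auctioneer deviations with shill deviations (``phantom-bid insertions exhaust the safe deviations''). This is false: a safe deviation may tell \emph{different, mutually inconsistent} fabricated profiles $\reportedValue{i}$ to different bidders, subject only to the feasibility constraint $\sum_i \quantity_i(\reportedValue{i}) \le 1$; in particular, even under $\publicExperiment$ the lies told to bidder $k^*$ about bidders who move \emph{after} her are unconstrained by any signal, and in the static case the auctioneer can tell the winner one profile and everyone else another. No single profile of actual shill bids reproduces such deviations, so the directions ``strongly shill-proof $\Rightarrow$ $\experiment$-credible'' and ``weakly shill-proof $\Rightarrow$ $(\publicExperiment)$-credible'' require a reduction step your sketch never supplies: using orderliness, the winner-paying property (\cref{lem:winner_paying}), ex-post IR, and ex-post monotonicity of transfers, one isolates the unique paying bidder $k^*$, notes that only the lie $\reportedValue{k^*}$ affects revenue, pushes its coordinates for later bidders down to $0$ by monotonicity, and then chooses $\realBidders=\cbr{k^*}$ (resp.\ $\realBidders=\set{k^*}$) so that the residual discrepancy is a single consistent shill misreport contradicting \cref{lem:ssp_val_equiv} (resp.\ \cref{lem:wsp_val_equiv_general}). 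A minor further point: the paper's $\experiment$-credibility is an ex-post, profile-by-profile condition, not the expected-revenue criterion of \cref{def:credible}, so the quantifier tension you flag is located differently than you expect --- it arises in ``$\experiment$-credible $\Rightarrow$ weakly shill-proof,'' where an expectation-profitable shill deviation must first be converted into an ex-post profitable misreport before the violating safe lie can be constructed.
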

\cref{prop:sa_cred_nesting} implies that \cref{thm:wsp_trilemma} is a generalization of the credibility trilemma \cite[Theorem 1]{akbarpourLi20}.

\subsection{Dominant Strategies} \label{ss:xp_v_ds}

This paper focuses on ex-post strategies.
However, all our results can be extended to dominant strategies as well.
To extend \cref{thm:dutch_ssp} from an ex-post strategy not to shill to a dominant strategy is straight-forward: shill bidding in the Dutch auction always leads to $0$ revenue, which means it is a weakly dominated strategy, regardless of what other bidders do.
Further, since there exist no other auctions besides the Dutch auction that have an ex-post strategy not to shill bid, there can exist no other auction with a dominant strategy not to shill bid.

To extend \cref{prop:wsp_sp_upper_bound} from an ex-post equilibrium to a dominant strategy equilibrium for real bidders, some care must be taken in considering the information sets of different bidders when they take actions.
However, we can provide dominant-strategy equilibria versions of the English and ascending, screening auctions by assuming bidders move simultaneously each round of the English auction, as well in the second-price auction phase of the ascending, screening auction.
\cref{thm:wsp_trilemma} holds if we were to instead consider dominant strategies for real bidders as dominant strategy incentive compatibility is a stronger condition than ex-post incentive compatibility.

\bibliographystyle{ecta-fullname}
\bibliography{bib}

\iftoggle{compressLines}{}{\newpage}

\appendix
\section{Model (\cref{sec:model}) Appendix}
\begin{definition} \label{def:game_eq}
	Consider any set of real bidders $\realBidders$ and tuple $(\game,\strategy)$.
	We restrict the set of potential deviations for shill bidders to
	\[ \Sigma_{\shillBidders} = \cbr{ \strategy_{\shillBidders}' : \forall\type_{-\shillBidders}, \exists \type_{\shillBidders} \st \p{\strategy_{\shillBidders}',\strategy_{-\shillBidders}(\type_{-\shillBidders})} = \strategy(\type_{\shillBidders},\type_{-\shillBidders}; \realBidders=\potentialBidders) }. \]
	Then, the tuple $(\game,\strategy)$ is an \textbf{auction equilibrium} if for all $i \in \realBidders$ and deviating strategies $\strategy_i'$,
	\[ \expect_{\type_{-i}',\tilde{\realBidders}}\sbr{u_i\p{\strategy\p{\type_i,\type_{-i}';\tilde{\realBidders}}}} \ge \expect_{\type_{-i}',\tilde{\realBidders}}\sbr{u_i\p{\strategy_i'\p{\type_i,\type_{-i}';\tilde{\realBidders}},\strategy_{-i}\p{\type_{i},\type_{-i}';\tilde{\realBidders}}}}, \]
	and for all $i \in \shillBidders$ and $\strategy_{\shillBidders}' \in \Sigma_{\shillBidders}$,
	\[ \expect_{\type_{-\shillBidders}'}\sbr{u_i\p{\strategy\p{0,\type_{-\shillBidders}';\realBidders}}} \ge \expect_{\type_{-\shillBidders}'}\sbr{u_i\p{\strategy_{\shillBidders}'\p{0,\type_{-\shillBidders}';\realBidders},\strategy_{-\shillBidders}\p{0,\type_{-\shillBidders}';\realBidders}}}. \]
\end{definition}
In \cref{def:game_eq}, the strategy space $\Sigma_{\shillBidders}$ is restricting shill bidders to only take actions that could be played on-path by a real bidder.
In other words, shill bidders must act ``as-if'' they are real bidders and not take any actions that would definitively prove that they are shill bidders.
This restriction allows us to move to a direct mechanism where shill-proofness is defined as it being an equilibrium (ex-interim for \weak~shill-proofness, ex-post for \strong~shill-proofness) for all shill bidders to report $0$.
Note that if we were to enlarge the strategy space $\Sigma_{\shillBidders}$ to be the set $\hat{\Sigma}_{\shillBidders}$ of all strategy profiles, are main results would not change.
For \cref{thm:dutch_ssp}, \cref{prop:dutch_robust_wsp}, and \cref{prop:wsp_sp_upper_bound}, we are focused on augmented direct games and in such games $\Sigma_{\shillBidders} = \hat{\Sigma}_{\shillBidders}$.
For \cref{thm:wsp_trilemma}, we know that $\Sigma_{\shillBidders} \subset \hat{\Sigma}_{\shillBidders}$ and our impossibility result must still hold if the set of possible deviations by shill bidders is larger; thus, the theorem still holds.

Before we state our revelation principle in this context, we recall (with slight modification of notation) a definition and result from \cite{akbarpourLi20} that will be helpful in the proof.
\begin{definition}[\citet{akbarpourLi20}, Definition 2]\label{def:pruned}
	A game equilibrium $\p{\game,\strategy}$ is \textbf{pruned} if, for any history $h$:
	\begin{enumerate}[(i)]
		\item\label{cond:pruned_1} There exists $\type$ such that $\history \preceq \terminalState(\strategy(\type;\potentialBidders))$.
		\item\label{cond:pruned_2} If $\history \notin \possibleTerminalHistories$, then $\card{\succz(\history)} \ge 2$.
		\item\label{cond:pruned_3} If $\history \notin \possibleTerminalHistories$, then for $i = \playerFunction(\history)$, there exists $\type_{i}$,$\type_i'$, and $\type_{-i}$ such that
		\begin{enumerate}[(a)]
			\item $\history \prec \terminalState(\strategy(\type;\potentialBidders))$,
			\item $\history \prec \terminalState(\strategy(\type_{i}',\type_{-i};\potentialBidders))$, and
			\item $(\actionQuantity,\actionTransfer) \p{\strategy(\type;\potentialBidders)} \ne (\actionQuantity,\actionTransfer) \p{\strategy(\type_{i}',\type_{-i};\potentialBidders)}$.
		\end{enumerate}
	\end{enumerate}
\end{definition}

\begin{lemma}[\citet{akbarpourLi20}, Proposition 1] \label{lem:pruned}
	If $\p{\game,\strategy}$ is a game equilibrium, then there exists a game equilibrium $\p{\game',\strategy'}$ that is pruned and for all $\type$, $(\actionQuantity,\actionTransfer) \p{\strategyParameterized} = (\actionQuantity',\actionTransfer') \p{\strategy'(\type;\potentialBidders)}$.
\end{lemma}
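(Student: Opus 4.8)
The plan is to prove the lemma by a finite sequence of outcome-preserving ``tree surgeries'' applied to $\p{\game,\strategy}$. Each operation leaves the induced outcome map $\type \mapsto \p{\actionQuantity,\actionTransfer}\p{\strategyParameterized}$ unchanged while strictly shrinking the game tree, so that the procedure terminates at a game $\p{\game',\strategy'}$ satisfying all three clauses of \cref{def:pruned}. The three operations correspond one-to-one with the three clauses. First, I would delete every off-path history, i.e., every $\history$ for which no $\type$ satisfies $\history \preceq \terminalState\p{\strategyParameterized}$; this directly enforces the first clause of \cref{def:pruned}. Second, I would contract every non-terminal history $\history$ with $\card{\succz(\history)} = 1$ by splicing its unique successor directly onto its predecessor, enforcing the second clause. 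Third, at every non-terminal history $\history$ where the mover cannot affect the outcome through her own report---that is, where the third clause fails, so that for $i = \playerFunction(\history)$ all type profiles that keep $\history$ on-path and differ only in $\type_i$ induce the same terminal outcome---I would merge the mover's on-path actions into a single branch and then contract the resulting forced move.

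Each operation is outcome-preserving essentially by construction: deleting off-path histories never touches any realized play; contracting a forced move changes no player's attainable outcomes; and merging the branches at an uninformative node is legitimate precisely because the failure of the third clause guarantees that the discarded branches all terminate at the same $\p{\actionQuantity,\actionTransfer}$. I would then take $\strategy'$ to be the projection of $\strategy$ onto the surviving histories. Because after the first operation no off-path histories remain, $\strategy'$ is fully pinned down on-path and there are no off-path beliefs left to specify; moreover every action available in $\game'$ leads to a history that is on-path for some $\type$, so every deviation in $\game'$ is well defined and payoff-equivalent to a deviation already available in $\game$. Since we only ever \emph{remove} or \emph{merge} actions and never introduce new ones, the deviation set in $\game'$ is---up to payoff equivalence---a subset of that in $\game$; hence the equilibrium inequalities that held for $\strategy$ continue to hold for $\strategy'$, and $\p{\game',\strategy'}$ is again a game equilibrium.

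For termination I would use the number of histories in $\possibleHistories$ as a potential that strictly decreases under each of the three operations, with finite depth guaranteeing that the process halts; at a fixed point all three clauses of \cref{def:pruned} hold simultaneously, since any surviving violation would trigger one of the three operations. The bookkeeping I expect to be most delicate---and the main obstacle---is verifying that \textbf{perfect recall} and the consistency of the information-set structure $\informationSet(\cdot)$ survive the contracting and merging steps, since deleting, fusing, or splicing nodes can in principle split or merge information sets in ways that violate recall or that alter what a \emph{later} mover (who may have distinguished the merged actions) can condition on. I would handle this by performing every surgery at the granularity of whole information sets---deleting, contracting, or collapsing an entire information set at once rather than individual histories---so that any two histories indistinguishable to a player before the operation remain indistinguishable afterward, and the recall relation inherited from $\game$ is reproduced verbatim in $\game'$. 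Given the outcome-preserving property of each operation, the remaining steps (outcome equality for all $\type$ and the equilibrium inequalities) are then routine.
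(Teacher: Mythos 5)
First, a framing point: the paper never proves this lemma itself---it is imported verbatim as Proposition~1 of \citet{akbarpourLi20}---so there is no internal proof to compare against, and your proposal must be judged against the standard pruning construction. Your architecture (delete off-path histories, contract forced moves, collapse outcome-irrelevant moves, iterate with the number of histories as a decreasing potential) is exactly that construction, and two of the three operations are sound as you describe them: after deleting off-path histories, every deviation that survives in $\game'$ is a ``type-mimicry'' deviation that already existed in $\game$ with identical payoff, and contracting a single-successor node changes nothing. Your handling of perfect recall is also reasonable.

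The genuine gap is in the third operation, the one enforcing \cref{cond:pruned_3} of \cref{def:pruned}. When that condition fails at $\history$ for $i=\playerFunction(\history)$, the branches you discard are in general whole subtrees in which \emph{other} players move, not terminal nodes, so ``the discarded branches all terminate at the same $(\actionQuantity,\actionTransfer)$'' is not meaningful as stated; what is true is only that, for each fixed $\type_{-i}$, the \emph{on-path} outcome is the same across branches. Consequently (a) you must re-route the types of $i$ whose branch was deleted through the surviving branch and specify their continuation play there (e.g., have them mimic a type of $i$ that originally chose the surviving action, since $i$ may move again below $\history$)---your sketch never says what these types do; and (b) your key claim that ``every deviation in $\game'$ is payoff-equivalent to a deviation already available in $\game$'' is false after such a merge. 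For a player $j \ne i$ moving inside the surviving subtree, the strategy set in $\game'$ is indeed a restriction of that in $\game$, but the payoff of a given deviation changes: in $\game'$ she faces the re-routed, mimicking types of $i$, whereas in $\game$ those types played into a deleted subtree with possibly different structure, and the outcome-equivalence granted by the failure of \cref{cond:pruned_3} constrains on-path play only, not play following $j$'s deviations. One can check on a three-node example that no single deviation in $\game$ reproduces the $\game'$ deviation payoff. The conclusion still holds, but it requires an extra step your proposal is missing: the gain of a $\game'$-deviation by $j$ decomposes as a non-negative combination of gains of \emph{localized} deviations in $\game$ (deviate only inside the surviving subtree; deviate only elsewhere), each of which is non-positive because $(\game,\strategy)$ is an equilibrium under the ex-ante criterion of \cref{def:game_eq}. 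Without this argument, the equilibrium property of $(\game',\strategy')$ is asserted rather than proven.
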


\begin{lemma}[Augmented Revelation Principle] \label{lem:aug_rev_prin}
	For every game equilibrium $\p{\game,\strategy}$ there exists an auction $\directAuction$ that meets the following conditions:
	\begin{enumerate}[(i)]
		\item\label{cond:rev_prin_1} There exists a direct mechanism $(\quantity,\transfer)$: for all $\type$,
		\iftoggle{compressLines}{$\quantity(\type) = \actionQuantity(\strategy(\type;\potentialBidders))$ and $\transfer(\type) = \actionTransfer(\strategy(\type;\potentialBidders)).$}{	\[ \quantity(\type) = \actionQuantity(\strategy(\type;\potentialBidders)) \text{ and } \transfer(\type) = \actionTransfer(\strategy(\type;\potentialBidders)). \]}
		
		\item\label{cond:rev_prin_2} There exists a choice menu rule $\menuRule$ that is a function of the potential values $\possibleValues = \possibleValues_1 \times \cdots \times \possibleValues_{\numBidders}$ and bidder $\player$. This rule has an output of $L \ge 2$ choices characterized as $\cbr{\p{\choice_{\ell}, \nextPlayer_{\ell}}}_{\ell \in \set{L}}$ where: 
		\begin{enumerate}[(a)]
			\item\label{cond:rev_prin_2_a} $\cbr{\choice_{\ell}}_{\ell \in \set{L}}$ forms a partition of~$\possibleValues_{\player}$, $\nextPlayer_\ell \in \p{\potentialBidders \cup \cbr{\emptyset}} \backslash \cbr{\player}$, and $\nextPlayer = \emptyset$ signifies the game has ended.
			
			\item\label{cond:rev_prin_2_b} For any $\ell$ such that $\nextPlayer_\ell \ne \emptyset$, let $\hat{\possibleValues}^{\ell} = \p{\possibleValues_1,\ldots,\possibleValues_{\player-1},\choice_\ell,\possibleValues_{\player+1},\ldots,\possibleValues_{\numBidders}}$. Then, for any such $\ell$, there exists $\type_{\player_\ell},\type_{\player_\ell}' \in \hat{\possibleValues}^{\ell}_{\player_\ell}$, and $\type_{-\player_\ell} \in \hat{\possibleValues}^{\ell}_{-\player_\ell}$ such that 
			\iftoggle{compressLines}{$(\quantity,\transfer)\p{\type_{\player_\ell},\type_{-\player_\ell}} \ne (\quantity,\transfer) \p{\type_{\player_\ell}',\type_{-\player_\ell}}.$}{\[(\quantity,\transfer)\p{\type_{\player_\ell},\type_{-\player_\ell}} \ne (\quantity,\transfer) \p{\type_{\player_\ell}',\type_{-\player_\ell}}.\]}
			If $\type_{\player} \in \choice_{\ell}$, then the next player in the game is $\nextPlayer_{\ell}$ and the menu presented to her is $\menuRule(\hat{\possibleValues}^{\ell},\nextPlayer_{\ell})$.
			
			\item\label{cond:rev_prin_2_c} If $\ell$ is such that $\nextPlayer_\ell = \emptyset$, then for all $\type,\type' \in \hat{\possibleValues}_{\ell}$,
			\iftoggle{compressLines}{$\p{\quantity,\transfer} \p{\type} = \p{\quantity,\transfer} \p{\type'}.$}{\[ \p{\quantity,\transfer} \p{\type} = \p{\quantity,\transfer} \p{\type'}. \]}
			
			\item\label{cond:rev_prin_2_d} The first player to take an action is $\player_0$, who is presented the menu $\menuRule(\typeSpace^{\numBidders},\player_0)$.
		\end{enumerate}
	\end{enumerate}
\end{lemma}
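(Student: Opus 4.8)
The plan is to read the menu rule directly off the extensive-form tree of $\p{\game,\strategy}$, using as the first argument of $\menuRule$ the set of type profiles that are consistent with each history. First I would invoke \cref{lem:pruned} to assume without loss that $\p{\game,\strategy}$ is pruned, since pruning preserves the outcome map $\type \mapsto \p{\actionQuantity,\actionTransfer}\p{\strategyParameterized}$ and reproducing that map is all condition (i) asks. I define the direct mechanism by $\quantity(\type) = \actionQuantity\p{\strategyParameterized}$ and $\transfer(\type) = \actionTransfer\p{\strategyParameterized}$; this is well defined because $\strategy$ is a pure-strategy profile and so determines a unique terminal history for each $\type$. For each history $\history$, set $\possibleValues(\history) = \cbr{\type : \history \preceq \terminalState\p{\strategyParameterized}}$, the profiles whose equilibrium path passes through $\history$.

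The crux is to show that each $\possibleValues(\history)$ is a product set $\possibleValues_1(\history) \times \cdots \times \possibleValues_{\numBidders}(\history)$, which I would prove by induction on the depth of $\history$. At the root, $\possibleValues(\history_\emptyset) = \typeSpace^{\numBidders}$. For the inductive step, let $i = \playerFunction(\history)$ and suppose $\possibleValues(\history)$ is a product set. Every profile in $\possibleValues(\history)$ reaches $\history$ and hence shares the information set $\informationSet(\history)$, so by purity the action that $i$ plays at $\history$ is a function of $\type_i$ alone. Writing $\choice_{\historyAction} = \cbr{\type_i \in \possibleValues_i(\history) : i \text{ plays } \historyAction \text{ at } \history}$ for each action $\historyAction \in \actionSet(\history)$, the profiles reaching the successor of $\history$ under $\historyAction$ are exactly $\choice_{\historyAction} \times \prod_{j \ne i}\possibleValues_j(\history)$, again a product set; this also identifies $\possibleValues$ at that successor with the set $\hat{\possibleValues}^{\ell}$ appearing in condition (ii)(b).

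With factorization established, I define $\menuRule\p{\possibleValues(\history),i}$ to be the partition $\cbr{\choice_{\historyAction}}$ over the actions that are actually played (discarding any empty cell), with next player equal to $\playerFunction$ of the corresponding successor when that successor is non-terminal and equal to $\emptyset$ when it is terminal. The partition requirement (ii)(a) holds because a pure strategy partitions $\possibleValues_i(\history)$ by action, and $L \ge 2$ follows from pruning: a non-terminal $\history$ has $\card{\succz(\history)} \ge 2$ successors, each reached by some profile, so at least two cells are non-empty. Requirement (ii)(c) is immediate, since a terminal successor carries a single outcome that every profile in the corresponding cell inherits. Requirement (ii)(b) is precisely pruned condition (iii) of \cref{def:pruned} applied at the (non-terminal) successor: its mover is the stipulated next player and its consistent profiles are $\hat{\possibleValues}^{\ell}$, so pruning supplies two types of that player together with a completion yielding distinct outcomes. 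Requirement (ii)(d) is the base case. Finally, running the menu rule on any input $\type$ traces exactly the equilibrium path of $\strategyParameterized$, terminating at the same terminal history and reproducing $\p{\quantity,\transfer}(\type)$, which closes condition (i).

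I expect the main obstacle to be the factorization claim together with the well-definedness of $\menuRule$ as a function of the pair $\p{\possibleValues,\player}$ rather than of the history itself. The argument that the current mover's action depends only on her own type relies on purity and on the fact that passing through $\history$ fixes $\informationSet(\history)$; the consistency of reading the partition off the pair $\p{\possibleValues,\player}$ rests on the observation that $\possibleValues(\cdot)$ strictly refines along every branch (one coordinate shrinks at each move), so the family $\cbr{\possibleValues(\history)}$ is nested down each path and the construction is unambiguous along the equilibrium tree, which is all that condition (i) requires.
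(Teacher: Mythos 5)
Your proposal is correct and follows essentially the same route as the paper's proof: both define the direct mechanism pointwise from the equilibrium outcomes and then obtain the menu rule by invoking the Akbarpour--Li pruning result (\cref{lem:pruned}), identifying $L \ge 2$ with \cref{cond:pruned_1,cond:pruned_2} of \cref{def:pruned} and \crefdefpart{lem:aug_rev_prin}{cond:rev_prin_2_b} with \cref{cond:pruned_3}. If anything, your induction showing that the set of profiles consistent with a history factors as a product set (via purity and the shared information set of the mover) makes explicit a step the paper leaves implicit when it asserts that actions ``can always be labeled as classes of a partition of the remaining possible values.''
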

\begin{proof}
	To prove \cref{cond:rev_prin_1}, we simply construct $(\quantity,\transfer)$ by iterating over all possible $\type$ and defining $(\quantity,\transfer)$ as the outcome of $\strategyParameterized$ in $\game$.
	
	To prove \cref{cond:rev_prin_2}, we first observe that by \cref{def:game_eq}, shill bidders must act ``as-if'' they were real bidders and that we have restricted to pure strategies.
	Thus, we can always label actions as classes $\p{\choice_{\ell}, \nextPlayer_{\ell}}$ of a partition of the remaining possible values for the current player $\player$ and satisfy \cref{cond:rev_prin_2_a}.
	The fact that $L \ge 2$ is equivalent to \cref{cond:pruned_1,cond:pruned_2} of \cref{def:pruned}, and \cref{cond:pruned_3} of \cref{def:pruned} is equivalent to \cref{cond:rev_prin_2_b} here.
	We can then apply \cref{lem:pruned} to find a game that satisfies these properties.
	\cref{cond:rev_prin_2_c} follows from the fact that $\game$ is well-defined (with each terminal history associated with a single outcome).
	\cref{cond:rev_prin_2_d} is simply mapping the first player in $\game$ to $\player_0$ and the auctioneer has no information on bidders' values yet.
\end{proof}

\section{\Strongly~Shill-Proof Auctions (\cref{sec:ssp}) Appendix}
\begin{lemma} \label{lem:winner_paying}
	An optimal auction $(\game,\strategy)$ is \textbf{winner-paying}: For all $i$ and $\val$,
	\[ \actionQuantity_i\p{\strategyParameterized} = 0 \implies \actionTransfer_i\p{\strategyParameterized} = 0. \]
\end{lemma}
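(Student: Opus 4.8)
The plan is to split the claim into an easy direction coming from ex-post individual rationality and a harder direction coming from optimality. For the easy direction, fix $\val$ and a bidder $i$ with $\actionQuantity_i(\strategyParameterized)=0$. Ex-post IR reads $\actionQuantity_i(\strategyParameterized)\val_i - \actionTransfer_i(\strategyParameterized)\ge 0$, which with $\actionQuantity_i(\strategyParameterized)=0$ collapses to $-\actionTransfer_i(\strategyParameterized)\ge 0$, i.e.\ $\actionTransfer_i(\strategyParameterized)\le 0$. Hence a losing bidder is never \emph{charged}, and all that remains is to rule out the seller strictly \emph{paying} a losing bidder, which is where optimality must enter.

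First I would pass to the direct mechanism $(\quantity,\transfer)$ furnished by the augmented revelation principle (\cref{lem:aug_rev_prin}) and write the seller's expected revenue as $\sum_i \expect_\type[\transfer_i(\type)]$. Suppose toward a contradiction that on a positive-probability set of profiles some bidder $i$ loses yet receives $\transfer_i(\type)<0$. Consider the candidate mechanism $(\quantity,\transfer^0)$ with the same allocation and the same winning payments but with every losing payment zeroed out, $\transfer^0_i(\type)=\quantity_i(\type)\,\transfer_i(\type)$. Since each losing transfer is non-positive and strictly negative on a positive-probability event, $(\quantity,\transfer^0)$ yields strictly higher expected revenue than the original auction; if it can be realized as an auction equilibrium, this contradicts optimality and forces $\transfer_i(\type)=0$ whenever $\actionQuantity_i=0$.

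The hard part is showing that deleting the subsidies to losers preserves the equilibrium conditions. Monotonicity and ex-post IR transfer immediately to $(\quantity,\transfer^0)$, because winning payments are untouched and losing payments only rise to zero. Incentive compatibility is the genuine obstacle: removing the money a losing type used to collect can, a priori, make a report that \emph{wins} the item newly attractive — precisely the configuration $\actionQuantity_i=0$, $\transfer_i<0$ under scrutiny. I would discharge this at the interim level rather than by the naive pointwise edit above. Optimality pins down the allocation rule $\quantity^*$ up to the orderly tie-break (\cref{def:orderly}), and given $\quantity^*$ the discrete envelope characterization of \cref{lem:ex_int_funcs} determines the revenue-maximizing interim payment of each type, attained exactly when the lowest type earns zero rent. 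The canonical threshold (winner-paying) transfer rule realizing those interim payments is incentive compatible, individually rational, and monotone. The crux of the write-up is then to verify that this canonical winner-paying rule is revenue-optimal and that re-deriving the transfers in this way never needs to charge a winner above their own value — so ex-post IR and the discrete IC constraints both survive — which lets us conclude that an optimal auction can subsidize no loser on a positive-probability event, yielding $\actionTransfer_i=0$ for every loser.
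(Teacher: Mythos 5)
You reproduce the paper's argument for the first half: ex-post IR immediately gives $\actionTransfer_i \le 0$ whenever $\actionQuantity_i = 0$, and your zeroed-out rule $\transfer^0_i(\type)=\quantity_i(\type)\,\transfer_i(\type)$ is exactly the perturbation on which the paper's own (very short) proof rests --- the paper simply asserts that raising a loser's negative transfer to zero leaves the other bidders' equilibrium constraints slack, so their play is unchanged, while the transfer from bidder $i$ strictly rises. You are also right to flag that the genuine issue is bidder $i$'s \emph{own} incentive constraints, which the paper's proof never addresses.

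However, your proposed repair does not close that gap. At best the interim detour shows that \emph{some} winner-paying rule attains the maximal revenue: the allocation is pinned down, \cref{lem:ex_int_funcs} pins down the revenue-maximizing interim payments, and the canonical threshold rule implements them while respecting IC, IR, and monotonicity. The lemma, however, quantifies over \emph{every} optimal auction, so what you must produce is a \emph{strict} revenue loss (or a violated constraint) for any auction that subsidizes a loser with positive probability. Your sketch never produces that inequality; it is exactly the step you defer (``the crux of the write-up is then to verify\ldots''), and the deferred claim --- that the canonical winner-paying rule is optimal and never charges a winner above value --- does not imply it: the existence of an optimal winner-paying rule is compatible, a priori, with the existence of other optimal rules that are not winner-paying. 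The danger is concrete: with discrete types, interim IC pins down only \emph{interim} transfers, so one must exclude a mechanism that pays a losing type a subsidy and recoups it by raising that same type's winning payments (ex-post IR tolerates this whenever those payments sit strictly below the winner's value), leaving every interim payment, hence revenue, unchanged. Whatever rules such rearrangements out must come from the remaining ex-post requirements --- transfer monotonicity and the shill-bidders' best-response condition embedded in \cref{def:game_eq} --- and neither your sketch nor your stated ``crux'' engages them. (Two smaller points: orderliness, \cref{def:orderly}, is an assumption rather than a consequence of optimality; and without regularity the optimal allocation need not be a simple reserve rule, so ``optimality pins down $\quantity^*$'' itself needs care.) In short, your instinct about where the argument is thin is sound, but the proposal stops precisely where the work has to be done.
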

\begin{proof}
	By the ex-post IR constraint, when
	$\actionQuantity_i\p{\strategyParameterized} = 0$, we have $\actionTransfer_i\p{\strategyParameterized} \le 0.$
	It then follows from optimality that $\actionTransfer_i\p{\strategyParameterized} = 0$.
	To see this, note that for bidder $j \ne i$, equilibrium constraints on bidder $j$ slacken when moving from $\actionTransfer_i < 0$ to $\actionTransfer_i = 0$ and so her play will remain the same.
	Meanwhile the transfer from bidder $i$ strictly increases moving from $\actionTransfer_i < 0$ to $\actionTransfer_i = 0$.
\end{proof}

\subsubsection*{Proof of \cref{lem:ssp_pab}.}
Towards contradiction, suppose there exists a \strongly~shill-proof auction $\directAuction$, player $\player$, and values $\type_{\player},\type_{-\player},\type_{-\player}'$ such that $\quantity_{\player}(\type) = \quantity_{\player}(\type_{\player},\type_{-\player}')$, but $\transfer_{\player}(\type) \ne \transfer_{\player}(\type_{\player},\type_{-\player}')$.
WLOG, suppose $\transfer_{\player}(\type) > \transfer_{\player}(\type_{\player},\type_{-\player}')$.
Now by \cref{lem:winner_paying}, $\player$ can only have two different transfers if that player wins the item under the allocation. 
Then, take $\realBidders = \cbr{\player}$ and by monotonicity, $\transfer_{\player}(\type) > \transfer_{\player}(\type_{\player},\type_{-\player}') \ge \transfer(\type_{\player},0)$ and thus shilling increases revenue and the auction is not \strongly~shill-proof.
\qed

Given our auction $\directAuction$, we define $\exintquantity_i(\type_i;\possibleValues)$ and $\exinttransfer_i(\type_i;\possibleValues)$ to be the ex-interim quantity and transfer rules, respectively, when bidder $i$ has value $\type_i$ and the set of potential values for all bidders is $\possibleValues$. 
\begin{lemma} \label{lem:ex_int_funcs}
	For every optimal auction $\optDirectAuction$, the ex-interim transfer rule for bidder $i$ is
	\[\exinttransfer_i(\type_i;\possibleValues) = \exintquantity_i(\type_i;\possibleValues) \type_i - \sum_{m : \type^{j_m} < \type_i}  \sbr{\exintquantity_i(\type^{j_m};\possibleValues) \cdot (\type^{j_{m+1}} - \type^{j_m})}, \]
	where  $\cbr{\type^{j_m}}_{m}$ are the ordered atoms of $\possibleValues_i$.
\end{lemma}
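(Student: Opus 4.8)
The plan is to derive the ex-interim transfer rule from the ex-post incentive compatibility (IC) constraints for a real bidder of value $\type_i$, exactly as one does in the continuous-types envelope argument, but adapted to discrete types where the IC constraints generically slack and optimality is used to select the transfer among the feasible ones. I would first fix bidder $i$ and a set $\possibleValues$ of potential value profiles with ordered atoms $\cbr{\type^{j_m}}_m$ for $\possibleValues_i$, and regard the auction from $i$'s ex-interim perspective: conditional on $\possibleValues$, bidder $i$ of true type $\type_i$ who reports $\type_i'$ receives expected allocation $\exintquantity_i(\type_i';\possibleValues)$ and pays $\exinttransfer_i(\type_i';\possibleValues)$. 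The IC constraints require that for every pair of atoms $\type^{j_m}, \type^{j_{m'}}$ of $\possibleValues_i$, truthful reporting is (weakly) optimal, i.e.\ $\exintquantity_i(\type^{j_m})\type^{j_m} - \exinttransfer_i(\type^{j_m}) \ge \exintquantity_i(\type^{j_{m'}})\type^{j_m} - \exinttransfer_i(\type^{j_{m'}})$, where I suppress the $\possibleValues$ argument.

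\textbf{The core argument} proceeds by a telescoping/induction on the ordered atoms. Monotonicity (which the paper assumes for the equilibrium, and which is standard for optimal auctions) gives that $\exintquantity_i$ is non-decreasing in type, so the binding IC constraints are the local downward ones: type $\type^{j_{m+1}}$'s incentive not to mimic the adjacent lower type $\type^{j_m}$. The crucial discrete-types subtlety is that there is a \emph{family} of IC transfer rules, because the downward constraint $\exinttransfer_i(\type^{j_{m+1}}) - \exinttransfer_i(\type^{j_m}) \le (\exintquantity_i(\type^{j_{m+1}}) - \exintquantity_i(\type^{j_m}))\,\type^{j_{m+1}}$ and the upward constraint $\exinttransfer_i(\type^{j_{m+1}}) - \exinttransfer_i(\type^{j_m}) \ge (\exintquantity_i(\type^{j_{m+1}}) - \exintquantity_i(\type^{j_m}))\,\type^{j_m}$ together leave a genuine interval for each step, rather than pinning it to equality as the envelope theorem does in the continuous case. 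Here I invoke optimality: to maximize expected revenue the designer sets each transfer as \emph{high} as IC permits, i.e.\ the downward constraint binds with equality at each step, giving the recursion
\[ \exinttransfer_i(\type^{j_{m+1}}) - \exinttransfer_i(\type^{j_m}) = \exintquantity_i(\type^{j_{m+1}})\,\type^{j_{m+1}} - \exintquantity_i(\type^{j_m})\,\type^{j_{m+1}}. \]

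\textbf{To finish}, I would anchor the recursion at the lowest atom. Since $\type^1 = 0$ is in the support and a bidder of value $0$ is an (effective) shill who never enters nontrivial bids, individual rationality plus winner-paying (\cref{lem:winner_paying}) force $\exinttransfer_i(\lowValue_i) = \exintquantity_i(\lowValue_i)\cdot \lowValue_i$; more simply, IR binds for the lowest type so that its information rent is zero. Summing the telescoping recursion from the bottom atom up to $\type_i$ and regrouping the terms yields
\[ \exinttransfer_i(\type_i;\possibleValues) = \exintquantity_i(\type_i;\possibleValues)\,\type_i - \sum_{m:\type^{j_m} < \type_i} \exintquantity_i(\type^{j_m};\possibleValues)\,(\type^{j_{m+1}} - \type^{j_m}), \]
which is precisely the claimed formula. \textbf{The main obstacle} I anticipate is not the algebra of the sum but the careful justification that, within the extensive-form auction, the relevant IC constraints really are the ex-interim ones conditional on $\possibleValues$ and that optimality selects the upper (binding-downward) endpoint of the IC interval uniformly across all histories---one must argue that if at any history the transfer fell strictly below this value, the designer could raise it without violating any downward-adjacent constraint, strictly increasing revenue, as foreshadowed by the parenthetical commitment-deviation remark accompanying \cref{lem:aff_ex_int_funcs}.
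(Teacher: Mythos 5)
Your proposal is correct and follows essentially the same route as the paper's proof: adjacent (local) IC constraints on the discrete type grid leave an interval of admissible transfer increments at each step, optimality selects the revenue-maximizing endpoint (downward constraint binding, zero rent at the lowest atom of $\possibleValues_i$), and telescoping yields the formula, exactly as the paper does following Elkind's argument with the choice $U_i(\type^{j_1};\possibleValues)=0$ and $\tilde{\exintquantity}_i = \exintquantity_i(\type^{j_{m-1}};\possibleValues)$. The one slight imprecision is your anchor: IR and winner-paying alone give only $\exinttransfer_i(\lowValue_i;\possibleValues) \le \exintquantity_i(\lowValue_i;\possibleValues)\,\lowValue_i$, and it is revenue maximization (which you invoke throughout) that forces equality, so this is the same optimality-based selection the paper makes rather than a consequence of feasibility.
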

\begin{proof}
	To prove that $\exinttransfer$ has the claimed form, we will consider a specific non-truthful reporting: if a bidder has value $\type^m$, she commits to mis-reporting (selecting partitions) $\type^{m'}$ for the rest of the game.
	We now follow the proof of Theorem 1 of \cite{elkind07}.
	Since our direct mechanism is an equilibrium for real bidders, we must have that
	\begin{gather*}
		\exintquantity_i(\type^{j_m};\possibleValues)\type^{j_m} - \exinttransfer_i(\type^{j_m};\possibleValues) \ge \exintquantity_i(\type^{j_{m-1}};\possibleValues)\type^{j_m} - \exinttransfer_i(\type^{j_{m-1}};\possibleValues), \text{ and} \\
		\exintquantity_i(\type^{j_{m-1}};\possibleValues)\type^{j_{m-1}} - \exinttransfer_i(\type^{j_{m-1}};\possibleValues) \ge \exintquantity_i(\type^{j_m};\possibleValues)\type^{j_{m-1}} - \exinttransfer_i(\type^{j_m};\possibleValues).
	\end{gather*}
	Defining $U_i$ to be the ex-interim utility for bidder $i$, the preceding expressions become:
	\begin{gather*}
		U_i(\type^{j_m};\possibleValues) \ge U_i(\type^{j_{m-1}};\possibleValues) + (\type^{j_m} - \type^{j_{m-1}})\exintquantity_i(\type^{j_{m-1}};\possibleValues), \text{ and} \\
		U_i(\type^{j_{m-1}};\possibleValues) \ge U_i(\type^{j_{m}};\possibleValues) - (\type^{j_m} - \type^{j_{m-1}})\exintquantity_i(\type^{j_m};\possibleValues).
	\end{gather*}
	Thus,
	\iftoggle{compressLines}{$(\type^{j_m} - \type^{j_{m-1}})\exintquantity_i(\type^{j_{m-1}};\possibleValues) \le U_i(\type^{j_m};\possibleValues) - U_i(\type^{j_{m-1}};\possibleValues) \le (\type^{j_m} - \type^{j_{m-1}})\exintquantity_i(\type^{j_m};\possibleValues).$}{\[ (\type^{j_m} - \type^{j_{m-1}})\exintquantity_i(\type^{j_{m-1}};\possibleValues) \le U_i(\type^{j_m};\possibleValues) - U_i(\type^{j_{m-1}};\possibleValues) \le (\type^{j_m} - \type^{j_{m-1}})\exintquantity_i(\type^{j_m};\possibleValues) .\]}
	Hence, any IC mechanism is such that
	\begin{align*}
		U_i(\type^{j_m};\possibleValues) &= U_i(\type^{j_1}; \possibleValues) + \sum_{k=2}^{m} (\type^{j_m} - \type^{j_{m-1}})\tilde{\exintquantity}_i(\type^{j_m};\possibleValues) \\
		&\text{where } \tilde{\exintquantity}_i(\type^{j_m};\possibleValues) \in \sbr{\exintquantity_i(\type^{j_{m-1}};\possibleValues),\exintquantity_i(\type^{j_m};\possibleValues)}.
	\end{align*}
	Therefore, we have that
	\begin{equation} \label{eq:ic_transfer}
		\exinttransfer_i(\type^{j_m};\possibleValues) = \exintquantity_i(\type^{j_m};\possibleValues)\type^{j_m} - U_i(\type^{j_1}; \possibleValues) - \sum_{k=2}^{m} (\type^{j_m} - \type^{j_{m-1}})\tilde{\exintquantity}_i(\type^{j_m};\possibleValues).
	\end{equation}
	By the ex-post IR condition, we have $U_i(\type^{j_1}; \possibleValues) \ge 0$ for all $\possibleValues$.
	So, solving for the optimal transfer rule from \cref{eq:ic_transfer},
	\begin{align*}
		\exinttransfer_i^*(\type^{j_m};\possibleValues) &= \max_{U_i,\tilde{\exintquantity}} \sbr{\exintquantity_i(\type^{j_m};\possibleValues)\type^{j_m} - U_i(\type^{j_1}; \possibleValues) - \sum_{k=2}^{m} (\type^{j_m} - \type^{j_{m-1}})\tilde{\exintquantity}_i(\type^{j_m};\possibleValues)}  \\
		& \st U_i(\type^{j_1}; \possibleValues) \ge 0 \text{ and } \tilde{\exintquantity}_i(\type^{j_m};\possibleValues) \in \sbr{\exintquantity_i(\type^{j_{m-1}};\possibleValues),\exintquantity_i(\type^{j_m};\possibleValues)}.
	\end{align*}
	The solution to this maximization is
	\iftoggle{compressLines}{$U_i(\type^{j_1}; \possibleValues) = 0, \tilde{\exintquantity}_i(\type^{j_m};\possibleValues) = \exintquantity_i(\type^{j_{m-1}};\possibleValues).$}{\[ U_i(\type^{j_1}; \possibleValues) = 0, \tilde{\exintquantity}_i(\type^{j_m};\possibleValues) = \exintquantity_i(\type^{j_{m-1}};\possibleValues). \]}
	Thus, \cref{eq:ic_transfer} becomes
	\begin{align*}
		\exinttransfer_i(\type_i;\possibleValues) &= \exintquantity_i(\type_i;\possibleValues) \type_i - \sum_{m : \type^{j_m} < \type_i}  \sbr{\exintquantity_i(\type^{j_m};\possibleValues) \cdot (\type^{j_{m+1}} - \type^{j_m})}. &&\qedhere
	\end{align*}
\end{proof}

For any value choice $(\choice,\cdot) \in \menuRule(\cdot,\cdot)$, let us define $\lowchoice = \min_{w \in \choice} \cbr{w}$ and $\highchoice =\max_{w \in \choice} \cbr{w}$, respectively.
\begin{lemma}[Extended Pay-as-Bid] \label{lem:ssp_epab}
	Consider a \strongly~shill-proof and optimal auction $\optDirectAuction$. Fix $\possibleValues,\player$ and consider any $(\choice,\nextPlayer) \in \menuRule(\possibleValues,\player)$. If there exists $\type,\type' \in \possibleValues$ such that $\type_{\player},\type_{\player}' \in \choice$ and $\menuRule(\possibleValues,\player)$ is the last action $\player$ takes, then,
	\[ \quantity^*_{\player}(\type) = \quantity^*_{\player}(\type') = 1 \implies \transfer_{\player}(\type) = \transfer_{\player}(\type') = \frac{\exinttransfer_{\player}(\lowchoice;\possibleValues)}{\exintquantity_{\player}(\lowchoice;\possibleValues)},  \]
	i.e., transfers are constant conditional on allocation and are pinned down by the ex-interim outcome functions from the lowest type in the partition.
\end{lemma}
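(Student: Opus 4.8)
The plan is to exploit the hypothesis that $\menuRule(\possibleValues,\player)$ is player $\player$'s last action, which ``freezes'' her influence on the game once she selects $\choice$. First I would establish a \textbf{frozen-outcome} claim: for any $\type_{\player},\type_{\player}' \in \choice$ and any fixed $\type_{-\player}$, we have $\quantity^*_{\player}(\type_{\player},\type_{-\player}) = \quantity^*_{\player}(\type_{\player}',\type_{-\player})$ and $\transfer_{\player}(\type_{\player},\type_{-\player}) = \transfer_{\player}(\type_{\player}',\type_{-\player})$. The reason is that both $\type_{\player}$ and $\type_{\player}'$ are consistent with the history encoded by $\possibleValues$, so along the path to this history $\player$ makes identical selections; at the current node she selects the same cell $\choice$; and by hypothesis she never acts again. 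Since the continuation play of the other bidders depends only on the history (which records $\choice$) and on $\type_{-\player}$---not on $\player$'s exact type within $\choice$---the terminal history reached, and hence the entire outcome vector, is a function of $(\choice,\type_{-\player})$ alone.

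Given this, I would prove the two conclusions in turn. For the first (transfers constant conditional on winning), take the two profiles $\type=(\type_{\player},\type_{-\player})$ and $\type'=(\type_{\player}',\type_{-\player}')$ from the hypothesis. By the frozen-outcome claim I may replace each own-type by $\lowchoice \in \choice$ without changing $\player$'s allocation or transfer, so $\transfer_{\player}(\type)=\transfer_{\player}(\lowchoice,\type_{-\player})$, $\transfer_{\player}(\type')=\transfer_{\player}(\lowchoice,\type_{-\player}')$, and likewise $\quantity^*_{\player}(\lowchoice,\type_{-\player})=\quantity^*_{\player}(\lowchoice,\type_{-\player}')=1$. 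Now the own-type is equal (to $\lowchoice$) on both sides and the allocations agree, so pay-as-bid (\cref{lem:ssp_pab}) gives $\transfer_{\player}(\lowchoice,\type_{-\player})=\transfer_{\player}(\lowchoice,\type_{-\player}')$, and chaining the equalities yields $\transfer_{\player}(\type)=\transfer_{\player}(\type')=:c$.

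For the second conclusion I would identify $c$ with the claimed ratio using the ex-interim functions. The frozen-outcome claim (with own-type $\lowchoice$) shows that, conditional on $\type_{-\player}$, player $\player$ of type $\lowchoice$ either wins---paying exactly $c$ by the argument above, since all winning transfers equal $c$---or loses, paying $0$ by winner-paying (\cref{lem:winner_paying}). Taking the ex-interim expectation over $\type_{-\player}\in\possibleValues_{-\player}$ then gives $\exinttransfer_{\player}(\lowchoice;\possibleValues) = c\cdot\exintquantity_{\player}(\lowchoice;\possibleValues)$. Since the hypothesized winning profile $\type$ has positive probability and, by the frozen claim, $\player$ of type $\lowchoice$ also wins against that same $\type_{-\player}$, we have $\exintquantity_{\player}(\lowchoice;\possibleValues)>0$; dividing yields $c=\exinttransfer_{\player}(\lowchoice;\possibleValues)/\exintquantity_{\player}(\lowchoice;\possibleValues)$, as desired.

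The main obstacle is the frozen-outcome claim: it is the only step that genuinely uses the ``last action'' hypothesis together with the menu-rule structure from the revelation principle, and care is needed to argue that $\player$'s exact type within $\choice$ cannot leak into the continuation---neither through her own later moves, of which there are none, nor through the other bidders' conditioning, which sees only the cell $\choice$. Once this is in place, pay-as-bid upgrades ``constant in $\type_{-\player}$ for fixed own type'' to ``constant across $\choice$,'' and the identification of the constant is a routine averaging argument.
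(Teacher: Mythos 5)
Your proof is correct and follows essentially the same route as the paper's: the indistinguishability of types within the chosen cell (your frozen-outcome claim), combined with \cref{lem:ssp_pab}, yields constancy of the winning transfer, and \cref{lem:winner_paying} plus ex-interim averaging identifies that constant as $\exinttransfer_{\player}(\lowchoice;\possibleValues)/\exintquantity_{\player}(\lowchoice;\possibleValues)$. You merely make explicit the chaining through own-type $\lowchoice$ and the positivity of $\exintquantity_{\player}(\lowchoice;\possibleValues)$, which the paper leaves implicit.
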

\begin{proof}
	Since an auction cannot distinguish between values in the same choice set, we can apply \cref{lem:ssp_pab} to conclude that if $\menuRule(\possibleValues,\player)$ is the last action $\player$ takes, then $\transfer(\type)=\transfer(\type')$.
	To conclude the proof, we note that $\player$ wins no matter what her value is in $\choice$ and then apply \cref{lem:winner_paying} to conclude that $\exinttransfer_{\player}(\lowchoice;\possibleValues) = \transfer_{\player}(\type) \cdot \exintquantity_{\player}(\lowchoice;\possibleValues)$.
\end{proof}

\subsubsection*{Proof of \cref{prop:fpa_ssp}}
Under a static information structure, the single-action, first-price auction directly corresponds to the direct mechanism.
In a pay-as-bid mechanism, a shill bidder's actions have no way to influence the payments except by changing the allocation and by the definition of orderly, if a shill bidder changes the allocation she wins the auction, which is dominated.
To prove that any more informative structure is not strongly shill-proof, consider some $i,j$ and $\type$ where $\informationSet(h) \ne \informationSet(h_0)$.
By definition, information sets form a partition and so $\informationSet(h) \cap \informationSet(h_0) = \emptyset$.
Observe that $j$ can have any type at both $h$ and $h_0$ given that each bidder takes a single action.
Now, consider the minimal $m$ such that $j$ believes ex-interim that she could win with type $\type^m$.
By assumption, $\type^m > \optimalReserve$.
Because we are considering strong shill-proofness, we can assume without loss that $j$ does in fact win the auction at $\type^m$.
So, at $h$, we can then apply \cref{lem:ssp_epab} to complete the proof that $j$ must pay more at $h$ than at $h_0$ and so if $i \in \shillBidders$, shill bidding is profitable at this type realization.
\qed

\subsubsection*{Proof of \cref{thm:dutch_ssp}.}
	We first show that the Dutch auction with reserve price $\reserve$ is a well-defined, i.e., that the stopping rule allows for the auction to be orderly.\footnote{The auction is public by definition.}
	We then show that it is \strongly~shill-proof.
	Finally, we show that there are no other public, \strongly~shill-proof, orderly, optimal auctions.
	
	\textbf{The Dutch Auction is Orderly, Optimal, and \Strongly~Shill-Proof. }
	The Dutch auction quantity rule is orderly and the transfer rule is ex-post IR and monotone.
	Indeed, by construction, the next player $\nextPlayer$ is always the player with the potentially highest value (including for tie-breaking). 
	So, if that player indicates that she is of the highest possible type, the outcome (allocation and transfer) is fully determined and the auction ends.
	The auction ends once there are no players who could have values weakly greater than~$\reserve$.
	
	We now prove that the Dutch auction is \strongly~shill-proof.
	Towards contradiction, suppose not.
	So, there must exist some $\shillBidders$, $\player \in \shillBidders$, and $\possibleValues$ such that $\cbr{\highValue_{\player}}$ is selected from the menu $\menuRule(\possibleValues,\player)$.
	But by construction, this means that the auction immediately ends and the good is allocated to the shill bidder who misreported.
	By \cref{lem:winner_paying}, the revenue from this deviation is $0$, which must be weakly less than any other possible transfer. 
	
	\textbf{Uniqueness. }
	Towards contradiction, suppose there exists a menu rule $\tilde{\menuRule} \ne \menuRule^D_{\reserveIndex^*}$ that is associated with a public, \strongly~shill-proof, orderly, optimal auction.
	Therefore, there exists $\possibleValues$ and $\player$ such that $\tilde{\menuRule}(\possibleValues,\player) \ne \menuRule^D_{\reserveIndex^*}(\possibleValues,\player)$.
	Without loss, we will suppose that $\possibleValues$ is the first time in the game tree that $\tilde{\menuRule}$ differs from $\menuRule^D_{\reserveIndex^*}$.
	Formally, for all $\hat{\possibleValues} \supsetneq \possibleValues$, $\tilde{\menuRule}\p{\hat{\possibleValues},\player} = \menuRule^D\p{\hat{\possibleValues},\player}$.
	We now proceed in cases.
	
	\textit{Case 1 (Different Next Player Choice).} Suppose
	\iftoggle{compressLines}{$\tilde{\menuRule}(\possibleValues,\player) = \cbr{\p{\choice_{\mathsf{L}},\tilde{\nextPlayer}_{\mathsf{L}}},\p{\choice_{\mathsf{H}},\tilde{\nextPlayer}_{\mathsf{H}}}},$}{\[ \tilde{\menuRule}(\possibleValues,\player) = \cbr{\p{\choice_{\mathsf{L}},\tilde{\nextPlayer}_{\mathsf{L}}},\p{\choice_{\mathsf{H}},\tilde{\nextPlayer}_{\mathsf{H}}}},\]}
	where $\tilde{\nextPlayer}_{\mathsf{L}} \ne \nextPlayer_{\mathsf{L}}$ or $\tilde{\nextPlayer}_{\mathsf{H}} \ne \nextPlayer_{\mathsf{H}}$.
	If $\tilde{\nextPlayer}_{\mathsf{L}} = \nextPlayer_{\mathsf{L}}$, then $\tilde{\nextPlayer}_{\mathsf{H}} = \nextPlayer_{\mathsf{H}}$ because the outcome is fully resolved once a bidder selects the high partition.
	So, we need only consider the case where $\tilde{\nextPlayer}_{\mathsf{L}} \ne \nextPlayer_{\mathsf{L}}$.
	By \cref{def:orderly}, $\tilde{\nextPlayer}_{\mathsf{H}} \ne \emptyset$ (even if the $\player$ chooses $\cbr{\highValue_{\player}}$).
	Now, there must exist some bidders $b_1$ and $b_2$ where $b_1$ is called before $b_2$ but $(\cdot,b_2) \ordering (\cdot,b_1)$ as otherwise the auction calls players in the same order as the Dutch auction, which we assumed was not the case.
	Let $\realBidders = \potentialBidders\backslash\cbr{b_1}$, and for some $m$, $\type_{b_2} = \type^m$ and $\type_i = \type^{m-2}$ for all bidders $i \notin \cbr{b_1,b_2}$.
	Taking the expression from \cref{lem:ex_int_funcs} and dividing both sides by $\exintquantity_{b_2}$, we get
	\[ \frac{\exinttransfer_{b_2}(\type^m;\possibleValues)}{\exintquantity_{b_2}(\type^m;\possibleValues)} = \type^m - \sum_{k: \type^{j_k} < \type^m} \frac{\exintquantity_{b_2}\p{\type^{j_{k}};\possibleValues}}{\exintquantity_{b_2}\p{\type^m;\possibleValues}}\cdot(\type^{j_{k+1}}-\type^{j_{k}}) < \type^m. \]
	(Note that there must be at least one such $k$ in the summation because otherwise $\possibleValues_{b_2} = \cbr{\type^m}$ and $b_2$ would not take an action.)
	
	The last choice $b_2$ makes is to select a partition $\choice$ such that $\lowchoice \ge \type^{m-2}$.
	We can therefore apply \cref{lem:ssp_epab} to conclude that the transfer if $b_1$ reports $0$ must be $\frac{\exinttransfer_{b_2}(\lowchoice;\possibleValues)}{\exintquantity_{b_2}(\lowchoice;\possibleValues)} \le \frac{\exinttransfer_{b_2}(\type^m;\possibleValues)}{\exintquantity_{b_2}(\type^m;\possibleValues)} < \type^m$.
	If bidder $b_1$ instead reports $\type^m$, then bidder $b_2$ will win and the revenue will be $\type^m$ and so the auction will not be \strongly~shill-proof.
	
	\textit{Case 2 (Different Partitions).} Suppose there exists $(\choice,\cdot) \in \tilde{\menuRule}(\possibleValues,\player)$ such that $\choice \notin \cbr{\choice_{\mathsf{L}}, \choice_{\mathsf{H}}}$.
	We need not consider the case where $\lowchoice = 0$ because in that case either we can consider some other choice $\choice' \notin \cbr{\choice_{\mathsf{L}}, \choice_{\mathsf{H}}}$ or $\choice = \possibleValues_{\player}$ which would violate \cref{lem:aug_rev_prin}.
	We need not consider $\lowchoice > \highValue_{-\player}$ because $\possibleValues$ is the first time $\tilde{\menuRule}$ differs from $\menuRule^D_{\reserveIndex^*}$ and for all $\possibleValues$, $\player$, and $(\tilde{\choice},\cdot) \in \menuRule^D_{\reserveIndex^*}(\possibleValues,\player)$, it is the case that $\tilde{\lowchoice} \le \highValue_{-\player}$.
	So, there are now only three sub-cases we must consider: $\lowchoice \in [\optimalReserve,\highValue_{-\player})$, $\lowchoice = \highValue_{-\player}$, and $\lowchoice \in (0,\optimalReserve)$.
	
	\textit{Case 2a ($\lowchoice \in [\optimalReserve,\highValue_{-\player})$).} In this sub-case, there exists $m^*$ such that $\optimalReserve \le \lowchoice_{\ell} \le \type^{m^*} < \highValue_{-\player}$.
	Since $\possibleValues$ is the first time that $\tilde{\menuRule}$ differs from $\menuRule^D_{\reserveIndex^*}$, we can suppose there exists $i$ such that $(\type^{m^*},\player) \ordering (\optimalReserve,i)$ because otherwise the outcome would already be resolved or the player rotation would be the only difference (Case 1).
	Then, suppose bidder $i$ is such that $i \in \realBidders$ and $\type_i \ge \type^{m^*+1}$.
	Take bidder $\player \in \shillBidders$ to shill $\type^{m^*}$; and for $k \notin \cbr{i,\player}$, take $\type_k = \lowValue_k < \type^{m^*}$.
	Therefore, by \cref{lem:ex_int_funcs}, observe that for the last action $i$ takes, her ex-interim transfer must be higher when shill $\player$ reports $\type^{m^*}$ than when she reports $0$.
	Thus by \cref{lem:ssp_epab}, when $\type^{m^*} < \highValue_{-\player}$, there exists a valuation vector $\type$ such that a shill bidder would want to deviate away from reporting $0$---and therefore such an auction is not \strongly~shill-proof.
	
	\textit{Case 2b ($\lowchoice = \highValue_{-\player}$).} In this sub-case, we know $\optimalReserve \le \lowchoice_{\ell} = \highValue_{-\player}$.
	Since $\possibleValues$ has been generated via a Dutch auction so far, the current player has the lowest tie-breaking priority, i.e., $\player$ is such that for all $j \ne \player$, $(\highValue_j,j) \ordering (\highValue_j,\player)$.
	Letting $j \in \realBidders$ and $\type_j = \highValue_{-\player}$, take bidder $\player \in \shillBidders$ to report $\highValue_{-\player}$; and for all $k \notin \cbr{j,\player}$, take $\type_k = \lowValue_k < \highValue_{-\player}$.
	As noted, $(\highValue_j,j) \ordering (\highValue_j,\player)$ and so bidder $j$ is allocated the item and not shill bidder $\player$.
	Therefore, by the same argument as Case 2a, shill bidding will increase revenue.
	
	\textit{Case 2c ($\lowchoice \in (0,\optimalReserve)$).} By \cref{cond:rev_prin_2_b} of \cref{lem:aug_rev_prin}, there must be some chance that $\player$ could win the auction in order to affect outcomes.
	In particular, by definition of $\quantity^*$, there exists $j$ and $\type_j \in \possibleValues_j$ such that $(\type_j, j) \ordering (\optimalReserve, \player)$ and by \cref{lem:ssp_pab}, there exists $\type_j' \in \possibleValues_j$ such that $(\highchoice, \player) \ordering (\type_j',j)$.
	If any bidder is ever offered a choice with $\lowchoice \ge \optimalReserve$, then the previous cases imply that a shill bidder can profitably deviate and so we only have to consider the instances where no such choice is offered.
	Now, suppose $\realBidders = \cbr{j}$.
	Suppose all shill bidders play the strategy of selecting the partition $\tilde{\choice}$ such that $0 < \tilde{\lowchoice} < \optimalReserve$ if such a choice is available.
	Let the final move that $j$ takes to be $\choice^{0,j,\mathrm{last}}$ and $\choice^{S,j,\mathrm{last}}$ under the shill bidders' strategy of selecting $0$ and not, respectively.
	Similarly, define $\possibleValues^{0,\mathrm{last}},\possibleValues^{S,\mathrm{last}}$ as the possible values and $\transfer^{0}_j,\transfer^S_j$ as the transfers under these respective strategies.
	Observe that it is without loss to assume that $\type^{\omega} \equiv \lowchoice^{0,j,\mathrm{last}} \le \lowchoice^{S,j,last}$ because in the latter case the shill bidders are always acting as if they have higher values than in the former case.
	Next, for $c \in \cbr{0,S}$, let $p_i^{m,c} = \prob\sbr{(\type^m,j) \ordering \p{\type_i,i} \mid \possibleValues_{i}^{c,\mathrm{last}}}$ and define $\zeta_{i,m} = \frac{p_i^{m,0}p_i^{\omega,S}}{p_i^{m,S}p_i^{\omega,0}}$.
	Observe that for all $i \ne j$ and $m \le \omega$, it is the case that $\zeta_{i,m} \ge 1$ with at least one strict inequality because $\highValue_i^{S,\mathrm{last}} \ge \highValue_i^{0,\mathrm{last}}$ for all $i$ with strict inequality for at least one $i$ and $m$.
	So, by \cref{lem:ex_int_funcs,lem:ssp_epab},\footnote{We may assume that the possible values for $j$ are sequential above the reserve otherwise we could consider $j$ as a shill bidder for some other bidder instead by the cases above.}
	
	\begin{align*}
		\transfer^S_j - \transfer^{0}_j &\ge \frac{\exinttransfer_{\player}(\type^{\omega};\possibleValues^{S,\mathrm{last}})}{\exintquantity_{\player}(\type^{\omega};\possibleValues^{S,\mathrm{last}})} - \frac{\exinttransfer_{\player}(\type^{\omega};\possibleValues^{0,\mathrm{last}})}{\exintquantity_{\player}(\type^{\omega};\possibleValues^{0,\mathrm{last}})} \\
		&= \sum_{\reserveIndex^* \le k < \omega} \p{(\type^{k+1}-\type^{k})\prod_{i \ne j} \frac{p_i^{k,0}}{p_i^{\omega,0}}} - \sum_{\reserveIndex^* \le k < \omega} \p{(\type^{k+1}-\type^{k})\prod_{i \ne j} \frac{p_i^{k,S}}{p_i^{\omega,S}}} \\
		&= \frac{1}{\prod_{i \ne j} p_i^{\omega,0} p_i^{\omega,S}} \cdot \Bigg(\sum_{\reserveIndex^* \le k < \omega} (\type^{k+1}-\type^{k}) \cdot \p{\prod_{i \ne j} p_i^{k,S}p_i^{\omega,0}(\zeta_{i,k} - 1) } \Bigg) > 0.
	\end{align*}
	Thus, we have described a profitable shill bidding strategy in this sub-case.
	\qed

\section{Weakly Shill-Proof Auctions (\cref{sec:wsp}) Appendix}
\subsection{Single-Action Auctions (\cref{ss:single_action}) Appendix}
\begin{lemma} \label{lem:one_shot_rev_principle}
	For any single-action auction, there exist unique $\quantity: \typeSpace^{\numBidders} \to \{0,1\}^{\numBidders}$ and $\transfer: \typeSpace^{\numBidders} \to \R^{\numBidders}$ such that:
	\begin{enumerate}[(i)]
		\item\label{cond:one_shot_corr} (Correspondence) For all $\val \in \typeSpace^{\numBidders}$,
		\iftoggle{compressLines}{$\quantity^*(\type) = \actionQuantity(\strategyParameterized)$ and $\transfer(\type) = \actionTransfer\p{\strategyParameterized}$.}{\[\quantity^*(\type) = \actionQuantity(\strategyParameterized) \text{ and } \transfer(\type) = \actionTransfer\p{\strategyParameterized}.\]}
		\item\label{cond:one_shot_ir} (Individual Rationality) For all $i \in \potentialBidders$ and $\type$,
		\iftoggle{compressLines}{$\quantity^*_i(\type)\type_i - \transfer_i(\type) \ge 0.$}{\[\quantity^*_i(\type)\type_i - \transfer_i(\type) \ge 0.\]}
		\item (Incentive Compatibility) For all $\realBidders$, $i \in \realBidders$, $\type_i$, and $\type_i'$,
		\begin{multline} \label{eq:one_shot_ic}
			\expect_{\type_{-i},\tilde{\realBidders}} \sbr{\quantity^*_i\p{\strategy(\type;\tilde{\realBidders})}\type_i - \transfer_i(\strategy(\type;\tilde{\realBidders})) \mid \type_i,\signal_i = \experiment(\type_{j<i}) } \\\ge 
			\expect_{\type_{-i},\tilde{\realBidders}} \sbr{\quantity^*_i\p{\strategy(\type_i',\type_{-i}; \tilde{\realBidders})}\type_i - \transfer_i\p{\strategy(\type_i',\type_{-i}; \tilde{\realBidders})} \mid \type_i,\signal_i = \experiment(\type_{j<i})}.
		\end{multline}
	\end{enumerate}
\end{lemma}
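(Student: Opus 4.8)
The plan is to obtain $(\quantity,\transfer)$ by reading the realized outcome of the equilibrium off of the game tree and then to translate the equilibrium conditions of \cref{def:game_eq} into the three stated properties. For \emph{Correspondence}, I would simply set $\quantity(\type) = \actionQuantity(\strategyParameterized)$ and $\transfer(\type) = \actionTransfer(\strategyParameterized)$ for every type profile $\type$; since the game has finite depth, perfect recall, and pure strategies, each $\type$ induces a single terminal history and hence a well-defined outcome. Uniqueness is then immediate, because any pair satisfying \cref{cond:one_shot_corr} must agree with this definition pointwise. \emph{Individual rationality} is inherited directly from the standing ex-post IR assumption on the auction environment: since $(\game,\strategy)$ is an IR auction equilibrium, $\actionQuantity_i(\strategyParameterized)\type_i - \actionTransfer_i(\strategyParameterized) \ge 0$ for every $i$ and $\type$, which is exactly \cref{cond:one_shot_ir} after substituting the Correspondence identities.

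The substance is \emph{incentive compatibility}. After invoking \cref{lem:aug_rev_prin}, each action of bidder $i$ is identified with a choice of partition cell, i.e., a reported type, and the set of available actions at any history is a function of the history alone, not of $i$'s realized type. Because the auction is single-action, bidder $i$ reaches exactly one history at which she moves, and the information there is summarized by her signal $\signal_i = \experiment(\type_{j<i})$; the cells she may choose depend only on $\signal_i$. Fix a real bidder $i$, a type $\type_i$, and a signal $\signal_i$, and consider the deviation $\strategy_i'$ that plays the action a type-$\type_i'$ bidder would select on the event that $i$'s realized signal equals $\signal_i$ and her type equals $\type_i$, and that coincides with $\strategy_i$ otherwise. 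This is a feasible deviation precisely because the available cells at $i$'s information set are determined by $\signal_i$ and not by her type. Since $\strategy_i'$ and $\strategy_i$ differ only on the event $\cbr{\type_i,\signal_i}$, the ex-ante best-response inequality of \cref{def:game_eq} for $i$ collapses, after cancelling the common terms, to exactly the signal-conditional inequality \cref{eq:one_shot_ic}.

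The main obstacle is verifying that the conditioning event in \cref{eq:one_shot_ic} coincides with bidder $i$'s equilibrium belief, so that the localized best-response inequality really is the stated IC inequality. Here I would use that the experiment $\experiment_i$ is a deterministic garbling of earlier bidders' actions, together with the fact that on-path beliefs are pinned down by Bayes' rule: conditioning on $\signal_i = \experiment(\type_{j<i})$ restricts the types of earlier movers to the consistent set $\experiment_i^{-1}(\signal_i)$ while leaving later movers' types distributed according to the prior, and simultaneously fixes the distribution over which bidders are real. This is exactly the expectation over $(\type_{-i},\tilde{\realBidders})$ appearing in \cref{eq:one_shot_ic}. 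The only remaining care is that the expectation be well-defined, i.e., that the conditioning event has positive probability, which holds along any history reached in equilibrium; for off-path signals the IC constraint imposes no content, consistent with the existential quantifier over reached histories built into the single-action structure.
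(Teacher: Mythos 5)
Your proposal is correct and follows essentially the same route as the paper's proof: define $(\quantity,\transfer)$ pointwise from the equilibrium outcomes, inherit ex-post IR from the standing assumption on auction equilibria, and derive the incentive-compatibility inequality from the equilibrium condition in \cref{def:game_eq}. The only difference is that you spell out the localization of the deviation to the event $\cbr{\type_i,\signal_i}$ (and the attendant positive-probability caveat), which the paper's terse proof leaves implicit.
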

\begin{proof}
	To begin, let us note that we can uniquely define $(\quantity,\transfer)$ point-wise based on the outcomes in $\game$ from playing $\strategyParameterized$.
	Next, the IR constraint (\cref{cond:one_shot_ir}) follows immediately from the ex-post IR condition and our construction of $(\quantity,\transfer)$.
	Finally, \cref{eq:one_shot_ic} comes from \cref{def:game_eq} and recalling that we restrict shill bidders to actions that could have been taken by real bidders.
\end{proof}

\begin{lemma} \label{lem:wsp_val_equiv_general}
	If a single-action auction is \weakly~shill-proof, then for all $\realBidders$, $\type_{j < \min \shillBidders}$,\footnote{$\type_{j < \min \shillBidders} \equiv \cbr{\type_j : j < \min_{i \in \shillBidders}\cbr{i}}$.} and $\cbr{\type_i}_{i \in \shillBidders}$,
	\[\expect_{\type} \sbr{\sum_{k \in \realBidders} \transfer_k\p{\cbr{\type_i}_{i \in \shillBidders},\cbr{\type_i}_{i \notin \shillBidders}} \mid \type_{j < \min \shillBidders}} \le \expect_{\type} \sbr{\sum_{k \in \realBidders} \transfer_k\p{0,\cbr{\type_i}_{i \notin \shillBidders}} \mid \type_{j < \min \shillBidders}}. \]
\end{lemma}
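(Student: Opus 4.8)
The plan is to derive this conditional inequality from the single ex-ante optimality condition that weak shill-proofness imposes on the shill bidders, by localizing a suitable event-contingent deviation. Throughout I work with the direct mechanism $(\quantity,\transfer)$ guaranteed by \cref{lem:one_shot_rev_principle}, and I use that by \cref{def:wsp} the weakly shill-proof equilibrium $\strategy$ has every shill bidder play exactly as a real bidder of value $0$. Because $(\game,\strategy)$ is an auction equilibrium, \cref{def:game_eq} then tells us that, among all deviations $\strategy_{\shillBidders}'\in\Sigma_{\shillBidders}$, playing as value $0$ maximizes the shills' common objective $\sum_{k\in\realBidders}\transfer_k$ in ex-ante expectation over the non-shill types, for the fixed realized set $\realBidders$. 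The entire content of the lemma is to upgrade this ex-ante statement into one conditioned on the early types $\type_{j<\min\shillBidders}$.

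First I would fix $\realBidders$, a profile $\type_{j<\min\shillBidders}$, and target reports $\cbr{\type_i}_{i\in\shillBidders}$, and let $E$ be the event that the bidders $j<\min\shillBidders$ (all of whom are real and move before any shill) take exactly the actions they take under $\strategy$ when their types equal $\type_{j<\min\shillBidders}$. Since the non-shill types are drawn from a full-support product distribution and the early actions are a deterministic function of the early types, $\prob[E]>0$. I then define a deviation $\strategy_{\shillBidders}'$ in which each shill, upon moving, inspects the commonly observed history: if it lies in $E$ she mimics the $\strategy$-action of a real bidder of the corresponding type $\type_i$, and otherwise she plays her equilibrium (value-$0$) action. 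Because the shills have singleton information sets and all observe the pre-shill actions, membership in $E$ is known to every shill when she moves, so this is well defined and lies in $\Sigma_{\shillBidders}$ (on each realization the shills' actions coincide with the $\strategy$-play of either $\cbr{\type_i}_{i\in\shillBidders}$ or the all-$0$ profile). Off $E$ the deviation coincides with $\strategy$, so the ex-ante inequality from \cref{def:game_eq}, after dividing by $\prob[E]>0$, collapses to
\[ \expect\sbr{\sum_{k\in\realBidders}\transfer_k\p{\cbr{\type_i}_{i\in\shillBidders},\cbr{\type_i}_{i\notin\shillBidders}}\mid E} \le \expect\sbr{\sum_{k\in\realBidders}\transfer_k\p{0,\cbr{\type_i}_{i\notin\shillBidders}}\mid E}. \]

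The remaining—and main—step is to replace conditioning on $E$ by conditioning on the exact early profile $\type_{j<\min\shillBidders}$, which is strictly finer whenever the pre-shill actions pool several types. Here I would use two facts. By the revelation principle, $\transfer$ is a function of the realized action profile; and on $E$ the pre-shill actions are held fixed, while the shills' actions (under either report) and the later real bidders' actions are, given that fixed pre-shill history, deterministic functions of the types of the bidders at or after $\min\shillBidders$ alone. Hence, conditional on $E$, the summand $\sum_{k\in\realBidders}\transfer_k$ is measurable with respect to those later types and does not depend on the exact early profile within the pool defining $E$. Since types are independent, the conditional law of the later types is likewise unaffected by the exact value of $\type_{j<\min\shillBidders}$ within that pool, so each conditional expectation above is constant across the early profiles consistent with $E$ and equals its value conditioned on the specific $\type_{j<\min\shillBidders}$; this yields the claimed inequality. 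I expect this passage from the shills' coarse, action-level information to the exact-type conditioning in the statement to be the crux of the argument—it is exactly where independence of types and the action-measurability of $\transfer$ are needed, and where a careless proof would silently conflate the two conditioning events.
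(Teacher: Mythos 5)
Your proof is correct and uses essentially the same argument as the paper's: a history-contingent deviation in which the shill bidders mimic the target reports precisely when the pre-shill play is consistent with $\type_{j < \min \shillBidders}$ and play as value $0$ otherwise, combined with the shill bidders' ex-ante equilibrium condition from \cref{def:game_eq} and the correspondence property of \cref{lem:one_shot_rev_principle}. The only difference is one of rigor rather than route: you spell out (via action-measurability of $\transfer$ and independence of types) the passage from conditioning on the pooled action event $E$ to conditioning on the exact early type profile, a step the paper's proof treats as immediate from the assumption that shill bidders observe all prior actions.
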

\begin{proof}
	Towards contradiction, suppose there exists $\realBidders$, $\type_{j < \min \shillBidders}$, and $\cbr{\type_i}_{i \in \shillBidders}$ such that
	\[\expect_{\type} \sbr{\sum_{k \in \realBidders} \transfer_k\p{\cbr{\type_i}_{i \in \shillBidders},\cbr{\type_i}_{i \notin \shillBidders}} \mid \type_{j < \min \shillBidders}} > \expect_{\type} \sbr{\sum_{k \in \realBidders} \transfer_k\p{0,\cbr{\type_i}_{i \notin \shillBidders}} \mid \type_{j < \min \shillBidders}}. \]
	We now prove that the deviation by the coalition $\shillBidders$ where they report $\cbr{\type_i}_{i \in \shillBidders}$ is profitable and therefore that the auction is not \weakly~shill-proof.
	By assumption, a shill bidder observes actions by all bidders who take actions before her.
	So, $\cbr{\type_i}_{i \in \shillBidders}$ can condition on $ \type_{j < \min \shillBidders}$ when making decisions.
	Then, the strategy by $\shillBidders$ of committing to report $\cbr{\type_i}_{i \in \shillBidders}$ regardless of what other bidders play after $\min_{i \in \shillBidders} \cbr{i}$ must be strictly profitable compared to always reporting $0$.
	Thus, we have found a strategy that does strictly better than always reporting $0$: 
	When the values before $\min_{i \in \shillBidders} \cbr{i}$ are reported as $\type_{j < \min \shillBidders}$, report $ \cbr{\type_i}_{i \in \shillBidders}$. 
	Otherwise, report $0$.
	This strategy in the direct game immediately translates to a profitable deviation in the auction by \cref{def:game_eq} and \cref{lem:one_shot_rev_principle} and thus the equilibrium is not \weakly~shill-proof.
\end{proof}

Now, when discussing single-action auctions, we focus on the direct mechanisms associated to \weakly~shill-proof auctions and so we will refer to an auction as $(\quantity,\transfer,\experiment)$ without reference to $\realBidders$.

\begin{lemma} \label{lem:wsp_form}
	Suppose a single-action, orderly auction $(\quantity,\transfer,\experiment)$ is \weakly~shill-proof. 
	Then, for all $i$, $\type$, and $\type_{j > i}'$,
	\iftoggle{compressLines}{$\sbr{\quantity_i \p{\type}=\quantity_i\p{\type_{j \le i}, \type_{j > i}'} \implies \transfer_i\p{\type}=\transfer_i\p{\type_{j \le i}, \type_{j > i}'}}.$}{\[\quantity^*_i \p{\type}=\quantity^*_i\p{\type_{j \le i}, \type_{j > i}'} \implies \transfer_i\p{\type}=\transfer_i\p{\type_{j \le i}, \type_{j > i}'}.\]}
\end{lemma}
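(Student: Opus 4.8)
The plan is to argue by contrapositive, exhibiting a profitable shill deviation whenever $\transfer_i$ fails to be constant across later-bidder profiles sharing a common value of $\quantity_i$. Fix a bidder $i$, a profile $\type_{j\le i}$ for the first $i$ bidders, and two continuations $\type_{j>i},\type_{j>i}'$ with $\quantity_i(\type_{j\le i},\type_{j>i}) = \quantity_i(\type_{j\le i},\type_{j>i}') =: q$; I want $\transfer_i(\type_{j\le i},\type_{j>i}) = \transfer_i(\type_{j\le i},\type_{j>i}')$. If $q=0$ then $i$ loses in both profiles and the fact that only the winner pays (a consequence of ex-post IR in our environment; cf.\ \cref{lem:winner_paying}) gives $\transfer_i = 0$ on both sides. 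So the content is the case $q=1$, in which $i$ wins in both. The key device is the shill configuration $\realBidders = \{1,\dots,i\}$, $\shillBidders = \{i+1,\dots,N\}$: since a single-action bidder's action depends only on the reports of bidders preceding her, holding $\type_{j\le i}$ fixed pins down the actions of bidders $1,\dots,i$ regardless of what the shills do, and by orderliness the unique winner among all bidders is $i$, so every real bidder other than $i$ loses and pays $0$.

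With this configuration the argument is a sandwich. On one side I invoke weak shill-proofness through \cref{lem:wsp_val_equiv_general} with $\min\shillBidders = i+1$, conditioning on $\type_{j<\min\shillBidders} = \type_{j\le i}$; because all real bidders lie in $\{1,\dots,i\}$, this conditioning fixes every real type and the expectation degenerates into a pointwise statement. Since $i$ is the sole real winner (at the deviation $\type_{j>i}$ by hypothesis, and at the honest report $0$ because lowering the later bidders' types to $0$ can only improve $i$'s priority position, by \cref{def:orderly}), the summed real revenue equals $\transfer_i$ in each case, and the lemma yields $\transfer_i(\type_{j\le i},\type_{j>i}) \le \transfer_i(\type_{j\le i},0)$. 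On the other side, the monotonicity assumption on transfers, applied coordinate by coordinate as the later reports are raised from $0$ to $\type_{j>i}$, gives the reverse inequality $\transfer_i(\type_{j\le i},\type_{j>i}) \ge \transfer_i(\type_{j\le i},0)$. Hence $\transfer_i(\type_{j\le i},\type_{j>i}) = \transfer_i(\type_{j\le i},0)$, and running the identical argument for $\type_{j>i}'$ shows $\transfer_i(\type_{j\le i},\type_{j>i}') = \transfer_i(\type_{j\le i},0)$ as well; the two transfers therefore coincide.

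The conceptual heart is that weak shill-proofness forces report-$0$ to \emph{maximize} the revenue extracted from $i$, while monotonicity forces report-$0$ to \emph{minimize} it, and together they squeeze $\transfer_i$ to a constant. The step I expect to be most delicate is the reduction of \cref{lem:wsp_val_equiv_general} --- a statement about \emph{expected summed} transfers from the whole set $\realBidders$ --- to a clean pointwise claim about $\transfer_i$ alone. This requires (i) the degeneracy of the conditional expectation once all of $\realBidders = \{1,\dots,i\}$ is conditioned on, and (ii) the guarantee that $i$ remains the unique real winner both under the deviation and under honest play, so that the other real bidders contribute nothing to the summed transfer; both rely on the single-action structure (the actions of $1,\dots,i$ are independent of later reports) and on orderliness (the winner is the top-priority bidder, and lowering others' types preserves $i$'s win). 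A minor edge case --- a winner who pays exactly $0$, where the positivity guard in the monotonicity condition is vacuous --- is absorbed by the winner-paying property.
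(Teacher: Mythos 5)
Your proof is correct and takes essentially the same route as the paper's: the same shill configuration $\realBidders = \set{i}$, the same reduction of \cref{lem:wsp_val_equiv_general} to a pointwise statement after conditioning on $\type_{j \le i}$, and the same combination of the winner-paying property with transfer monotonicity. The only difference is presentational---you squeeze both transfers to the common reference point $\transfer_i\p{\type_{j \le i}, 0}$, whereas the paper fixes a WLOG inequality between the two transfers and derives a contradiction with \cref{lem:wsp_val_equiv_general} directly.
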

\begin{proof}
	Towards contradiction, suppose there exists $i$, $\type$, and $\type_{j > i}'$, such that
	\iftoggle{compressLines}{$\quantity_i \p{\type}=\quantity_i\p{\type_{j \le i}, \type_{j > i}'}$, but $\transfer_i\p{\type; \signal} > \transfer_i\p{\type_{j \le i}, \type_{j > i}'; \signal}$.}{\[ \quantity_i \p{\type}=\quantity_i\p{\type_{j \le i}, \type_{j > i}'}, \text{but } \transfer_i\p{\type; \signal} > \transfer_i\p{\type_{j \le i}, \type_{j > i}'; \signal}. \]}
	Because the auction is orderly, we can apply \cref{lem:winner_paying} to conclude that $\quantity_i \p{\type}=\quantity_i\p{\type_{j \le i}, \type_{j > i}'}=1$.
	Let $\realBidders = \set{i}$. 
	Then,
	\begin{align*}
		\expect_{\type} \sbr{\sum_{k \in \realBidders} \transfer_k(\cbr{\type_i}_{i \in \shillBidders},\cbr{\type_i}_{i \notin \shillBidders}) \mid \type_{j \le \min \shillBidders}} &= \transfer_i\p{\type} > \transfer_i\p{\type_{j \le i}, \type_{j > i}'} \ge \transfer_i\p{\type_{j \le i}, 0}.
	\end{align*}
	This violates \cref{lem:wsp_val_equiv_general}, and so we have reached a contradiction.
\end{proof}

\begin{lemma} \label{lem:msp_form}
	Suppose a single-action, orderly auction $(\quantity,\transfer,\experiment)$ is mildly \strategyproof~and \weakly~shill-proof.
	Then, there exists $i < \numBidders$ such that for all $\type_i$, $\type_i'$, and $\type_{-i} \in \experiment^{-1}_i(\type_{j < i})$,
	\iftoggle{compressLines}{$\sbr{\quantity_i(\type) = \quantity_i(\type_i',\type_{-i}) \implies \transfer_i(\type) = \transfer_i(\type_i',\type_{-i})}.$}{\[ \quantity_i(\type) = \quantity_i(\type_i',\type_{-i}) \implies \transfer_i(\type) = \transfer_i(\type_i',\type_{-i}). \]}
\end{lemma}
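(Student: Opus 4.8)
The plan is to prove this directly from mild \strategyproofness; weak shill-proofness is not actually needed for this step (indeed the proof sketch of the trilemma attributes exactly this conclusion to mild \strategyproofness). Let $i < \numBidders$ be the bidder whose existence is guaranteed by \cref{def:mild}, so that truthful reporting is an ex-post strategy for $i$ conditional on her signal. Fix an arbitrary $\type_{-i}$, which pins down the signal $\signal_i = \experiment_i(\type_{j<i})$, and fix any two own-reports $\type_i,\type_i'$ with $\quantity_i(\type_i,\type_{-i}) = \quantity_i(\type_i',\type_{-i})$. The goal is to show the corresponding transfers coincide, which is just the taxation-principle consequence of the single incentive constraint governing bidder $i$'s decision.

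Concretely, I would apply the inequality of \cref{def:mild} twice, in each case taking the ``other-agent'' profile in the fiber $\experiment_i^{-1}(\signal_i)$ to be $\type_{-i}$ itself. For a bidder of true type $\type_i$ considering the report $\type_i'$,
\[ \quantity_i(\type_i,\type_{-i})\,\type_i - \transfer_i(\type_i,\type_{-i}) \ge \quantity_i(\type_i',\type_{-i})\,\type_i - \transfer_i(\type_i',\type_{-i}); \]
since $\quantity_i(\type_i,\type_{-i}) = \quantity_i(\type_i',\type_{-i})$, the value terms cancel and we obtain $\transfer_i(\type_i,\type_{-i}) \le \transfer_i(\type_i',\type_{-i})$. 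Exchanging the roles of $\type_i$ and $\type_i'$ (a bidder of true type $\type_i'$ considering the report $\type_i$) and cancelling in the same way gives the reverse inequality $\transfer_i(\type_i',\type_{-i}) \le \transfer_i(\type_i,\type_{-i})$. Together these force $\transfer_i(\type_i,\type_{-i}) = \transfer_i(\type_i',\type_{-i})$, which is exactly the claimed implication.

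There is no real obstacle here; only two points merit a line of care. First, the cancellation of the value terms is what makes the argument go through \emph{uniformly} whether the common allocation is a win ($\quantity_i = 1$) or a loss ($\quantity_i = 0$): in either case only the transfers survive the inequalities, so the conclusion covers losing reports as well, and no appeal to individual rationality or to the winner-paying property (\cref{lem:winner_paying}) is required. Second, one should confirm that \cref{def:mild} delivers the \emph{pointwise} (ex-post, for each fixed $\type_{-i}$) comparison rather than an interim one averaged over the fiber; since that definition quantifies over all $\type_{-i},\type_{-i}' \in \experiment_i^{-1}(\signal_i)$, restricting to $\type_{-i}' = \type_{-i}$ yields precisely the pointwise inequalities used above. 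Because everything rests solely on bidder $i$'s ex-post incentive constraint, weak shill-proofness plays no role in this lemma; it enters the trilemma only through \cref{lem:wsp_form}.
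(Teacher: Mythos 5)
Your proof is correct, and it takes a cleaner route than the paper's. The paper proves the lemma by contradiction: it first orders the two transfers using the maintained ex-post monotonicity assumption (WLOG $\type_i > \type_i'$, so $\transfer_i(\type) \ge \transfer_i(\type_i',\type_{-i})$), then invokes the winner-paying property (\cref{lem:winner_paying}) to argue that a strict gap can only occur when both reports win, and finally derives a contradiction from a single incentive inequality (the type $\type_i$ would prefer to report $\type_i'$). Your argument is the standard two-sided taxation-principle step: applying the inequality of \cref{def:mild} in both directions with $\type_{-i}' = \type_{-i}$ and cancelling the (equal) allocation terms forces the transfers to coincide, with no case distinction between winning and losing. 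What this buys you is independence from the auxiliary machinery: you need neither the monotonicity assumption nor \cref{lem:winner_paying} --- the latter is actually a delicate point in the paper, since that lemma is stated for \emph{optimal} auctions while \cref{lem:msp_form} only assumes orderliness, so your proof quietly repairs a small gap in the paper's chain of citations. Your side remarks are also accurate: \cref{def:mild} quantifies over all $\type_{-i},\type_{-i}' \in \experiment_i^{-1}(\signal_i)$, so restricting to the diagonal $\type_{-i}' = \type_{-i}$ is legitimate, and weak shill-proofness indeed plays no role in this lemma (the paper's own proof never uses it either; it enters the trilemma only via \cref{lem:wsp_form}).
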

\begin{proof}
	Consider $i < \numBidders, \realBidders \ni i, \type_i, \type_i'$, and $\type_{-i} \in \experiment^{-1}_i(\type_{j < i})$ such that $\quantity_i(\type) = \quantity_i(\type_i',\type_{-i})$.
	WLOG, suppose $\type_i > \type_i'$.
	By monotonicity, $\transfer_i(\type) \ge \transfer_i(\type_i',\type_{-i})$.
	Towards contradiction, suppose $\transfer_i(\type) > \transfer_i(\type_i',\type_{-i})$.
	By the winner-paying property, $\transfer_i(\type) > \transfer_i(\type_i',\type_{-i})$ implies that $\quantity_i(\type) = \quantity_i(\type_i',\type_{-i}) = 1$.
	However, note that $\transfer_i(\type) > \transfer_i(\type_i',\type_{-i})$ would mean that the utility of reporting $\type_i'$ would be higher than truthful reporting under true value $\type_i$ which would violate the mildly \strategyproofness~and thus $\transfer_i(\type_i',\type_{-i}) = \transfer_i(\type_i',\type_{-i})$.
\end{proof}

\subsubsection*{Proof of \cref{thm:wsp_trilemma}.}
Towards contradiction, suppose such an auction did exist.
Fix $i < \numBidders$ and $\type_{j < i}$ and then suppose that $\type_i < \type^{\numPossVals}$.
Combining \cref{lem:wsp_form,lem:msp_form}, we can see that for all $\type_i'$ and $\type_{-i},\type_{-i}' \in \experiment^{-1}_{i}(\type_{j<i})$, if $\quantity_i(\type) = \quantity_i(\type')$, then $\transfer_i(\type) = \transfer_i(\type')$.
So, define $\transfer_i$ as the (constant) $\transfer_i(\type)$ for all $\type$ such that $\quantity_i(\type) = 1$.

In order for bidder $i$ to have an ex-post strategy, when $\quantity(\type) = 1$, it must also be the case that $\quantity_i(\type^{\numPossVals},\type_{-i}) = 1$.
So, applying the winner-paying property (and suppressing that the expectation is conditioned on $\signal_i = \experiment_i(\type_{j < i})$), we have
\begin{multline} \label{eq:high_report_util}
	\expect_{\type_{-i}}\sbr{\quantity_i\p{\type^{\numPossVals},\type_{-i}}\type_i - \transfer_i\p{\type^{\numPossVals},\type_{-i}} }\\ = 
	\expect_{\type_{-i}}\sbr{\type_i - \transfer_i^* \mid \quantity_i(\type) = 1} + \expect_{\type_{-i}}\sbr{\type_i - \transfer_i^* \mid \quantity_i(\type) = 0, \quantity_i\p{\type^{\numPossVals},\type_{-i}} = 1},
\end{multline}
\begin{equation} \label{eq:true_report_util}
	\text{and }\expect_{\type_{-i}}\sbr{\quantity_i\p{\type}\type_i - \transfer_i\p{\type}} = \expect_{\type_{-i}}\sbr{\type_i - \transfer_i^* \mid \quantity_i(\type) = 1}.
\end{equation}
Taking the difference between \cref{eq:high_report_util} and \cref{eq:true_report_util}, we see that
\begin{multline} \label{eq:expected_difference}
	\expect_{\type_{-i}}\sbr{\quantity_i\p{\type^{\numPossVals},\type_{-i}}\type_i - \transfer_i\p{\type^{\numPossVals},\type_{-i}}} - \expect_{\type_{-i}}\sbr{\quantity_i\p{\type}\type_i - \transfer_i\p{\type}}  \\=
	\expect_{\type_{-i}}\sbr{\type_i - \transfer_i^* \mid \quantity_i(\type) = 0, \quantity_i\p{\type^{\numPossVals},\type_{-i}} = 1}
\end{multline}
Now, by definition of orderly, $\quantity$ is monotone, and by assumption $\transfer$ is monotone.
If there exists $\type^m$ such that $\prob[\quantity_i(\type^m,\type_{-i}) = 1] < \prob[\quantity_i(\type^{\numPossVals},\type_{-i}) = 1]$, then for such $m$,
\begin{gather*}
	\expect_{\type_{-i}}\sbr{\quantity_i(\type^m,\type_{-i}) \type^m - \transfer_i(\type^m,\type_{-i})} \ge \expect_{\type_{-i}}\sbr{\quantity_i(\type^{m-1},\type_{-i}) \type^m - \transfer_i(\type^{m-1},\type_{-i})} \\
	> \expect_{\type_{-i}}\sbr{\quantity_i(\type^{m-1},\type_{-i}) \type^{m-1} - \transfer_i(\type^{m-1},\type_{-i})} \ge 0,
\end{gather*}
where the last inequality comes from the IR condition.
Thus, the IR constraint does not bind for $\type_i = \type^m$. 
Since the good has to be allocated to the highest type, for all $i < \numBidders$, there exists $\type_{-i}$ such that $\quantity_i(\type) = 0$ and $\quantity_i(\type^{\numPossVals},\type_{-i}) = 1$.
Thus, 
\begin{equation} \label{eq:misreport_profitable}
	\expect_{\type_{-i}}\sbr{\type_i - \transfer_i^* \mid \quantity_i(\type) = 0, \quantity_i\p{\type^{\numPossVals},\type_{-i}} = 1} > 0.
\end{equation}
Combining \cref{eq:expected_difference,eq:misreport_profitable}, we see that
\begin{equation}\label{eq:IC_violated}
	\expect_{\type_{-i}}\sbr{\quantity_i\p{\type^{\numPossVals},\type_{-i}}\type_i - \transfer_i\p{\type^{\numPossVals},\type_{-i}}} > \expect_{\type_{-i}}\sbr{\quantity_i\p{\type}\type_i - \transfer_i\p{\type}}.
\end{equation}
We then apply the \weak~shill-proofness condition to simplify \cref{eq:one_shot_ic} to
\[ \expect_{\type_{-i}} \sbr{\quantity_i\p{\type}\type_i - \transfer_i(\type) } \ge \expect_{\type_{-i}} \sbr{\quantity_i\p{\type_i',\type_{-i}}\type_i - \transfer_i\p{\type_i',\type_{-i}}}. \]
Taking $\type_i' = \type^{\numPossVals}$, \cref{eq:IC_violated} violates the IC constraint from \cref{lem:one_shot_rev_principle}---and thus we have reached a contradiction.
\qed

\subsection{Efficient Auctions (\cref{ss:wsp_dutch_auction}) Appendix}
In order to build towards a proof \cref{prop:dutch_robust_wsp}, we will prove that for a certain class of value distributions, every \textit{\weakly} shill-proof and efficient auction must be a semi-Dutch auction.
Formally, we assume that the value distribution is sparse:

\begin{definition} \label{def:sparse}
	A regular distribution $F$ is \textbf{sparse} if for all $k < \reserveIndex^*$,
	\begin{equation} \label{eq:sparsity}
		\type^k - (\type^{k+1}-\type^k)\frac{f(\type^{k+1})}{f(\type^k)} < 0.
	\end{equation}
\end{definition}
A distribution is sparse if the atoms are sufficiently far apart.
Sparsity can also be a reasonable assumption if the auctioneer has preferences for the auction to be completed quickly, or otherwise finds it costly to distinguish between values that are close to each other.

\begin{lemma} \label{lem:shill_regular_strategy}
	Consider an efficient auction and suppose $\typeDist$ is regular and sparse.
	Let $\realBidders$, $\possibleValues$ such that $\possibleValues_i = \cbr{w : w \in [\lowValue_i,\highValue_i]}$ for all $i \in \realBidders$, and consider $(\underline{\choice},j)$ such that $\underline{\choice} < \optimalReserve, j \notin \realBidders$ and for all $i \in \realBidders, (\highValue_i,i) \ordering (\underline{\choice},j)$. Then, for all $\gamma < \underline{\choice}$,
	\[ \expect\sbr{\sum_{i \in \realBidders} \transfer_i(\type) \mid \possibleValues=\p{\possibleValues_{-j},\cbr{\gamma}}} < \expect\sbr{\sum_{i \in \realBidders} \transfer_i(\type) \mid \possibleValues=\p{\possibleValues_{-j},\cbr{\underline{\choice}}}}. \]
	Thus, the following shilling strategy is profitable compared to always reporting $0$: if there exists $(\possibleValues,\player,\choice)$ such that $(\choice,\cdot) \in \menuRule(\possibleValues,\player)$ and $\lowchoice \in (0,\optimalReserve)$, then select $\choice$. 
	Otherwise, select the partition containing $0$.
\end{lemma}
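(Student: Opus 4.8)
The plan is to reduce the claim to a monotonicity property of expected virtual surplus and then to use sparsity to sign the relevant virtual values. First I would invoke \cref{lem:ex_int_funcs} to express, for a fixed profile of possible-value sets $\possibleValues$, the ex-interim transfer $\exinttransfer_i(\type_i;\possibleValues)$ of each real bidder purely through the ex-interim allocation $\exintquantity_i(\cdot;\possibleValues)$. Summing over $\type_i$ and applying summation by parts, the expected transfer from bidder $i$ conditional on $\possibleValues$ equals $\expect[\hat\varphi_i(\type_i)\,\exintquantity_i(\type_i;\possibleValues)]$, where $\hat\varphi_i$ is the virtual value computed with respect to the conditional distribution of $\type_i$ on $\possibleValues_i$. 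Writing $\mathcal{R}(v)$ for the expected total transfer from real bidders when $\possibleValues=(\possibleValues_{-j},\cbr{v})$, this yields
\[ \mathcal{R}(v)=\expect_{\type_{\realBidders}}\left[ \sum_{i\in\realBidders}\hat\varphi_i(\type_i)\,\quantity^E_i(\type_{\realBidders},\type_j=v) \right], \]
so that $\mathcal{R}$ depends on $v$ only through the efficient allocation; i.e.\ revenue equivalence holds for the revenue-maximizing incentive-compatible transfer selected in \cref{lem:ex_int_funcs}.

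Next I would compute the discrete difference $\mathcal{R}(\underline{\choice})-\mathcal{R}(\gamma)$. Because the auction is efficient and orderly, raising $j$'s value from $\gamma$ to $\underline{\choice}$ (both below $\optimalReserve$) changes the allocation only for the real bidder $i^*$ who wins when $\type_j=\gamma$, and only on the event $E$ that $\type_{i^*}\in(\gamma,\underline{\choice}]$ and $i^*$ would lose to $j$ at $\underline{\choice}$: on $E$, $\quantity^E_{i^*}$ drops from $1$ to $0$ (the shill $j$ takes the item), and no other real bidder's allocation moves. Hence $\mathcal{R}(\underline{\choice})-\mathcal{R}(\gamma)=\expect[-\hat\varphi_{i^*}(\type_{i^*})\,\indicator\cbr{E}]$. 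The hypothesis that $(\highValue_i,i)\ordering(\underline{\choice},j)$ for every $i\in\realBidders$ is exactly what guarantees that on $E$ the value $\type_{i^*}$ is never the top atom $\highValue_{i^*}$ of $i^*$'s conditional support (a real bidder at its top value always beats $j$), so $\hat\varphi_{i^*}(\type_{i^*})$ is always an interior conditional virtual value.

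The crux is to sign this virtual value, and this is where I expect the main obstacle to lie. For an interior atom $\type^m<\highValue_{i^*}$, the conditional virtual value is $\type^m-(\type^{m+1}-\type^m)\tfrac{\sum_{\type^m<\type^l\le\highValue_{i^*}}\typepmf(\type^l)}{\typepmf(\type^m)}$, which is at most $\type^m-(\type^{m+1}-\type^m)\tfrac{\typepmf(\type^{m+1})}{\typepmf(\type^m)}$ since conditioning on $\possibleValues_i$ only adds mass above $\type^m$. For $\type^m\le\underline{\choice}<\optimalReserve$ we have $m<\reserveIndex^*$, so sparsity (\cref{def:sparse}) makes the latter strictly negative; thus $\hat\varphi_{i^*}(\type_{i^*})<0$ on $E$ and $\mathcal{R}(\underline{\choice})-\mathcal{R}(\gamma)\ge0$. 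The delicate points are precisely (i) checking that conditioning on $\possibleValues_i$ preserves the sign of the virtual value, which is why \cref{def:sparse} is the correct hypothesis rather than plain regularity, and (ii) the strict inequality, which needs $\prob[E]>0$: this follows from full support together with the informativeness of the menu guaranteed by \cref{lem:aug_rev_prin}, since the shill's partition choice at $(\possibleValues,j)$ is outcome-relevant and hence some real bidder sits below $\underline{\choice}$ and loses to $j$ with positive probability.

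Finally, for the strategy claim I would couple the proposed shilling strategy against the all-zero strategy on the same realized real-bidder values and apply the first part at each history where they diverge. At the first node where the shill is offered a partition with minimum in $(0,\optimalReserve)$, selecting it rather than the partition containing $0$ weakly raises expected continuation revenue by the inequality just established (taking $\gamma$ to be the report induced by the zero-partition and $\underline{\choice}$ the higher minimum), and a telescoping argument via the tower property aggregates these one-step improvements into a strict overall gain, showing that the strategy strictly dominates always reporting $0$.
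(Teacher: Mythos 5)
Your proposal is correct and follows essentially the same route as the paper's proof: both express expected transfers as conditional virtual surplus via \cref{lem:ex_int_funcs}, observe that moving $j$'s report from $\gamma$ to $\lowchoice$ changes only the allocation, and that only on the event that the top real bidder's value falls in the window $[\gamma,\lowchoice)$, and then sign the resulting conditional virtual values by bounding the conditional hazard term below by $\typepmf(\type^{m+1})/\typepmf(\type^m)$ and invoking sparsity, with the hypothesis $(\highValue_i,i)\ordering(\lowchoice,j)$ ruling out the top atom in both arguments. The only differences are organizational---you phrase the computation through the allocation-change event $E$ and winner $i^*$ rather than the paper's atom-by-atom sum inside each bidder's ex-interim transfer---and that you explicitly flag the requirement $\prob\sbr{E}>0$ for strictness, a point the paper's proof leaves implicit.
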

\begin{proof}
	Consider any $i \in \realBidders$ and $\possibleValues$ and let $C = \p{\sum_{\type^k \in \possibleValues_i} \typepmf(\type^k)}^{-1}$.
	Then, applying \cref{eq:ic_transfer},

	\begin{align*}
		\expect&\sbr{\transfer_i(\type) \mid \possibleValues} + U_i(\type^{j_1}; \possibleValues) = \expect\sbr{\exinttransfer_i(\type_i;\possibleValues)} + U_i(\type^{j_1}; \possibleValues) \iftoggle{compressLines}{=}{\\&=} C \sum_{m} \typepmf(\type^{j_m}) \exinttransfer_i(\type^{j_m};\possibleValues) \\
		&= C \sum_{m: \type^{j_m} \in \possibleValues_i} \typepmf(\type^{j_m}) \p{\exintquantity_i(\type^{j_m};\possibleValues) \type^{j_m} - \sum_{k < m} \sbr{\tilde{\exintquantity}^k_i(\possibleValues) \cdot (\type^{j_{k+1}} - \type^{j_k})}} \\
		&= C \sbr{ \sum_{m: \type^{j_m} \in \possibleValues_i} \typepmf(\type^{j_m}) \exintquantity_i(\type^{j_m};\possibleValues) \type^{j_m} - \sum_{m: \type^{j_m} \in \possibleValues_i} \sum_{k < m} \typepmf(\type^{j_m})\sbr{\tilde{\exintquantity}^k_i(\possibleValues) \cdot (\type^{j_{k+1}} - \type^{j_k})} } \\
		&= C \sum_{m: \type^{j_m} \in \possibleValues_i} \sbr{ \type^{j_m} \exintquantity_i(\type^{j_m};\possibleValues) - (\type^{j_{m+1}}-\type^{j_m})\frac{\typeDist(\highValue_i)-\typeDist(\type^{j_m})}{\typepmf(\type^{j_m})}\tilde{\exintquantity}^m_i(\possibleValues)} \typepmf(\type^{j_m}).
	\end{align*}
	Applying the definition of the efficient allocation rule $\quantity^E$, we know that for $(\type^m,i) \ordering (\gamma,j)$ and $(\type^m,i) \ordering (\gamma',j)$, we can define
	\iftoggle{compressLines}{$\exintquantity_i(\type^m;\possibleValues_{-j}) \equiv \exintquantity_i(\type^m;\possibleValues_{-j},\cbr{\gamma}) = \exintquantity_i(\type^m;\possibleValues_{-j},\cbr{\gamma'}).$}{\[\exintquantity_i(\type^m;\possibleValues_{-j}) \equiv \exintquantity_i(\type^m;\possibleValues_{-j},\cbr{\gamma}) = \exintquantity_i(\type^m;\possibleValues_{-j},\cbr{\gamma'}).\]}
	Note that $\underline{\choice} \le \min_i \cbr{\highValue_i}$ by assumption and therefore, for $\underline{\choice} \in \p{\gamma, \optimalReserve}$,
	\begin{align*}
		\expect&\big[\transfer_i(\type) \mid \possibleValues=\p{\possibleValues_{-j},\cbr{\gamma}}\big] - \expect\sbr{\transfer_i(\type) \mid \possibleValues=\p{\possibleValues_{-j},\cbr{\underline{\choice}}}} \\
		&= C\sum_{m : \gamma \le \type^{j_m} < \underline{\choice}} \sbr{ \type^{j_m} \exintquantity_i(\type^{j_m};\possibleValues) - (\type^{j_{m+1}}-\type^{j_m})\frac{\typeDist(\highValue_i)-\typeDist(\type^{j_m})}{\typepmf(\type^{j_m})}\tilde{\exintquantity}^m_i(\possibleValues)} \typepmf(\type^{j_m}) \\
		&\le C\sum_{m : \gamma \le \type^{j_m} < \underline{\choice}} \sbr{ \type^{j_m} - (\type^{j_{m}+1}-\type^{j_m})\frac{\typepmf(\type^{j_{m}+1})}{\typepmf(\type^{j_m})}} \typepmf(\type^{j_m})\exintquantity_i(\type^{j_m};\possibleValues_{-j}) < 0 
	\end{align*}
	where the final inequality comes from sparsity. And so,
	\[ \expect\sbr{\sum_{i \in \realBidders} \transfer_i(\type) \mid \possibleValues=\p{\possibleValues_{-j},\cbr{\gamma}}} < \expect\sbr{\sum_{i \in \realBidders} \transfer_i(\type) \mid \possibleValues=\p{\possibleValues_{-j},\cbr{\underline{\choice}}}}, \] as claimed in the statement of the lemma.
	Thus, committing to misreport as $\underline{\choice}$ is strictly beneficial compared to any strategy that can only report $\gamma < \underline{\choice}$.
\end{proof}

\begin{lemma} \label{lem:dutch_sparse_wsp}
	If $\typeDist$ is regular and sparse, then every public, \weakly~shill-proof, and efficient auction is a semi-Dutch auction with cutoff $\optimalReserve$.
\end{lemma}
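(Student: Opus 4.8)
The plan is to adapt the uniqueness argument of \cref{thm:dutch_ssp} to the part of the game tree lying below the reserve price $\optimalReserve$, replacing the ex-post deviations used there by the expectation-based deviation supplied by \cref{lem:shill_regular_strategy}. Everything flows from a single principle, which I will call the \emph{no-dangerous-choice} property (NDC): in a public, weakly shill-proof, efficient auction with $\typeDist$ regular and sparse, no menu $\menuRule(\possibleValues,\player)$ may offer a choice $(\choice,\cdot)$ whose minimum $\lowchoice$ lies in $(0,\optimalReserve)$ to a player $\player$ for whom some other still-live bidder $i$ satisfies $(\highValue_i,i)\ordering(\lowchoice,\player)$. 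Indeed, if such a choice existed, then taking $i$ to be the unique real bidder and $\player$ to be a shill realizes exactly the hypotheses of \cref{lem:shill_regular_strategy} (the priority condition and $\lowchoice<\optimalReserve$), so the prescribed deviation ``select $(\choice,\cdot)$'' strictly increases expected revenue over reporting $0$, contradicting \cref{def:wsp}. Publicness guarantees that the belief set $\possibleValues$ is commonly observed, so this deviation is well defined at the shill's information set, and one verifies that the belief sets to which \cref{lem:shill_regular_strategy} is applied have the product-interval form it requires.

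First I would use NDC to pin down the menu whenever $\possibleValues\subseteq\widecheck{\possibleValues}$, which is condition (ii) of \cref{def:semi_dutch}. By \cref{lem:aug_rev_prin} the menu is an informative partition with $L\ge 2$, so at least one block has positive minimum, and below the reserve that minimum lies in $(0,\optimalReserve)$. NDC then forces any such block to be offered only to an \emph{unoutrankable} player at that value. Because (by orderliness, \cref{def:orderly}) a player is unoutrankable precisely at her top type $\highValue_\player$ and only when she is the globally highest-priority live bidder, it follows that (a) the mover must be that highest-priority bidder, and (b) the only admissible positive-minimum block is the singleton $\{\highValue_\player\}$, forcing every remaining type (in particular $0$) into the complementary, safe block. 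A short argument mirroring the case analysis of \cref{thm:dutch_ssp} eliminates the alternatives: querying a lower-priority bidder makes even her top singleton outrankable, and coarsening the top block (say to $\{\highValue_\player,\text{second-highest type}\}$) creates a positive minimum that another live bidder outranks at the top value. This is exactly $\menuRule^D$ with reserve $0$ from \cref{def:dutch}, giving (ii).

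Next I would establish reachability, condition (i). Fix $\val$ with $\max_i \val_i<\optimalReserve$; along its equilibrium path nothing is ever allocated, by efficiency. The claim is that no bidder's lower bound can be pushed above $0$ until every bidder is confirmed below the reserve, since refining a below-reserve value of $\player$ produces a block with minimum in $(0,\optimalReserve)$, and if some other bidder could still be at or above $\optimalReserve$ then that bidder outranks $\player$ there, violating NDC. Hence, tracing the path, every bidder must first be pooled into $\{w:w<\optimalReserve\}$ (the below-reserve block contains $0$ and is safe) before any positive-type separation occurs, so the first history at which all bidders are jointly known to be below the reserve carries the common belief $\widecheck{\possibleValues}=\{w<\optimalReserve\}^{\numBidders}$, which is therefore reached.

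The step I expect to be the main obstacle is the case analysis behind (ii): I must verify that NDC alone---without the ex-post leverage available in \cref{thm:dutch_ssp}---suffices to force both the player rotation and the exact partition shape, carefully handling the boundary value $\lowchoice=\highValue_{-\player}$ and the orderliness tie-breaking so that the admissible singleton is genuinely unique. A secondary delicacy is making the reachability argument for (i) airtight: I must confirm that ``pooling a bidder below the reserve'' is always realizable by a safe, minimum-$0$ block and that no alternative safe move lets the auction bypass the symmetric state $\widecheck{\possibleValues}$.
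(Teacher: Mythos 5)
Your overall strategy is the paper's own: the ``NDC'' principle is precisely the paper's repeated invocation of \cref{lem:shill_regular_strategy}, and your two steps correspond to conditions (\labelcref{cond:semi_dutch_menu}) and (\labelcref{cond:semi_dutch_cutoff}) of \cref{def:semi_dutch}. Your treatment of the menu characterization inside $\widecheck{\possibleValues}$ (the rotation case, the coarsened-top-block case, and the role of orderliness) tracks the paper's argument closely, modulo details you yourself flag, such as forcing the game to end when the top singleton is selected, which the paper handles with the informativeness condition of \cref{lem:aug_rev_prin} plus one more application of \cref{lem:shill_regular_strategy}.

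The genuine gap is in your reachability argument for condition (\labelcref{cond:semi_dutch_cutoff}). NDC only constrains a block whose minimum lies in $(0,\optimalReserve)$ when some \emph{other} live bidder outranks the mover at that minimum; it says nothing about the mover's \emph{own} possible-value set. Consider a history at which every bidder other than $\player$ has already been confirmed below $\optimalReserve$, $\player$ has top priority, and $\possibleValues_{\player}$ still contains values weakly above $\optimalReserve$. The auction may then offer $\player$ a reserve-straddling menu, e.g.\ the block $\cbr{\type^{\reserveIndex^*-1}}\cup\cbr{w \ge \optimalReserve}$ against its complement $\cbr{w<\type^{\reserveIndex^*-1}}$: the first block has minimum $\type^{\reserveIndex^*-1}\in(0,\optimalReserve)$ but no other live bidder outranks $(\type^{\reserveIndex^*-1},\player)$, so NDC is silent. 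Whichever block $\player$ selects, her set jumps from one containing above-reserve values directly past $\cbr{w<\optimalReserve}$, so the joint state $\widecheck{\possibleValues}$ is bypassed on that path and condition (\labelcref{cond:semi_dutch_cutoff}) fails---yet no NDC violation ever occurred. Hence your conclusion that ``the first history at which all bidders are jointly known to be below the reserve carries the common belief $\widecheck{\possibleValues}$'' does not follow from NDC. What is needed (and what the paper states and argues) is the strictly stronger claim that \emph{any} block with minimum in $(0,\optimalReserve)$ forces $\possibleValues\subseteq\widecheck{\possibleValues}$, including the mover's own coordinate; the paper gets there via the first-occurrence reduction and by taking the shill coalition to be $\shillBidders=\cbr{i:\highValue_i<\optimalReserve}\cup\cbr{\player}$, so that the real bidders are exactly those whose sets still reach above the reserve. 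To repair your write-up you must either prove this stronger statement or supply a separate deviation argument dedicated to the straddling configuration, since the deviation of \cref{lem:shill_regular_strategy} cannot be run there (the would-be deviator wins the item herself and collects zero).
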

\begin{proof}
	Suppose $\typeDist$ is regular and sparse.
	Consider an arbitrary \weakly~shill-proof and efficient auction, $\effDirectAuction$, and consider any $\type$ such that $\max_i \cbr{\type_i} < \optimalReserve$.
	We prove that both parts of the definition of a semi-Dutch auction are necessary.
	
	\textit{Proof of \cref{cond:semi_dutch_cutoff} of \cref{def:semi_dutch}. }
	First we prove that if, for any player $\player$ and set of possible values $\possibleValues$, there exists $(\choice,\cdot) \in \menuRule(\possibleValues,\player)$ where $0 < \lowchoice < \optimalReserve$, then $\possibleValues \subseteq \widecheck{\possibleValues}$.
	Towards contradiction, suppose that there exists a $(\player,\possibleValues,\choice)$ such that $\possibleValues \not\subseteq \widecheck{\possibleValues}$, $\p{\choice,\cdot} \in \menuRule(\possibleValues,\player)$ $\lowchoice \in \p{0,\optimalReserve}$,
	and $\player \in \shillBidders$.
	
	Let us first prove it is without loss to assume $(\player,\possibleValues,\choice)$ is such that for all $i$, $\lowValue_i = 0$ or $\lowValue_i \ge \optimalReserve$. 
	If there exists $(\player,\possibleValues,\choice)$ and $i$ such that $\lowValue_{i} \in (0,\optimalReserve)$, let us label that set as $\possibleValues^{K}$ and let $\possibleValues^0 \supset \possibleValues^1 \supset \ldots \supset \possibleValues^{K}$ be the sequence of on-path possible value sets preceding $\possibleValues^K$.
	Let the players called along the path be $\player^0,\player^1,\ldots,\player^K$ and the value partition selected by player $k$ to be $\choice^k$.
	Note that $\possibleValues^0 = \typeSpace^{\numBidders}$ and so for all $i$, $\tilde{\lowValue}_i = 0$ or $\tilde{\lowValue}_i \ge \optimalReserve$.
	So, the set $\mathcal{K} = \cbr{k < K : \lowchoice^k \in (0,\optimalReserve)} \ne \emptyset$ and therefore $k^* = \min_{k \in \mathcal{K}}\cbr{k}$ is well-defined.
	If $k$ is such that $\lowchoice^k \notin (0,\optimalReserve)$ and for all $i$, $\lowValue^k_i \notin (0,\optimalReserve)$, then it must be the case that for all $i$, $\lowValue^{k+1}_i \notin (0,\optimalReserve)$.
	Since $k^*$ is the first time in the game that a player selects a partition with $\lowchoice^{k} \in (0,\optimalReserve)$, it must be the case that for all $i$, $\lowValue^k_i = 0$ or $\tilde{\lowValue}_i \ge \optimalReserve$.
	Since $\possibleValues^{k^*} \supset \possibleValues^K$, $\possibleValues^{k^*} \not\subseteq \widecheck{\possibleValues}$.
	Thus, $(\player^{k^*},\possibleValues^{k^*},\choice^k)$ is such that $\possibleValues^{k^*} \not\subseteq \widecheck{\possibleValues}$, $\p{\choice^{k^*},\cdot} \in \menuRule\p{\possibleValues^{k^*},\player^{k^*}}$, $\lowchoice^{k^*} \in \p{0,\optimalReserve}$, and for all $i$, $\lowValue^{k^*}_i = 0$ or $\lowValue^{k^*}_i \ge \optimalReserve$.
	
	So, in order to have $\lowchoice \in (0,\optimalReserve)$, it must be the case that $0 \in \possibleValues_{\player}$.
	Thus it is possible for $\player$ to be a shill bidder while having so far only selected partitions that contain $0$.
	Let $\shillBidders = \cbr{i : \highValue_i < \optimalReserve} \cup \cbr{\player}$.
	By assumption that $\possibleValues \nsubseteq \widecheck{\possibleValues}$, there must exist $i$ such that $\highValue_i \ge \optimalReserve$ and thus we can suppose that $\realBidders \ne \emptyset$.
	By assumption that $\lowchoice \in \p{0,\optimalReserve}$, we can suppose that $\realBidders$ is such that for all $i \in \realBidders$, $\highValue_i > \lowchoice$.
	By \cref{lem:shill_regular_strategy}, this would contradict the hypothesis that the auction is \weakly~shill-proof and so we must have $\possibleValues \subseteq \widecheck{\possibleValues}$ when there exists $(\choice,\cdot) \in \menuRule(\possibleValues,\player)$ such that $0 < \lowchoice < \optimalReserve$.
	
	\textit{Proof of \cref{cond:semi_dutch_menu} of \cref{def:semi_dutch}. }
	We now prove that for any player $\player$ and set of possible values $\possibleValues \subseteq \widecheck{\possibleValues}$, it is the case that $\menuRule(\possibleValues,\player) = \menuRule^D_{\reserveIndex^*}(\possibleValues,\player)$.
	Consider any option $\p{\choice,\cdot} \in \menuRule\p{\possibleValues,\player}$.
	Observe that by \cref{lem:shill_regular_strategy}, it is not the case that $0 < \lowchoice < \highValue_{-\player}$.
	So, $\lowchoice \ge \highValue_{-\player}$.
	Since this is the case for all $\possibleValues$, it must therefore be true that $\lowchoice = \highValue_{\player}$.
	This is because if $\highValue_{\player} > \lowValue \ge \highValue_{-\player}$, then there must have existed some earlier menu $\p{\tilde{\choice},\cdot} \in \menuRule\p{\tilde{\possibleValues},\tilde{\player}}$ for which $\underline{\tilde{\choice}} < \overline{\tilde{\possibleValues}}_{\tilde{\player}}$.
	
	So far we have proven that
	\iftoggle{compressLines}{$\menuRule(\possibleValues,\player) = \cbr{\p{\choice_{\mathsf{L}},\tilde{\nextPlayer}_{\mathsf{L}}},\p{\choice_{\mathsf{H}},\tilde{\nextPlayer}_{\mathsf{H}}}}.$}{\[ \menuRule(\possibleValues,\player) = \cbr{\p{\choice_{\mathsf{L}},\tilde{\nextPlayer}_{\mathsf{L}}},\p{\choice_{\mathsf{H}},\tilde{\nextPlayer}_{\mathsf{H}}}}. \]}
	To complete the proof, we have to prove that $\tilde{\nextPlayer}_{\mathsf{L}} = \nextPlayer_{\mathsf{L}}$ and $\tilde{\nextPlayer}_{\mathsf{H}} = \nextPlayer_{\mathsf{H}}$.
	Towards contradiction, suppose $\tilde{\nextPlayer}_{\mathsf{L}} \ne \nextPlayer_{\mathsf{L}}$ or $\tilde{\nextPlayer}_{\mathsf{H}} \ne \nextPlayer_{\mathsf{H}}$.
	If $\tilde{\nextPlayer}_{\mathsf{H}} \ne \nextPlayer_{\mathsf{H}}$, then, by \cref{lem:aug_rev_prin}, Condition (ii), there exists $i$ such that $\highValue_i = \highValue_{\player}, \p{\highValue_i,i} \ordering \p{\highValue_{\player},\player}$.
	We can let $\realBidders = \cbr{i}$ and then apply \cref{lem:shill_regular_strategy} to contradict the hypothesis that the auction is \weakly~shill-proof.
	If $\tilde{\nextPlayer}_{\mathsf{L}} \ne \nextPlayer_{\mathsf{L}}$, then, as argued in the proof of \cref{thm:dutch_ssp}, the menu presented to $\tilde{\nextPlayer}_{\mathsf{L}}$ must not have the auction end immediately, no matter what partition $\tilde{\nextPlayer}_{\mathsf{L}}$ selects.
	Thus, our previous argument for the case where $\tilde{\nextPlayer}_{\mathsf{H}} \ne \nextPlayer_{\mathsf{H}}$ applies, and we can conclude that $\menuRule(\possibleValues,\player) = \menuRule^D(\possibleValues,\player)$.
\end{proof}

\subsubsection*{Proof of \cref{prop:dutch_robust_wsp}}
	The statement follows as a corollary of \cref{lem:dutch_sparse_wsp}.
	Consider any optimal reserve $\optimalReserve$, $\underline{\numPossVals}$ atoms below the optimal reserve, and $\overline{\numPossVals}$ atoms above the optimal reserve.
	We construct a sparse (and regular) distribution $\tilde{\typeDist}$ with optimal reserve $\optimalReserve$,  $\underline{\numPossVals}$ atoms below the optimal reserve, and $\overline{\numPossVals}$ atoms above the optimal reserve.
	To begin, let $\evenSpacing$ such that $\underline{\numPossVals}\evenSpacing \le \optimalReserve$ and $(\underline{\numPossVals}+1)\evenSpacing > \optimalReserve$.
	Then for all $k \le \underline{\numPossVals}$, let $\type^{k+1}-\type^{k} = \evenSpacing$ and $\tilde{\typepmf}(\type^k) = e^{-\lambda(k-2)\evenSpacing} - e^{-\lambda(k-1)\evenSpacing}.$
	Note that
	\[ \tilde{\varphi}^k = \type^k - (\type^{k+1} - \type^k) \frac{1 - \tilde{\typeDist}(\type^k)}{\tilde{\typepmf}(\type^{k})} = (k-1)\evenSpacing - \evenSpacing \frac{e^{-\lambda(k-1)\evenSpacing}}{e^{-\lambda(k-2)\evenSpacing} - e^{-\lambda(k-1)\evenSpacing}}, \]
	and so
	\iftoggle{compressLines}{$\tilde{\varphi}^{k+1} - \tilde{\varphi}^k = k\evenSpacing - (k-1)\evenSpacing = \evenSpacing > 0;$}{\[ \tilde{\varphi}^{k+1} - \tilde{\varphi}^k = k\evenSpacing - (k-1)\evenSpacing = \evenSpacing > 0; \]}
	hence, $\tilde{\typeDist}$ satisfies the regularity condition for $k \le \underline{\numPossVals}$.
	
	In order for $\optimalReserve$ to be an optimal reserve of $\tilde{\typeDist}$, it must be the case that for $k^*$ such that $\tilde{\varphi}^{k^*} \ge 0$ and $\tilde{\varphi}^{k^*-1} < 0$, it is also the case that $\optimalReserve \in ((k^*-1)\evenSpacing,k^*\evenSpacing]$.
	Such a $k^*$ must be equal to $\lceil \check{k} \rceil$, where 
	\iftoggle{compressLines}{$\check{k} \evenSpacing - \frac{\evenSpacing}{e^{\lambda \evenSpacing} - 1} = 0.$}{\[ \check{k} \evenSpacing - \frac{\evenSpacing}{e^{\lambda \evenSpacing} - 1} = 0. \]}
	Thus, $\underline{\numPossVals}-1 = k^* = \lceil \frac{1}{e^{\lambda \evenSpacing} - 1} \rceil$.
	
	In order for $\tilde{\typeDist}$ to be sparse, it must satisfy \cref{eq:sparsity}, which here simplifies to
	\[ (k-1)\evenSpacing \cdot \p{1 + \frac{e^{-\lambda(k-2)\evenSpacing} - e^{-\lambda(k-1)\evenSpacing}}{e^{-\lambda(k-1)\evenSpacing} - e^{-\lambda(k)\evenSpacing}}} < k\evenSpacing \implies \frac{e^{2\lambda \evenSpacing} - 1}{e^{\lambda \evenSpacing} - 1} < \frac{k}{k - 1}. \]
	So, $\tilde{\typeDist}$ is sparse if
	\begin{equation} \label{eq:sparse_exp_dist}
		e^{2\lambda \evenSpacing} < \frac{2 + (e^{\lambda \evenSpacing}-1)(2(k^*-\check{k})-1)}{1 - ((k^*-\check{k})-1)(e^{\lambda \evenSpacing}-1)}.
	\end{equation}
	Selecting $\lambda$ such that $\underline{\numPossVals} - 1 = \check{k} = k^*$, \cref{eq:sparse_exp_dist} is satisfied.
	
	Finally, to finish constructing $\tilde{\typeDist}$, we simply select atoms $\type^{\underline{\numPossVals}+2},\ldots,\type^{\underline{\numPossVals}+\overline{\numPossVals}}$ and respective probability weights to satisfy
	\begin{gather*}
		\tilde{\varphi}^k \text{ is non-decreasing and } \sum_{k=\underline{\numPossVals}+2}^{\underline{\numPossVals}+\overline{\numPossVals}} \tilde{\typepmf}(\type^k) = e^{-\lambda(\underline{\numPossVals}+1)\evenSpacing}.
	\end{gather*}
	This system of constraints has at most $\overline{\numPossVals}$ constraints and $2(\overline{\numPossVals}-1)$ free variables, so the system can be satisfied. 
	Thus, we have constructed a regular and sparse $\tilde{\typeDist}$ that has the required values of $\optimalReserve$,$\underline{\numPossVals}$, and $\overline{\numPossVals}$. Then, we can apply \cref{lem:dutch_sparse_wsp} to conclude the proof.

\subsection{Strategy-Proof Auctions (\cref{ss:strategy_proof}) Appendix}
\begin{definition}\label{def:discrete}
	Let $\typeDist$ be a discrete distribution with ordered atoms $0 = \type^1<\ldots<\type^M$ and $\contDist$ be a continuous distribution with p.d.f.~$\contpdf$.
	If $Y_{\contDist} \sim \contDist$, then $\typeDist$ is a \textbf{discrete approximation} of $\contDist$ when $Y_{\typeDist} \sim \typeDist$ is defined as
	\begin{equation} \label{eq:discretization}
		Y_{\typeDist} = \begin{cases}
			\type^1 & Y_{\contDist} \le \type^1 \\
			\type^k & Y_{\contDist} \in (\type^{k-1},\type^{k}] \\
			\type^{\numPossVals} & Y_{\contDist} > \type^{\numPossVals-1}
		\end{cases}.
	\end{equation}
\end{definition}
For such a distribution $\typeDist$, let $\overline{\Delta} = \max_k \cbr{\type^k - \type^{k-1}}$.
As convention, let $\typeDist^{-1}$ be the left pseudo-inverse: $\typeDist^{-1}(x) = \max\cbr{\type^k : x \ge \typeDist(\type^k)}$.

\begin{definition} \label{def:mhr_dist}
	Let $\typeDist$ be a discrete approximation of $\contDist$. The distribution $\typeDist$ is a \textbf{monotone hazard rate (MHR) distribution} if $\frac{\typepmf(w^k)}{1-\typeDist(w^k)}$ is monotonically increasing in $k$ and $h(x) = \frac{\contpdf(x)}{1-\contDist(x)}$ is monotonically increasing in $x$.
\end{definition}

\begin{lemma} \label{lem:wsp_sp_bound}
	If the value distribution is a discrete MHR distribution $\typeDist$, then for all 
	\begin{equation} \label{eq:req_screen_level}
		\screenVal \ge \typeDist^{-1}\p{ \typeDist\p{\optimalReserve} + \max_{1 \le n < \numBidders}\cbr{\p{\max\cbr{1 - \frac{\optimalReserve}{\optimalReserve+2\overline{\Delta}}\p{\frac{\typepmf(\optimalReserve)}{1-\typeDist(\optimalReserve)}}^{n},0}}^{1/n} }},
	\end{equation}
	the ascending, screening auction with screening level $\screenVal$ is a \weakly~shill-proof, \strategyproof, and optimal auction.
\end{lemma}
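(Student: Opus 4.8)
The plan is to treat the three desiderata separately, since two of them are essentially free. The ascending, screening auction implements the optimal allocation $\quantity^*$ together with the second-price transfer rule $\transfer^2$, and both the ascending phase and the trailing sealed phase induce exactly this same $(\quantity^*,\transfer^2)$ outcome; hence optimality holds by construction, and \strategyproofness~follows from the fact that an optimal auction with the second-price transfer rule is ex-post incentive compatible for real bidders \cite[Proposition~8]{akbarpourLi20}. So the entire content of the lemma is \weak~shill-proofness, which I verify for the candidate equilibrium in which every shill bidder reports its true value $0$; by \cref{lem:wsp_val_equiv_general} it suffices to show that no shill bidder has a profitable ex-interim deviation against this profile.

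I classify a shill's deviations by where they take effect. Any deviation that stays within the ascending phase amounts to inserting a phantom bidder who drops at some level in $[\optimalReserve,\screenVal]$, which is outcome-equivalent to running the auction with a reserve raised from $\optimalReserve$ to that level. Because an MHR distribution is regular (\cref{def:mhr_dist}, \cref{def:regular}), expected revenue is non-increasing in the reserve above the monopoly reserve $\optimalReserve$ --- this is exactly the argument of \cref{ss:shill} --- so every such deviation is unprofitable. The same is true of any \emph{fixed} bid the shill might commit to in the sealed phase, since a fixed bid $\beta\ge\screenVal$ is again outcome-equivalent to a reserve at $\beta$. The only remaining, and genuinely dangerous, deviations are therefore \emph{adaptive}: the shill escalates past $\screenVal$ in order to place a sealed bid in the final phase, betting that a real bidder will outbid her so that she merely inflates the price paid by the winner. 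The key structural point is that this escalation and sealed bid must be committed \emph{before} the surviving competitors' values above $\screenVal$ are resolved, so the shill bears a real risk of winning the item herself and collecting zero revenue.

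The main step is to bound the value of this adaptive deviation and show it is non-positive exactly under \eqref{eq:req_screen_level}. Fix the number $n$ of competing real bidders; the worst case over $1\le n<\numBidders$ is what produces the outer $\max$ in the stated threshold. I compare a gain event and a loss event. In the loss event no competitor survives above $\screenVal$ while some competitor lies in the band $(\optimalReserve,\screenVal]$: the shill's escalation then wins the item and forfeits revenue of at least $\optimalReserve$, and this event has probability at least $\p{\typeDist(\screenVal)-\typeDist(\optimalReserve)}^{n}$. In the gain event a competitor lies above the shill's bid, so a real bidder still wins but at an inflated price; here the extra revenue extracted per level is bounded by the atom spacing (at most $2\overline{\Delta}$, accounting for the boundary atoms at $\optimalReserve$ and $\screenVal$), and the probability of the relevant configuration is controlled, through \cref{lem:ex_int_funcs}, by the hazard rate $\typepmf(\optimalReserve)/\p{1-\typeDist(\optimalReserve)}$ at the reserve. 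Bounding the ratio of expected gain to expected loss --- magnitudes by the factor $(\optimalReserve+2\overline{\Delta})/\optimalReserve$ and probabilities by powers of the hazard rate, using MHR to push the comparison to its worst point at $\optimalReserve$ --- rearranges precisely into the requirement $\typeDist(\screenVal)\ge\typeDist(\optimalReserve)+\p{\ldots}^{1/n}$ of \eqref{eq:req_screen_level}. Since raising $\screenVal$ only shrinks the gain configuration and enlarges the winning-loss configuration, the deviation's value is non-increasing in $\screenVal$, so verifying the inequality at the threshold suffices for all larger $\screenVal$.

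The hard part is this last estimate. The reductions in the first two paragraphs are routine once one observes that fixed-reserve deviations are killed by regularity; the delicate work is to show that \emph{adaptive} escalation cannot do better than a fixed reserve by more than a controlled margin, and to make the two one-sided bounds (gain $\le$ something $\times\,2\overline{\Delta}$, loss $\ge\optimalReserve\times$ probability) simultaneously tight enough that their worst-case-in-$n$ ratio collapses to the explicit hazard-rate expression. This is where the monotone hazard rate is indispensable: a heavier right tail would let surviving bidders' values sit far enough above $\screenVal$ that the shill could push the price up with little risk of winning, destroying the bound; MHR is exactly what caps that tail and forces the winning risk to dominate.
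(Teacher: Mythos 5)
Your reductions in the first two paragraphs (optimality and ex-post IC hold by construction since both phases induce $(\quantity^*,\transfer^2)$; regularity of MHR distributions kills any deviation whose effect is confined to the English phase) match the paper's proof. But the core quantitative step is wrong, in two ways. First, you have attached the two probability bounds to the wrong events. In the paper's argument the \emph{loss} event is controlled by the hazard rate: the adaptive shill's exposure is only a one-atom commitment (she indicates ``continue'' one level at a time and can abort as soon as competitors drop), so the relevant risk is that all $K$ active real bidders drop at exactly the next atom, which by MHR has probability at least $\p{\typepmf(\optimalReserve)/(1-\typeDist(\optimalReserve))}^{K}$; the \emph{gain} event (reaching the sealed phase at all) has probability at most $1-\p{\typeDist(\screenVal)-\typeDist(\optimalReserve)}^{K}$. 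Your assignment is the reverse: loss probability $\p{\typeDist(\screenVal)-\typeDist(\optimalReserve)}^{n}$, gain probability ``controlled by the hazard rate.'' With your assignment the comparison does not rearrange into \eqref{eq:req_screen_level} --- indeed the comparative statics flip: as $\typepmf(\optimalReserve)/(1-\typeDist(\optimalReserve))\to 0$ the stated threshold forces $\typeDist(\screenVal)\to 1$, whereas your inequality would be satisfied by \emph{any} $\screenVal\ge\optimalReserve$. Moreover, your loss bound is only valid for a shill who \emph{commits} to escalating past $\screenVal$ regardless of what she observes; the binding deviation is precisely the adaptive one that ducks out mid-band, whose loss exposure is far smaller than $\p{\typeDist(\screenVal)-\typeDist(\optimalReserve)}^{n}$, so your bound does not rule it out.

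Second, the gain magnitude. Your claim that ``the extra revenue extracted per level is bounded by the atom spacing (at most $2\overline{\Delta}$)'' is incorrect: a sealed shill bid can inflate the winner's payment from the second-highest surviving value all the way up to just below the highest, a gap of up to $\type^{\numPossVals}-\screenVal$. The factor $\optimalReserve+2\overline{\Delta}$ in \eqref{eq:req_screen_level} is not an atom-spacing count; it is a bound on the expected gap between the first and second order statistics of values above the screen, obtained by observing that among MHR distributions the exponential (constant hazard) has the heaviest tail, taking rate $1/\rho^*_{\contDist}$ at the reserve of the underlying continuous distribution $\contDist$, and then paying one $\overline{\Delta}$ for the discretization and another for $\card{\optimalReserve-\rho^*_{\contDist}}\le\overline{\Delta}$. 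Your last paragraph gestures at MHR ``capping the tail,'' which is the right intuition, but the proposal never carries out this order-statistic estimate, and without it (and with the swapped probability bounds) the claimed rearrangement into \eqref{eq:req_screen_level} cannot go through.
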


\subsubsection*{Proof of \cref{lem:wsp_sp_bound}.}
	\textbf{The Ascending, Screening Auction is Orderly and Optimal. }
	We first prove that the auction is well-defined, orderly, and optimal.
	The transfer and allocation function are orderly and optimal, so we only have to show that the menu rule can induce this outcome function.
	Let us examine the English auction phase first.
	The auction ends if and only if $\highValue < \type^{\numPossVals}$.
	When that occurs, the auction has determined $\type_i$ for all $i$ given $\type_i \ge \optimalReserve$.
	Thus, the outcome is fully determined.
	In the second-price auction phase, each value weakly greater than $\screenVal$ is determined precisely (and there are at least two players with values weakly greater than $\screenVal$) and so the outcome rule is determined.

	\textbf{The Ascending, Screening Auction is Ex-Post Incentive Compatible. }
	Observe that the definition of \strategyproofness~(\cref{def:strategy_proof}) is a function solely of the direct mechanism $(\quantity,\transfer)$ and not of the menu rule $\menuRule$.
	Both the English auction phase and the second-price auction use the same transfer function $\transfer^2$.
	The entire auction has the optimal allocation rule $\quantity^*$ and so the ascending, screening auction is \strategyproof.
	
	\textbf{The Ascending, Screening Auction is \Weakly~Shill-Proof. }
	By assumption that the screen level is at least $\optimalReserve$, any potential shill bidder will be asked to play at least once in the English auction before being able to play in the second-price auction.
	If the optimal shill bid is $0$ in the first round of the English auction, then the auction is \weakly~shill-proof because once a bidder reports $0$, she ``drops out'' and does not take another action.
	
	Observe that given the form of the transfer rule, the maximum amount that a bidder $i$ with value $\type_i$ would have to pay is $\type_i$.
	Thus, the maximum possible gain in revenue from a shill bidder deviating is at most the difference between the first and second moment of $\typeDist$.
	Next, note that MHR distributions are regular.
	Regularity implies that if a shill bidder reports a non-zero value in the English auction stage and the auction concludes before reaching the second-price stage, the expected gain must be weakly less than $0$.
	So, when considering the expected gain of misreporting, we can think of the expected gain from manipulating outcomes in the English auction component as at most $0$ and can focus on manipulating outcomes in the second-price stage.
	Therefore, the total gains from misreporting as a shill bidder must be bounded above by the probability that a shill bidder is able to manipulate the outcome of the second-price auction multiplied by the expected difference between the first and second moments of the value distribution conditional on reaching the second-price auction stage.
	
	Let $\contDist$ be the continuous distribution for which $\typeDist$ is a discrete approximation. 
	For an exponential distribution with rate $\lambda$, the expected difference between the first and second moments of $T$ independent draws is $\frac{1}{\lambda}$.
	The exponential distribution, with its constant hazard rate, has the thickest right tail of any MHR distribution and so has the largest expected difference between its first and second moments (see proof of Theorem 5.1 in \citet{bahraniEtAl24}).
	In particular, since we are only interested in value draws above the reserve $\rho_{\contDist}^*$ and $\contDist$ has a non-decreasing hazard rate, we can take the rate $\lambda = h(\rho_{\contDist}^*) = \frac{1}{\rho_{\contDist}^*}$ and conclude that the maximum difference between the first and second moments of $\contDist$ must be bounded above by $\rho_{\contDist}^*$.
	Recall that $h(\rho_{\contDist}^*) = \frac{1}{\rho_{\contDist}^*}$ because $\contDist$ is regular and $\rho_{\contDist}^*$ is the optimal reserve of $\contDist$.
	Examining \cref{eq:discretization}, we can see that our discrete approximation pools draws from a continuous distribution upwards to atoms and so, if the absolute difference between two samples of the continuous distribution is $\kappa$, the absolute difference between the discrete approximation samples would be at most $\kappa + \overline{\Delta}$.
	Thus, the maximum possible expected difference between the first and second moments of $\typeDist$ conditional on being above the reserve is at most $\rho_{\contDist}^* + \overline{\Delta}$.
	Note that we also know that $\card{\optimalReserve-\rho_{\contDist}^*} \le \overline{\Delta}$.
	
	Suppose bidder $i \in \shillBidders$ and it is the first time she is taking an action.
	Then, under the rules of the auction, she has not indicated that her value is greater than $\optimalReserve$ yet.
	For any real bidder $j \ne i$, there are two cases: either bidder $j$ has indicated her value is weakly greater than $\optimalReserve$ $\p{\prob\sbr{\type_j < \screenVal} = \typeDist(\screenVal) - \typeDist(\optimalReserve)}$ or she has not yet taken an action $\p{\prob\sbr{\type_j < \screenVal} = \typeDist(\screenVal)}$.
	So, if $K \le \numBidders$ real bidders have not dropped out yet (i.e., indicated that their value is less than $\optimalReserve$), then the probability that the auction would continue to the second-price auction is at most $1-\p{\typeDist\p{\screenVal} - \typeDist\p{\optimalReserve}}^K$.
	Therefore, the maximum expected gain for a shill bidder from misreporting in her first action of the English auction phase when $K$ bidders have not dropped is at most
	\begin{equation} \label{eq:shill_upper_bound}
		\p{1-\p{\typeDist\p{\screenVal} - \typeDist\p{\optimalReserve}}^K}\p{\rho_{\contDist}^* + \overline{\Delta}}.
	\end{equation}
	
	We now turn to bounding the loss from reporting a non-zero value as a shill bidder.
	If shill bidder $i$ misreports her value as $\type^m$ at some point in the English auction phase and then she wins the item without taking another action, then the transfer the seller would have received had $i$ not misreported is at least $\max\cbr{\optimalReserve,\type^{m-1}} \ge \optimalReserve$, assuming at least one real bidder has value weakly above the reserve.
	To bound the probability that a real bidder $j$ would have won the item if not for shill bidder $i$'s misreport, we can consider the probability that bidder $j$ has indicated her value is at least $\lowValue_j \ge \optimalReserve$.
	By \cref{def:mhr_dist}, the hazard rate of $\contDist$ is non-decreasing.
	So,
	\[ \prob\sbr{\type_j \le \type^m} \ge \frac{\displaystyle  \sum_{\cbr{k:\lowValue_j \le \type^k < \type^m}}\typepmf(\type^k)}{1-\typeDist(\lowValue_j)} \ge \frac{\typepmf(\lowValue_j)}{1-\typeDist(\lowValue_j)} \ge \frac{\typepmf(\optimalReserve)}{1-\typeDist(\optimalReserve)}. \]
	Combining the preceding inequality with our hypothesis that $K$ bidders have not dropped out yet, the expected loss for a shill bidder of misreporting is at least
	\begin{equation} \label{eq:shill_lower_bound}
		\optimalReserve \cdot \p{\frac{\typepmf(\optimalReserve)}{1-\typeDist(\optimalReserve)}}^{K}.
	\end{equation}
	
	We conclude the proof by showing that $\screenVal$ satisfying \cref{eq:req_screen_level} implies that the expected revenue loss from misreporting as a shill is weakly larger than the expected gain.
	Beginning with \cref{eq:req_screen_level}, we can see that for all $K < N$,
	\begin{align*}
		\screenVal &\ge \typeDist^{-1}\p{ \typeDist\p{\optimalReserve} + \max_{1 \le n < \numBidders}\cbr{\p{\max\cbr{1 - \frac{ \optimalReserve}{\optimalReserve+2\overline{\Delta}}\p{\frac{\typepmf(\optimalReserve)}{1-\typeDist(\optimalReserve)}}^{n},0}}^{1/n} }} \\
		&\ge \typeDist^{-1}\p{ \typeDist\p{\optimalReserve} + \p{\max\cbr{1 - \frac{ \optimalReserve}{\optimalReserve+2\overline{\Delta}}\p{\frac{\typepmf(\optimalReserve)}{1-\typeDist(\optimalReserve)}}^{K},0}}^{1/K} } \\
		&\ge \typeDist^{-1}\p{ \typeDist\p{\optimalReserve} + \p{\max\cbr{1 - \frac{ \optimalReserve}{\rho_{\contDist}^*+\overline{\Delta}}\p{\frac{\typepmf(\optimalReserve)}{1-\typeDist(\optimalReserve)}}^{K},0}}^{1/K} }.
	\end{align*}
	This implies that
	\[ \optimalReserve \cdot \p{\frac{\typepmf(\optimalReserve)}{1-\typeDist(\optimalReserve)}}^{K} \ge \p{1-(\typeDist(\screenVal) - \typeDist(\optimalReserve))^K}\p{\rho_{\contDist}^* + \overline{\Delta}}. \]
	The left-hand side of the preceding equation corresponds to \cref{eq:shill_lower_bound}, the lower bound on the expected loss from misreporting as a shill bidder, and the right-hand side corresponds to \cref{eq:shill_upper_bound}, the upper bound on the expected gain from misreporting and thus we have shown that it is equilibrium not to shill when $\screenVal$ is sufficiently high. 
\qed

\subsubsection*{Proof of \cref{prop:wsp_sp_upper_bound}.}
Let $\typeDist_m$ be the discrete approximation of the exponential distribution with rate $\lambda=1$ and atoms at $\cbr{0,2,\ldots,2m}$.
Then, using the argument from the proof of \cref{prop:dutch_robust_wsp}, $\typeDist_m$ is regular and MHR.
For all $m>2$, an optimal reserve is $\optimalReserve = 4$. 
Observe that $\screenValSuper^*=3$ satisfies \cref{eq:req_screen_level} because
\[\frac{\optimalReserve}{\optimalReserve+2\overline{\Delta}}\p{\frac{\typepmf(\optimalReserve)}{1-\typeDist(\optimalReserve)}}^{n} = \frac{1}{2}(e^2-1)^n > 1 \text{ for all } n \ge 1.  \]
We apply \cref{lem:wsp_sp_bound} to conclude that the ascending, screening auction with screen level $\type^{\screenValSuper^*}$ is \weakly~shill-proof, \strategyproof, and optimal for all $\typeDist_m$.
Then, $\worstCaseQueries^{AS,\screenValSuper^*}(\typeDist_m) = 2$ and $\worstCaseQueries^E(\typeDist_m) = m-2$.
Thus, $\worstCaseQueries^{AS}(\typeDist_m,\screenValSuper^*)/\worstCaseQueries^E(\typeDist_m) \to 0$ as $m \to \infty$, concluding the proof.
\qed

\section{Affiliation and Interdependence (\cref{sec:aff_and_sp}) Appendix}
\begin{lemma}[\cite{karlinRinott80}]\label{lem:aff_exp_order}
	For any non-decreasing function $g$, if $\typeDist' \affOrder \typeDist$, then $\expect_{\type \sim \typeDist'}\sbr{g(\type)} \ge \expect_{\type \sim \typeDist}\sbr{g(\type)}$.
\end{lemma}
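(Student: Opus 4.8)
The plan is to recognize this as a ``monotone likelihood ratio implies first-order stochastic dominance'' statement and to reduce it to a single correlation inequality under $\typeDist$. First I would unpack the defining inequality of $\typeDist' \affOrder \typeDist$ (\cref{def:aff_order}) in the special case of comparable arguments: taking any $z \le z'$ and setting $x = z$, $y = z'$ (so that $x \wedge y = z$ and $x \vee y = z'$), the condition $\log\typepmf'(x\vee y) - \log\typepmf'(x) \ge \log\typepmf(y) - \log\typepmf(x\wedge y)$ collapses to $\log\typepmf'(z') - \log\typepmf(z') \ge \log\typepmf'(z) - \log\typepmf(z)$. Hence the likelihood ratio $r(z) := \typepmf'(z)/\typepmf(z)$ (well-defined by the full-support assumption) is non-decreasing on the lattice $\typeSpace^{\numBidders}$.

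The second step records that $r$ integrates to one against $\typeDist$: since $\expect_{\type \sim \typeDist}[r(\type)] = \sum_z \typepmf(z)\,\typepmf'(z)/\typepmf(z) = \sum_z \typepmf'(z) = 1$, I may rewrite the target as a covariance statement. Indeed $\expect_{\typeDist'}[g] = \sum_z g(z)\,r(z)\,\typepmf(z) = \expect_{\typeDist}[g\cdot r]$, so the claim $\expect_{\typeDist'}[g] \ge \expect_{\typeDist}[g]$ is equivalent to $\expect_{\typeDist}[g\, r] \ge \expect_{\typeDist}[g]\,\expect_{\typeDist}[r]$, i.e.\ $\mathrm{Cov}_{\typeDist}(g,r) \ge 0$.

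The crux is then to produce this covariance inequality for the two non-decreasing functions $g$ and $r$. Here I would invoke the FKG inequality: because $\typeDist$ is affiliated, its pmf is log-supermodular, and the FKG/Holley--Preston theorem guarantees that any two coordinatewise non-decreasing functions are positively correlated under $\typeDist$. Applying this to $g$ and $r$ closes the argument. I expect this FKG step to be the only real obstacle: the monotonicity of $r$ and the normalization are purely algebraic, whereas the positive-correlation conclusion genuinely uses affiliation of $\typeDist$ and cannot follow from monotonicity of $r$ alone (a naive injective-pairing argument on upper sets fails, since the map $(x,y)\mapsto(x\vee y,x\wedge y)$ is not injective on a product lattice).

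Finally, I would note an equivalent route that avoids FKG as a black box: the defining inequality of $\typeDist' \affOrder \typeDist$, namely $\typepmf'(x\vee y)\,\typepmf(x\wedge y) \ge \typepmf'(x)\,\typepmf(y)$ for all $x,y$, is exactly the hypothesis of Holley's inequality with the pair $(\mu_1,\mu_2) = (\typepmf',\typepmf)$, whose conclusion is precisely $\expect_{\typeDist'}[g] \ge \expect_{\typeDist}[g]$ for all non-decreasing $g$. Holley's inequality (a consequence of the Ahlswede--Daykin four-functions theorem) needs no separate affiliation hypothesis, so this presentation is slightly cleaner at the cost of quoting a stronger lattice inequality; either way the substance is the same Karlin--Rinott correlation estimate.
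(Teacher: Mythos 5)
Your proof is correct, but there is nothing internal to compare it against: the paper does not prove \cref{lem:aff_exp_order} at all, it imports the result from \citet{karlinRinott80}. Your argument is therefore a legitimate self-contained substitute for the citation, and it is worth recording how your two routes relate to what the paper relies on. Your primary route (extract the monotone likelihood ratio $r = \typepmf'/\typepmf$ from the comparable-pair case of \cref{def:aff_order}, rewrite $\expect_{\type \sim \typeDist'}\sbr{g(\type)} = \expect_{\type \sim \typeDist}\sbr{g(\type)r(\type)}$, then apply FKG) is sound, but it uses one hypothesis beyond the literal statement of the lemma: FKG requires $\typepmf$ itself to be log-supermodular, i.e.\ $\typeDist$ affiliated. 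That is not part of \cref{def:aff_order}, but it is assumed everywhere the lemma is invoked (the generalized model of \cref{sec:aff_and_sp} takes the type distribution to be affiliated, and \cref{prop:aff_shill_order} explicitly considers affiliated $\typeDist,\typeDist'$), and you correctly flag this dependence rather than sweeping it under the rug. Your second route is the sharper one: the defining inequality of $\typeDist' \affOrder \typeDist$, namely $\typepmf'(x \vee y)\,\typepmf(x \wedge y) \ge \typepmf'(x)\,\typepmf(y)$ for all $x,y$ (including incomparable pairs, which your first route discards), is exactly Holley's hypothesis for the pair $(\typepmf',\typepmf)$, with the paper's full-support assumption supplying positivity; Holley's conclusion is the lemma verbatim, with no affiliation condition on either measure. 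That version proves the statement exactly as written and is, in substance, the Karlin--Rinott result the paper cites, so if you keep only one argument, keep that one.
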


\begin{lemma}\label{lem:aff_dec_diff_order}
	For any function $g :\R^{\numBidders} \to \R$ with decreasing differences, if $\typeDist' \affOrder \typeDist$, then for any index $i$ and for all $\type^1_i > \type^2_i > \type^3_i$,
	\begin{align*}
		\expect_{\type \sim \typeDist'}&\sbr{\p{g(\type^1_i,\type_{-i}) - g(\type^2_i,\type_{-i})} - \p{g(\type^2_i,\type_{-i}) - g(\type^3_i,\type_{-i})}} \\
		&\le \expect_{\type \sim \typeDist}\sbr{\p{g(\type^1_i,\type_{-i}) - g(\type^2_i,\type_{-i})} - \p{g(\type^2_i,\type_{-i}) - g(\type^3_i,\type_{-i})}}.
	\end{align*}
\end{lemma}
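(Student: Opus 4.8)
The plan is to recognize that the bracketed quantity inside both expectations depends only on $\type_{-i}$, so the inequality is a comparison of the expectation of a single function under the two distributions, which I will resolve using the affiliation comparison in \cref{lem:aff_exp_order}. Concretely, define
\[
\Delta(\type_{-i}) = \p{g(\type^1_i,\type_{-i}) - g(\type^2_i,\type_{-i})} - \p{g(\type^2_i,\type_{-i}) - g(\type^3_i,\type_{-i})},
\]
so that the two sides of the claim are $\expect_{\type \sim \typeDist'}\sbr{\Delta(\type_{-i})}$ and $\expect_{\type \sim \typeDist}\sbr{\Delta(\type_{-i})}$. Since \cref{lem:aff_exp_order} gives $\expect_{\typeDist'}\sbr{\phi} \ge \expect_{\typeDist}\sbr{\phi}$ for non-decreasing $\phi$ while we want the \emph{reverse} inequality for $\Delta$, it suffices to show that $\Delta$ is non-increasing in $\type_{-i}$ (equivalently, that $-\Delta$ is non-decreasing), and then apply \cref{lem:aff_exp_order} to $\phi = -\Delta$, viewed as a function on $\typeSpace^{\numBidders}$ that simply ignores coordinate $i$.

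The heart of the argument is therefore the monotonicity step. To establish it, I would fix some $\type_{-i}' \ge \type_{-i}$ componentwise and compare $\Delta(\type_{-i}')$ with $\Delta(\type_{-i})$; by telescoping over coordinates it is enough to treat the case where $\type_{-i}'$ raises a single coordinate $j \ne i$. Writing the increment function $h(\type_i) = g(\type_i,\type_{-i}') - g(\type_i,\type_{-i})$, the change $\Delta(\type_{-i}') - \Delta(\type_{-i})$ is exactly the second difference of $h$ in the $i$-direction at $\type^1_i > \type^2_i > \type^3_i$, namely $\p{h(\type^1_i) - h(\type^2_i)} - \p{h(\type^2_i) - h(\type^3_i)}$. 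Thus $\Delta$ is non-increasing in $\type_j$ precisely when the marginal value of raising $\type_j$ has diminishing returns in $\type_i$, i.e. when $h(\type^1_i) - h(\type^2_i) \le h(\type^2_i) - h(\type^3_i)$. This is the content I would extract from the decreasing-differences hypothesis on $g$: the first difference of $g$ in a given coordinate is ``concave'' in $\type_i$ across the three report levels. Iterating this over the coordinates of $\type_{-i}$ yields that $\Delta$ is non-increasing in $\type_{-i}$.

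With monotonicity in hand, applying \cref{lem:aff_exp_order} to the non-decreasing function $-\Delta$ gives $\expect_{\typeDist'}\sbr{-\Delta} \ge \expect_{\typeDist}\sbr{-\Delta}$, which is exactly the desired inequality. I expect the main obstacle to be the monotonicity step rather than the application of \cref{lem:aff_exp_order}: because types are discrete with possibly \emph{unequal} spacings $\type^1_i - \type^2_i \neq \type^2_i - \type^3_i$, the sign of the (unnormalized) mixed second difference of $h$ must be read off carefully from the decreasing-differences property, and one must check that the conclusion is robust to arbitrary atom locations (and compatible with the super-modularity invoked elsewhere in this section). Verifying that the decreasing-differences condition as used in \cref{assum:interdep} delivers exactly the inequality $h(\type^1_i) - h(\type^2_i) \le h(\type^2_i) - h(\type^3_i)$ — and not merely coordinatewise submodularity, which alone would not force $\Delta$ itself to be monotone — is the delicate point I would take the most care with.
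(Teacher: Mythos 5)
Your proposal follows the same overall route as the paper's proof---reduce the claim to \cref{lem:aff_exp_order} via a monotonicity property of the second-difference expression---but the monotonicity you target is different, and it is in fact the right one for the lemma as stated. The paper defines $\tilde{g}(\type_i,\type_{-i}) = \p{g(\type_i+\Delta^1+\Delta^2,\type_{-i}) - g(\type_i+\Delta^1,\type_{-i})} - \p{g(\type_i+\Delta^1,\type_{-i}) - g(\type_i,\type_{-i})}$, i.e.\ the second difference anchored at a \emph{moving} base point $\type_i$, asserts that decreasing differences make $\tilde{g}$ non-increasing in $\type_i$, and then ``immediately'' invokes \cref{lem:aff_exp_order}. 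You instead keep $\type^1_i > \type^2_i > \type^3_i$ fixed (as the statement does), observe that the integrand $\Delta$ depends on $\type$ only through $\type_{-i}$, and argue that $\Delta$ is non-increasing in $\type_{-i}$. Since \cref{lem:aff_exp_order} requires monotonicity in \emph{every} coordinate of the random vector, and the integrand is constant in the random $\type_i$, your monotonicity in $\type_{-i}$ is exactly the condition that licenses the Karlin--Rinott comparison; the paper's claimed monotonicity in $\type_i$ (of a function whose first argument is not even the lemma's fixed $\type^3_i$) would not by itself justify that step. Both arguments, however, ultimately rest on the same reading of ``decreasing differences'': it must be strong enough that the second difference of $g$ in coordinate $i$ is non-increasing in the remaining coordinates---equivalently, that the cross-increment $h(\type_i) = g(\type_i,\type_{-i}') - g(\type_i,\type_{-i})$ for $\type_{-i}' \ge \type_{-i}$ is concave across $\type^1_i > \type^2_i > \type^3_i$---which is the mixed, third-order condition you single out. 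Your flagged ``delicate point'' is therefore genuine: coordinatewise concavity or submodularity alone would not deliver it, the paper never formalizes the definition, and the paper's own proof silently assumes the analogous property in the $\type_i$ direction. In short, your write-up is, if anything, more carefully aimed than the paper's; the unresolved issue you identify is precisely where the paper's proof is loosest, not a gap specific to your approach.
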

\begin{proof}
	Consider any $i$, $\Delta^1$, and $\Delta^2$.
	Then, define
	\[\tilde{g}(\type_i,\type_{-i}) = \p{g(\type_i+\Delta^1+\Delta^2,\type_{-i}) - g(\type_i+\Delta^1,\type_{-i})} - \p{g(\type_i+\Delta^1,\type_{-i}) - g(\type_i,\type_{-i})}.\]
	By the assumption that $g$ has decreasing differences, $\tilde{g}$ is non-increasing in $\type_i$.
	We can then immediately apply \cref{lem:aff_exp_order} to complete the proof.
\end{proof}

\subsubsection*{Proof of \cref{lem:aff_ex_int_funcs}}
	To prove that $\exinttransfer$ has the claimed form, we will consider a specific non-truthful reporting: if a bidder has value $\type^m$, she commits to mis-reporting (selecting partitions) $\type^{m'}$ for the rest of the game.
	Since our direct mechanism is an equilibrium for real bidders, we must have
	\begin{align*}
		\expect\Big[\Big(\exintquantity_i(\type^{j_m};&\possibleValues)\val(\type^{j_m},\type_{-i}) - \exinttransfer_i(\type^{j_m};\possibleValues)\Big) - \\
		 &\p{\exintquantity_i(\type^{j_{m-1}};\possibleValues)\val(\type^{j_m},\type_{-i}) - \exinttransfer_i(\type^{j_{m-1}};\possibleValues)} \mid \type_i=\type^{j_m}, \type_{-i} \in \possibleValues_{-i} \Big] \ge 0, \text{ and} \\
		 \expect\Big[\Big(\exintquantity_i(\type^{j_{m-1}};&\possibleValues)\val(\type^{j_{m-1}},\type_{-i}) - \exinttransfer_i(\type^{j_{m-1}};\possibleValues)\Big) - \\
		 &\p{\exintquantity_i(\type^{j_{m}};\possibleValues)\val(\type^{j_{m-1}},\type_{-i}) - \exinttransfer_i(\type^{j_{m}};\possibleValues)} \mid \type_i=\type^{j_{m-1}}, \type_{-i} \in \possibleValues_{-i} \Big] \ge 0.
	\end{align*}
	Defining $U_i$ to be the ex-interim utility for bidder $i$, the preceding expressions become:
	\begin{gather*}
		U_i(\type^{j_m};\possibleValues) \ge U_i(\type^{j_{m-1}};\possibleValues) + (\val^*_i(\type^{j_{m-1}},\type^{j_m};\possibleValues) - \val^*_i(\type^{j_{m-1}},\type^{j_{m-1}};\possibleValues))\exintquantity_i(\type^{j_{m-1}};\possibleValues), \text{ and} \\
		U_i(\type^{j_{m-1}};\possibleValues) \ge U_i(\type^{j_{m}};\possibleValues) + (\val^*_i(\type^{j_{m}},\type^{j_m};\possibleValues) - \val^*_i(\type^{j_{m}},\type^{j_{m-1}};\possibleValues))\exintquantity_i(\type^{j_{m}};\possibleValues).
	\end{gather*}
	Thus,
	\begin{align*}
		(\val^*_i(\type^{j_{m-1}},&\type^{j_m};\possibleValues) - \val^*_i(\type^{j_{m-1}},\type^{j_{m-1}};\possibleValues))\exintquantity_i(\type^{j_{m-1}};\possibleValues) \\
		&\le U_i(\type^{j_m};\possibleValues) - U_i(\type^{j_{m-1}};\possibleValues) \\
		&\le (\val^*_i(\type^{j_{m}},\type^{j_m};\possibleValues) - \val^*_i(\type^{j_{m}},\type^{j_{m-1}};\possibleValues))\exintquantity_i(\type^{j_{m}};\possibleValues).
	\end{align*}
	Then, by the same logic as in \cref{lem:ex_int_funcs}, a mechanism with an optimal transfer rule is such that
	\begin{equation*}
		U_i(\type^{j_m};\possibleValues) = \sum_{k=2}^{m} (\val^*_i(\type^{j_{k-1}},\type^{j_k};\possibleValues) - \val^*_i(\type^{j_{k-1}},\type^{j_{k-1}};\possibleValues))\exintquantity_i(\type^{j_{k-1}};\possibleValues).
	\end{equation*}
	Therefore, we have that
	\begin{equation}
		\exinttransfer_i(\type^{j_m};\possibleValues) = \exintquantity_i(\type^{j_m};\possibleValues)\val^*_i(\type^{j_{m}},\type^{j_m};\possibleValues) - \sum_{k=2}^{m} (\val^*_i(\type^{j_{k-1}},\type^{j_k};\possibleValues) - \val^*_i(\type^{j_{k-1}},\type^{j_{k-1}};\possibleValues))\exintquantity_i(\type^{j_{k-1}};\possibleValues).
	\end{equation}
	Re-arranging concludes the proof.
\qed

\subsubsection*{Proof of \cref{lem:aff_set_order}}
	There are two cases to consider.
	In the first case, suppose $\max_{\type \in S} \cbr{\type} > \max_{\type \in S'} \cbr{\type}$.
	In this case, we can form subsets $S_1,\ldots,S_K$ and $S'_1,\ldots,S'_K$ such that
	\begin{enumerate}[(i)]
		\item $S = \bigcup_k S_k$ and $S' = \bigcup_k S'_k$;
		\item For all $k$, $x \in S_k$, and $y \in S'_{k}$: $x > y$; and
		\item For all $k' > k$, $x \in S_k$, and $y \in S'_{k'}$: $x < y$.
	\end{enumerate}
	Such partitions can be formed inductively.
	First, define the base case as 
	\[ S_K = \cbr{\type \in S : \type > \max_{\tilde{\type} \in S'} \cbr{\tilde{\type}}} \text{ and } S'_K = \cbr{\type \in S' : \type > \max_{\tilde{\type} \in S \backslash S_K} \cbr{\tilde{\type}}}. \]
	Then, define the inductive case as 
	\begin{gather*}
		S_{k} = \cbr{\type \in S \ \backslash \ \p{\bigcup_{k' > k} S_{k'}} : \type > \max_{\tilde{\type} \in S' \backslash \p{\bigcup_{k' > k} S'_{k'}}} \cbr{\tilde{\type}}} \text{ and } \\
		S'_{k} = \cbr{\type \in S' \ \backslash \ \p{\bigcup_{k' > k} S_{k'}} : \type > \max_{\tilde{\type} \in S \backslash \p{\bigcup_{k' \ge k} S_{k'}}} \cbr{\tilde{\type}}}.
	\end{gather*}
	Note that these two partitions have the same number of elements because $\min_{\type \in S} \cbr{\type} > \min_{\type \in S'} \cbr{\type}$.
	Then, we can re-write the right-hand side of \cref{eq:aff_set_order} as
	\[ \sum_{k=1}^K \p{\prob_F\sbr{\type_i \in S_k \mid S}\expect_{\type \sim \typeDist}\sbr{g(\type) \mid \type_i \in S_k} - \prob_F\sbr{\type_i \in S'_k  \mid S'}\expect_{\type \sim \typeDist}\sbr{g(\type) \mid \type_i \in S'_k}}. \]
	Note that by assumption, the marginals are equal: $\prob_{\typeDist'}\sbr{\type_1 \in S_k} = \prob_{\typeDist}\sbr{\type_1 \in S_k}$ for all $k$.
	We can then recall that $g$ is non-decreasing, weakly super-modular, and that affiliation is equivalent to log-supermodularity of the type distribution to conclude that for all $k$,
	\begin{align*}
		\prob_{F'}&\sbr{\type_i \in S_k \mid S}\expect_{\type \sim \typeDist'}\sbr{g(\type) \mid \type_i \in S_k} - \prob_{F'}\sbr{\type_i \in S'_k \mid S'}\expect_{\type \sim \typeDist'}\sbr{g(\type) \mid \type_i \in S'_k} \\
		&\ge \prob_F\sbr{\type_i \in S_k \mid S}\expect_{\type \sim \typeDist}\sbr{g(\type) \mid \type_i \in S_k} - \prob_F\sbr{\type_i \in S'_k \mid S'}\expect_{\type \sim \typeDist}\sbr{g(\type) \mid \type_i \in S'_k}
	\end{align*}
	Then, we can sum over all $k$ to conclude the proof in this case.
	
	Now, we consider the case where $\max_{\type \in S} \cbr{\type} < \max_{\type \in S'} \cbr{\type}$.
	In this case we can form subsets $S_1,\ldots,S_K$ and $S'_1,\ldots,S'_{K+1}$ such that
	\begin{enumerate}[(i)]
		\item $S = \bigcup_k S_k$ and $S' = \bigcup_k S'_k$;
		\item For all $k$, $x \in S_k$, and $y \in S'_{k}$: $x > y$; and
		\item For all $k' > k$, $x \in S_k$, and $y \in S'_{k'}$: $x < y$.
	\end{enumerate}
	The inductive construction is the same as the first case with the base case changed to
	\[ S'_{K+1} = \cbr{\type \in S' : \type > \max_{\tilde{\type} \in S} \cbr{\tilde{\type}}} \text{ and } S_K = \cbr{\type \in S : \type > \max_{\tilde{\type} \in S' \backslash S'_{K+1}} \cbr{\tilde{\type}}}. \]
	We then re-write the right-hand side of \cref{eq:aff_set_order} as
	\begin{align*}
		\frac{1}{2}&\prob_{\typeDist}\sbr{\type_i \in S_1}\expect_{\type \sim \typeDist}\sbr{g(\type) \mid \type_i \in S_1} - \prob_{\typeDist}\sbr{\type_i \in S'_1}\expect_{\type \sim \typeDist}\sbr{g(\type) \mid \type_i \in S'_1} + \\
		\frac{1}{2}&\prob_{\typeDist}\sbr{\type_i \in S_K}\expect_{\type \sim \typeDist}\sbr{g(\type) \mid \type_i \in S_K} - \prob_{\typeDist}\sbr{\type_i \in S'_{K+1}}\expect_{\type \sim \typeDist}\sbr{g(\type) \mid \type_i \in S'_{K+1}} + \\
		& \frac{1}{2}\sum_{k=2}^K \Bigg(\p{\prob_F\sbr{\type_i \in S_k}\expect_{\type \sim \typeDist}\sbr{g(\type) \mid \type_i \in S_k}  - \prob_{\typeDist}\sbr{\type_i \in S'_k}\expect_{\type \sim \typeDist}\sbr{g(\type) \mid \type_i \in S'_k}} \\
		& \qquad\quad - \p{\prob_{\typeDist}\sbr{\type_i \in S'_k}\expect_{\type \sim \typeDist}\sbr{g(\type) \mid \type_i \in S'_k} - \prob_{\typeDist}\sbr{\type_i \in S_{k-1}}\expect_{\type \sim \typeDist}\sbr{g(\type) \mid \type_i \in S_{k-1}}}\Bigg)
	\end{align*}
	Now, observe that the first two differences must decrease under $\typeDist'$ because $g$ is non-decreasing and weakly super-modular.
	We can then recall that $g$ has decreasing differences, apply \cref{lem:aff_dec_diff_order}, and then sum over all $k$ to conclude that the sum also must decrease and therefore obtain the desired conclusion.
\qed

\subsubsection*{Proof of \cref{prop:aff_shill_order}}

	\paragraph{Affiliation.}
	Towards contrapositive, suppose that the auction is not shill-proof (strong or weak, we will highlight where the proofs diverge) under $\typeDist$ and consider some $\typeDist' \affOrder \typeDist$.
	There exists some realization of $\type$ under $\typeDist$ such that a shill bidder $i$ deviates at $\possibleValues$.
	Since $\quantity$, $\menuRule$, and $\player_0$ are all fixed, we know from \cref{lem:aff_ex_int_funcs} that under weak shill-proofness that the only way for expected revenue under weak shill-proofness to change is for the ex-interim expected value of some bidder $j \ne i$ to change.
	We can also apply \cref{lem:ssp_epab} to see that the same is true under strong shill-proofness.
	Thus, in order to prove that the auction is not shill-proof for either weak or strong shill-proofness under $\typeDist'$, it is sufficient to show that for all $j \ne i$, if the expected transfers from $\exinttransfer_j$ from manipulation $\hat{\possibleValues}$ compared to truthful play $\possibleValues^*$ under $\typeDist$, then the same manipulation is profitable under $\typeDist'$.
	Fix $j$ and consider the profitably manipulated possible values $\hat{\possibleValues}$.
	We can re-express the expected transfers from bidder $j$ from \cref{lem:aff_ex_int_funcs} as
	\begin{align*}
		\expect&_{\type \sim \typeDist}\sbr{ \exinttransfer_j(\type_j;\possibleValues^*) \mid \type \in \possibleValues^*} \\
		&=\expect_{\type \sim \typeDist}\Bigg[\expect\sbr{\quantity_j(\type_j,\type_{-j})\val(\type_j, \type_{-j}) \mid \type_j = \type_j,\quantity_j(\type_j,\type_{-j}) = 1} \\
		&\qquad - \sum_{m : \type^{k_m} < \type_j} \prob\sbr{\quantity_j(\type^{k_m},\type_{-j}) = 1} \big( \expect\sbr{\val(\type^{k_m}, \type_{-j}) \mid \type_j = \type^{k_{m+1}},\quantity_j(\type^{k_m},\type_{-j}) = 1} \\
		&\qquad\qquad\qquad\qquad - \expect\sbr{\val(\type^{k_m}, \type_{-j}) \mid \type_j = \type^{k_{m}},\quantity_j(\type^{k_m},\type_{-j}) = 1} \big) \mid \type_{-j} \in \possibleValues_{-j}^*, \type_j \in \possibleValues^*_j  \Bigg].
	\end{align*}
	Now observe that with manipulation only changes that the expectation is taken with respect to $\type_{-j} \in \hat{\possibleValues}_{-j}$.
	Note that we are not worrying about the true distribution from bidder $j' \ne j$ because when considering revenue from bidder $j$, we consider her expected transfers, and then integrate over $j$'s true expected distribution, $\type_j \in \possibleValues^*_j$.
	Given that $\val$ satisfies \cref{assum:interdep} and $\quantity$ is orderly, we know that
	\begin{equation}\label{eq:transfer_aff}
		\begin{aligned}
			\expect&\sbr{\quantity_j(\type_j,\type_{-j})\val(\type_j, \type_{-j}) \mid \type_j = \type_j,\quantity_j(\type_j,\type_{-j}) = 1} \\
			& - C_m \cdot \p{\expect\sbr{\val(\type^{k_m}, \type_{-j}) \mid \type_j = \type^{k_{m+1}},\quantity_j(\type^{k_m},\type_{-j}) = 1} - \expect\sbr{\val(\type^{k_m}, \type_{-j}) \mid \type_j = \type^{k_{m}},\quantity_j(\type^{k_m},\type_{-j}) = 1}}
		\end{aligned}
	\end{equation}
	also has the same properties (with respect to $\type_j$) for any $m$ such that $\type^{k_m} < \type_j$ and constant $C_m$.
	Therefore, we can conclude that the necessary conditions for \cref{lem:aff_set_order} are satisfied and apply it to conclude that
	\begin{equation*}
		\expect_{\type \sim \typeDist'}\sbr{ \exinttransfer_j(\type_j;\hat{\possibleValues}) - \exinttransfer_j(\type_j;\possibleValues^*) \mid \type \in \possibleValues^*} \ge \expect_{\type \sim \typeDist}\sbr{ \exinttransfer_j(\type_j;\hat{\possibleValues}) - \exinttransfer_j(\type_j;\possibleValues^*) \mid \type \in \possibleValues^*}.
	\end{equation*}
	Thus, the auction is not shill-proof under $\typeDist'$.

	\paragraph{Common Values.}
	Towards contrapositive, suppose that the auction is not shill-proof (strong or weak, we will highlight where the proofs diverge) under $\val$ and consider some $\val' \valOrder \val$.
	There exists some realization of $\val$ under $\val$ such that a shill bidder $i$ deviates at $\possibleValues$.
	Since $\quantity$, $\menuRule$, and $\player_0$ are all fixed, we know from \cref{lem:aff_ex_int_funcs} that under weak shill-proofness that the only way for expected revenue under weak shill-proofness to change is for the ex-interim expected value of some bidder $j \ne i$ to change.
	We can also apply \cref{lem:ssp_epab} to see that the same is true under strong shill-proofness.
	Thus, in order to prove that the auction is not shill-proof for either weak or strong shill-proofness under $\valDist'$, it is sufficient to show that for all $j \ne i$, if the expected transfers from $\exinttransfer_j$ from manipulation $\hat{\possibleValues}$ compared to truthful play $\possibleValues^*$ under $\val$, then the same manipulation is profitable under $\val'$.
	This is immediate: since $\val'$ is point-wise higher than $\val$ and more super-modular, then we can see that \cref{eq:transfer_aff} is larger under $\val'$ than $\val$ no matter the constant and therefore the auction is not shill-proof.

\subsubsection*{Proof of \cref{prop:ssp_pab}}
	The proof that a strongly shill-proof auction must be pay-as-bid is the same as in \cref{lem:ssp_pab} because that argument is ex-post and bidders' values are weakly increasing in all types.
	To prove that the conjectured efficient allocation rule is efficient, observe that by the assumption that a bidder values her own signal more than that of others, the efficient ex-post allocation is such that $\quantity_i = 0$ for $i \notin \argmax_j \cbr{\type_j}$ and $\sum_j \quantity_j(\type) = 1$ for all $\type$.
	The assumption that the seller has $0$ value for the good completes the proof that $\quantity^E$ is the efficient allocation rule.
	
	Recall that a shill-proof auction must be pay-as-bid, and observe that bids must be strictly increasing as a function of own signal in order to be incentive compatible.
	Next, because the bidder who has the highest ex-ante signal will have the highest ex-post valuation, under any allocation rule, the bidder with the highest signal who is allocated will have the highest bid.
	Therefore, to maximize transfers the optimal allocation rule is such that $\quantity_i = 0$ for $i \notin \argmax_j \cbr{\type_j}$.
	To conclude the proof, observe that by \cref{lem:aff_ex_int_funcs}, the fact that changing the allocation rule conditional on only the maximum value will not change that bidder's value, and by the standard arguments, the optimal allocation rule uses a reserve type.

\newpage
\section{Supplemental Appendix}
The following definition for an extensive-form auction is taken\footnote{We modify the definition to remove notation we do not use and to make it specific to auctions.} from \citet{li17}:
\begin{definition} \label{def:game}
	An \textbf{extensive form auction} $\game$ is defined as the tuple $\p{\possibleHistories,\prec,\actionSet,\recentAction,\playerFunction,\cbr{\informationSet_i}_{i \in \potentialBidders},(\actionQuantity,\actionTransfer)}$ such that:
	\begin{enumerate}[(i)]
		\item $\possibleHistories$ is a set of histories, along with a binary relation $\prec$ on $\possibleHistories$ that represents precedence. In addition:
		\begin{enumerate}[(a)]
			\item $\prec$ forms a partial order and $(\possibleHistories,\prec)$ forms an arborescence.
			\item There exists an initial history $\history_{\emptyset}$ such that there does not exist $\history'$ where $\history' \prec \history_{\emptyset}$.
			\item The set of terminal histories is $\possibleTerminalHistories \equiv \cbr{\history : \neg\exists\history \st \history \prec \history' }$.
			\item The set of immediate successors to $\history$ is $\succz(\history)$.
		\end{enumerate}
		\item $\actionSet$ is the set of possible actions.
		\item $\recentAction:\possibleHistories\backslash\history_{\emptyset} \to \actionSet$ maps histories to the most recent action taken to reach it. In addition:
		\begin{enumerate}[(a)]
			\item For all $\history$, $\recentAction(\history)$ is one-to-one on $\succz(\history)$.
			\item The set of actions available at $\history$ is
			\[ \actionSet(\history) \equiv \bigcup_{\history' \in \succz(\history)} \recentAction(\history'). \]
		\end{enumerate}
		\item $\playerFunction: \possibleHistories\backslash\possibleTerminalHistories \to \potentialBidders$ is the player function for any given non-terminal history.
		\item $\informationSet_i$ is a partition of $\cbr{\history : \playerFunction(\history) = i}$ such that:
		\begin{enumerate}[(a)]
			\item $\actionSet(\history) = \actionSet(\history')$ when $\history$ and $\history'$ are in the same cell of the partition, and
			\item $\actionSet(\history) \cap \actionSet(\history') = \emptyset$ when $\history$ and $\history'$ are not in the same cell of the partition.
		\end{enumerate}
		\item For every $\terminalState \in \possibleTerminalHistories$, $\terminalState = (\actionQuantity,\actionTransfer)$, such that $\sum_{i=1}^{\numBidders} \actionQuantity_i \le 1$, $x_i \in [0,1]$, and $t_i \in \R$.
	\end{enumerate}
\end{definition}

In order to define an information order, we will use the notation that when the current set of possible values is $\possibleValues$, a player $i$'s knowledge of what values are possible is $\informationSet^i_{\possibleValues}$.
\begin{definition}
	A menu rule $\menuRule'$ is \textbf{more informative} than $\menuRule$, $\menuRule' \infoOrder \menuRule$, if $\menuRule(\cdot,\cdot) = \menuRule'(\cdot,\cdot)$ for all game states $\possibleValues$ and all possible information sets $\informationSet, \informationSet'$, when $\informationSet \subseteq \informationSet'$, it is the case that
	\[ \prob\sbr{\informationSet_{\possibleValues} = \informationSet \mid \menuRule'} - \prob\sbr{\informationSet_{\possibleValues} = \informationSet' \mid \menuRule'} \ge \prob\sbr{\informationSet_{\possibleValues} = \informationSet \mid \menuRule} - \prob\sbr{\informationSet_{\possibleValues} = \informationSet' \mid \menuRule}. \]
\end{definition}

\begin{proposition}\label{prop:info_shill_order}
	Consider any affiliated type distribution $\typeDist$ and value function $\val$ satisfying \cref{assum:interdep}.
	Suppose $(\quantity,\optTransfer,\menuRule,\player_0)$ is orderly and strongly shill-proof. Then, for any $\menuRule'$ such that $\menuRule \infoOrder \menuRule'$, it is the case that $(\quantity,\transfer^*(\quantity,\menuRule',\player_0,\val,\typeDist),\menuRule',\player_0)$ is strongly shill-proof.
\end{proposition}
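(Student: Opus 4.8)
The plan is to argue by contrapositive, following the same template as the proof of \cref{prop:aff_shill_order}. Throughout, the allocation rule $\quantity$, the menu rule, the starting player $\player_0$, the value function $\val$, and the type distribution $\typeDist$ are held fixed; the only thing that changes as we pass from the more informative $\menuRule$ to the less informative $\menuRule'$ (recall $\menuRule \infoOrder \menuRule'$) is the information that real bidders receive about the evolving game state. Because the menu functions coincide ($\menuRule(\cdot,\cdot)=\menuRule'(\cdot,\cdot)$), any shill strategy feasible under $\menuRule'$ selects exactly the same partitions, and hence drives the objective set of possible values along the same path, under $\menuRule$. It therefore suffices to show that a shill deviation that is profitable under $\menuRule'$ is at least as profitable under $\menuRule$; strong shill-proofness of the $\menuRule$ auction then forces strong shill-proofness of the $\menuRule'$ auction. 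Note that \cref{prop:info_shill_order} asserts only the strong case, so I need not separately treat weak shill-proofness.

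First I would reduce the shill's profit to a change in the payment of a single real bidder. By \cref{prop:ssp_pab} the auction is pay-as-bid, so a shill's only channel of influence on revenue is through the real bidders' bids, and by \cref{lem:ssp_epab} together with the ex-interim formula of \cref{lem:aff_ex_int_funcs} the payment made by a winning real bidder $j$ is pinned down by her ex-interim beliefs at the information set she occupies when she takes her final action. Crucially, since $\menuRule$ and $\menuRule'$ agree as functions and $\quantity$ is fixed, the winning payment $\tau_j(\informationSet)$ assigned to $j$ conditional on occupying a given information set $\informationSet$ is \emph{identical} under the two structures; what differs across $\menuRule$ and $\menuRule'$ is only the distribution over which information set $j$ occupies. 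I would therefore express the shill's profit, under either structure, as the difference between the expectation of $\tau_j(\cdot)$ when the shill deviates and when she reports $0$, the expectation being taken over $j$'s garbled information sets. Via \cref{lem:ssp_epab} this reduction is valid realization-by-realization in $\type_{-\shillBidders}$, so the strong (ex-post) comparison reduces to the same ex-interim object, exactly as in the strong case of \cref{prop:aff_shill_order}.

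The key step, which I expect to be the main obstacle, is to show that this profit is weakly larger under the more informative $\menuRule$. The shill's (upward) deviation shifts $j$'s perceived information set toward the finer cells, in which $j$ perceives stronger competition, equivalently a higher effective reserve; and $\tau_j(\informationSet)$ is monotone along this refinement, since a finer perception of higher competition induces less bid shading and hence a larger payment, exactly as in the uniqueness argument of \cref{thm:dutch_ssp}. The defining inequality of $\menuRule \infoOrder \menuRule'$ says precisely that, for nested information sets $\informationSet \subseteq \informationSet'$, the more informative $\menuRule$ places relatively more weight on the finer cell $\informationSet$. Combining this weight comparison with the monotonicity of $\tau_j$ --- through a summation-by-parts argument over the lattice of information sets that is directly analogous to \cref{lem:aff_set_order} --- should yield that the deviation-induced increase in the expected payment of $j$ is at least as large under $\menuRule$ as under $\menuRule'$. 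The delicate part is handling the pre-deviation and post-deviation expectations simultaneously and verifying that the relevant function of the information set inherits the monotonicity and supermodularity needed to invoke the weight comparison; I anticipate this requiring the careful nested-partition bookkeeping of the proof of \cref{lem:aff_set_order}.

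Finally, summing the per-bidder comparison over all real bidders $j \neq i$ and using that the winning probabilities are held constant by the fixed $\quantity$, I would conclude that a deviation profitable under $\menuRule'$ remains profitable under $\menuRule$. This contradicts the assumed strong shill-proofness of the $\menuRule$ auction, completing the contrapositive and hence the proof.
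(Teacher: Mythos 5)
Your setup tracks the paper's: you argue the correct contrapositive (a profitable shill deviation under the less informative $\menuRule'$ must yield one under the more informative $\menuRule$), you reduce the revenue effect to the winning bidder's transfer via pay-as-bid and the ex-interim transfer formula (\cref{lem:aff_ex_int_funcs}, \cref{lem:ssp_epab}), and you correctly observe that the transfer attached to a given information set is identical under the two structures, since the menu functions and $\quantity$ coincide. The gap is in your key step, which you yourself flag as the main obstacle. First, the claimed monotonicity of $\tau_j$ under refinement is false: refining an information set splits it into cells, some of which reveal \emph{weaker} competition and therefore carry strictly lower transfers. Indeed, since the transfers over the refining cells must average out to the transfer at the coarse cell (Bayes plausibility together with the fixed allocation), they cannot all lie weakly above it unless they are all equal, so ``finer $\Rightarrow$ higher payment'' is untenable as a general claim. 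Second, and more fundamentally, an expectation-based comparison of deviation profits across structures is the wrong tool for \emph{strong} shill-proofness, which is an ex-post property: a violation under $\menuRule'$ hands you a single realization at which the deviation is profitable, not a positive expected profit over information-set realizations, so the chain ``violation under $\menuRule'$ $\Rightarrow \expect_{\menuRule'}\sbr{\cdot} > 0 \Rightarrow \expect_{\menuRule}\sbr{\cdot} > 0$'' cannot get started, and no supermodularity or summation-by-parts argument in the spirit of \cref{lem:aff_set_order} repairs this mismatch.

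The paper's proof avoids both problems by working realization-by-realization, exploiting that strong shill-proofness fails as soon as \emph{one} bad realization exists. If the information set at which the deviation was profitable under the coarser structure is also realizable under the finer one, it conditions on that realization; since transfers conditional on an information set agree across structures (your own observation), the same deviation is profitable and strong shill-proofness fails. Otherwise, the finer structure's realizable information sets partition the violating coarse cell; because the transfers over these cells average out to the coarse transfer (and the deviation partition excludes $0$ by \cref{lem:aug_rev_prin}), at least one cell carries a weakly higher transfer, and selecting that realization produces the required profitable deviation. The only probabilistic input is the law of iterated expectations plus a max-selection---no monotonicity in refinement and no analogue of \cref{lem:aff_set_order} is needed.
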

\begin{proof}
	Towards contrapositive, suppose that the auction is not strongly shill-proof under $\menuRule$ and consider some $\menuRule' \infoOrder \menuRule$.
	There exists some realization of $\type$ under $\typeDist$ such that a shill bidder $i$ deviates at $\possibleValues$.
	Since $\valDist$ is affiliated and $\quantity$, $\player_0$, and $\menuRule$ are all fixed, we can apply \cref{lem:aff_exp_order,lem:aff_ex_int_funcs,lem:ssp_epab} to conclude that the only way for expected revenue under strong shill-proofness to change is for ex-interim expected values for the winner $j \ne i$ to change.
	There can be no change in allocation since $j$ has the highest value among the real bidders and so since the allocation is orderly, any change in the allocation would result in a shill bidder winning the item.
	Recall that the winning bidder conditions her valuation on the realization that she wins, i.e., she conditions on the assumption that $(\type_j,j) \ordering (\type_k,k)$ for all $k \ne j$.
	Thus, it is without loss with regards to how the winning bidder $j$ will estimate her valuation to assume that $\informationSet_{\possibleValues}$ is such that for all $y \in \informationSet_{\possibleValues}$ and $k \ne j$, $(\type_j,j) \ordering (y_k,k)$.
	There are now two cases to consider.
	In the first case, consider that under $\menuRule'$, the information sets in the support are a super-set of those in the support under $\menuRule$.
	We then condition on the event of the same realization of the information set that leads to a profitable deviation to find that the auction is not strongly shill-proof under $\menuRule'$.
	In the second case, we do not assume that the support under $\menuRule$ contains elements that $\menuRule'$ does not.
	It may be that the realization of the information set for the winner's last move is not in the support of $\menuRule$.
	However, because $\menuRule' \infoOrder \menuRule$, we can know that there is some realization of the information structure such that $0$ is not contained in it.
	This is because we know that there exists an information set that is strictly smaller than that realization and by Bayes plausibility, the complement must also be in it.
	Further, a shill bidding deviation means selecting a partition which does not include $0$ and by \cref{lem:aug_rev_prin} Condition 2a, we know that said partition does not intersect the partition containing $0$.
	Since the allocation is fixed, one of the partitions must increase transfers from the winner in order to have the average transfer under every realization of the information set average out to the transfer under the coarser information set.
	We then select the highest transfer among these realizations to find a profitable deviation and complete the proof.
\end{proof}

\begin{example}\label{ex:fpa}
	Consider the sealed-bid, first-price (pay-as-bid) optimal auction with equilibrium bids $b^1$ in the regular, IPV environment.
	The na\"ive implementation of allocation and transfer rule $(\quantity^*,\transfer^1)$ in a public auction would be to query each bidder sequentially on what her value is and then have the payment rule be $\transfer^1$, but $\transfer^1$ is not the direct transfer rule of any equilibrium of this game. 
	Indeed, consider the last bidder who takes a move, and label that bidder $\numBidders$. 
	If $\type_{\numBidders} > \max_{i < \numBidders} \cbr{b_{i}}$, then the only possible equilibrium bid---and therefore the transfer function---is $\max_{i < \numBidders} \cbr{b_{i}}$.
	
	If we modify the direct transfer rule to represent the bid that each bidder submits in equilibrium in this sequential form (as one could solve for inductively), the auction would be \weakly~shill-proof by regularity. In particular, while a shill bid can force later bidders to pay a higher price, the probability that no one will want to pay that higher price outweighs the benefit by regularity.
	However, such an auction is not \strongly~shill-proof because, given full knowledge of real bidders' valuations, a shill bidder will be incentivized to bid just below the highest valuation of a subsequent bidder.
\end{example}

\begin{example}\label{ex:robust_wsp}
	Let $\contDist(x) = 1 - e^{-0.1 x}$ and let $\typeDist_1,\typeDist_2$ be discrete approximations of $\contDist$ with atoms $\cbr{0,5,9,14,20}$ and $\cbr{0,3,7,14,20}$, respectively. 
	It can be verified that both these distributions are regular and have optimal reserve $\optimalReserve = 14$.
	Consider a variant of the efficient Dutch auction, with the modification that, if all bidders have indicated values less than $20$, then the auction queries bidders from lowest-to highest-priority as to whether their value is at least $9$.
	If no one indicates that their value is at least $9$, then the Dutch auction continues.
	If at least one person does indicate that their value is at least $9$, then bidders are queried from lowest- to highest-priority as to whether their value is $14$, and the transfer is $14$ if at least two people have value $14$, and $9$ if only one person does.
	It can be verified that if the value distribution is $\typeDist_1$, the auction just described is \weakly~shill-proof, but if the value distribution is $\typeDist_2$, then the auction is not \weakly~shill-proof.
	When the value distribution is $\typeDist_2$, in expectation, a shill bidder will want to report that her value is $9$.
	In fact, \cref{lem:dutch_sparse_wsp} (see appendix) implies that if the value distribution is $\typeDist_2$, then the auction in this example must be a semi-Dutch auction with cutoff at least $14$.	
\end{example}

\subsection{Credibility} \label{ss:cred}
The following definitions are adapted from \cite{akbarpourLi20} to match our notation and specialized to the auction setting:

For any extensive form game $\game$, we can define a messaging game as follows:
\begin{enumerate}[1.]
	\item The auctioneer chooses to:
	\begin{enumerate}[(a)]
		\item Select an outcome and end the game; or
		\item Go to step 2.
	\end{enumerate}
	
	\item The auctioneer chooses some bidder $i \in \potentialBidders$ and sends a message $m= I_i \in \mathcal{I}_i$.
	
	\item Bidder $i$ privately observes message $m=I_i$ and chooses reply $r \in \actionSet(I_i)$.

	\item The auctioneer privately observes $r$.
	
	\item Go to step 1.
\end{enumerate}
We can now write bidder $i$'s observations in the game as $((m_i^k,r_i^k)_{k=1}^{\tau_i},\omega_i)$ where $\tau_i$ is the number of observations $i$ has and $\omega_i$ is the information partition over outcomes that $i$ observes.
Let $o_i(\strategy_0,\strategy,\type)$ be $i$’s observation when the auctioneer plays $\strategy_0$, the bidders play $\strategy$, and the type profile is $\type$.

\begin{definition}[\cite{akbarpourLi20}]
	Let $\strategy_0^G$ be the \textbf{rule-following auctioneer strategy}.
	Formally, $\strategy_0^G$ is defined by the following algorithm: Initialize $\hat{\history} := \history_{\emptyset}$. At each step, if $\hat{\history} \in \terminalState$, end the game and choose outcome $(\actionQuantity,\actionTransfer)(\hat{\history})$.
	Else, contact agent $\playerFunction(\hat{\history})$ and send message $m= \informationSet_{\playerFunction(\hat{\history})}$ such that $\playerFunction(\hat{\history}) \in \informationSet_{\playerFunction(\hat{\history})}$. 
	Upon receiving reply $r$, update $\hat{\history}$ to $\history$ such that $\history \in \succz(\hat{\history})$ and $\recentAction(\history) = r$, then iterate.
\end{definition}
\begin{definition}[\cite{akbarpourLi20}, Definition 3]\label{def:safe_strat}
	Given some promised strategy profile $(\strategy_0,\strategy)$, auctioneer strategy $\hat{\strategy}_0$ is safe if, for all agents $i \in \potentialBidders$ and all type profiles $\type$, there exists $\type'_{-i}$ such that
	$o_i(\hat{\strategy}_0,\strategy,\type) = o_i(\strategy_0,\strategy,(\type_i,\type'_{-i}))$.
	We denote by $\Sigma^{*}_{0} (\strategy_0,\strategy)$ the set of \textbf{safe strategies}.
\end{definition}
\begin{definition}[\cite{akbarpourLi20}, Definition 4] \label{def:credible}
	$(\game,\strategy)$ is \textbf{credible} if
	\[ \strategy^G_0 \in \argmax_{\strategy_0 \in \Sigma^*_0(\strategy_0^G,\strategy)} \cbr{ \expect_{\type}\sbr{\sum_{i \in \potentialBidders} \actionTransfer_i(\strategy_0,\strategy,\type)}} \]
\end{definition}

\subsubsection*{Proof of \cref{prop:cred_nesting}}
\textbf{\Strong~Shill-Proofness $\to$ Credibility. } We prove the contrapositive: Suppose $(\game,\strategy)$ is not credible.
Let $\hat{\strategy}_0 \in \Sigma^*_0(\strategy_0^G,\strategy)$ be a profitable and safe deviation by the auctioneer.
By \cref{def:safe_strat}, there exists $\type$ and $\cbr{\type'_{-i}}_i$ such that $o_i(\hat{\strategy}_0,\strategy,\type) = o_i(\strategy_0,\strategy,(\type_i,\type'_{-i}))$ for all $i$.
By \cref{lem:winner_paying}, only one bidder $i^*$ has $\actionTransfer_{i^*}(\strategy(\type_{i^*},\type'_{-i^*})) \ne 0$ and so $\actionTransfer_{i^*}(\strategy(\type_{i^*},\type'_{-i^*})) > \actionTransfer_{i^*}(\strategy(\type))$ because $\hat{\strategy}_0$ is profitable.
Then, let $\realBidders = \cbr{i^*}$, and by ex-post monotonicity, 
\[ \actionTransfer_{i^*}(\strategy(\type_{i^*},\type'_{-i^*})) > \actionTransfer_{i^*}(\strategy(\type)) \ge \actionTransfer_{i^*}(\strategy(\type_{i^*},0)), \]
and so the auction is not \strongly~shill-proof.

\textbf{Credibility $\to$ \Weak~Shill-Proofness. }  We prove the contrapositive: Suppose $(\game,\strategy)$ is not \weakly~shill-proof.
Let $\hat{\strategy} \in \Sigma_{\shillBidders}$ be a profitable shilling strategy.
Then, by \cref{def:game_eq}, for all $\realBidders$, there exists $\type,\type'$ such that $\hat{\strategy}(\type;\realBidders) = \strategy(\type;\potentialBidders)$.
Consider the following reporting strategy $\hat{\strategy}_0$: for all $i \in \realBidders$, report play as if $i \in \realBidders$ is following $\hat{\strategy}$; and for all $i \notin \realBidders$, report in the rule-following manner.
This strategy is safe because $\hat{\strategy} \in \Sigma_{\shillBidders}$.
To see that it is profitable compared to $\strategy_0^{\game}$, consider what happens when the winning bidder $i$ is or is not in $\realBidders$.\footnote{Note that if no one has value above the optimal reserve, there will be no winner under any safe strategy, so let us only consider the case where the good is allocated.}
Conditional on $i \in \realBidders$ winning, $\hat{\strategy}_0$ increases expected revenue because $\hat{\strategy}$ is a profitable shill bidding strategy.
Conditional on $i \notin \realBidders$ winning, shill bidding would have led $0$ revenue for the seller and, by \cref{lem:winner_paying}, $\hat{\strategy}_0$ must have non-negative revenue.
Thus, our described reporting strategy is a profitable, safe strategy and therefore the auction is not credible.
\qed

\subsection{Generalizing Credibility in the Single-Action Case}
\begin{definition} \label{def:safe_dev}
	Fix a single-action auction with exogenous signal $\experiment$ and a set of real bidders $\realBidders$. The set of safe deviations to report to $i \in \potentialBidders$ is
	\begin{align*}
		\devActionSet_i&(\type_{j \le i}) = \\
		&\cbr{ \action : \exists \tilde{\type}_{-i} \st \sbr{j < i, \type_j \ne 0 \implies \tilde{\type}_j = \type_j} \text{ and } \action = \p{\strategy_i(\type_i;\experiment_i(\tilde{\type}_{j < i})), \strategy_{-i}\p{\type_i,\tilde{\type}_{-i};\potentialBidders}} }.
	\end{align*}
	The total set of \textbf{safe deviations} is
	\begin{equation*}
		\devActionSet(\type) = \cbr{ \cbr{\reportedAction{i}} : \reportedAction{i} \in \devActionSet_i(\type_{j \le i}) \text{ and } \sum_{i=1}^{\numBidders} \actionQuantity_i\p{\reportedAction{i}} \le 1 }.
	\end{equation*}
\end{definition}
\cref{def:safe_dev} allows the auctioneer to report any value she chooses when a bidder's declared valuation is $0$.
Note that if we take the canonical setting where there are no exogenous signals, the above assumption is without loss.

\begin{definition} 
	\label{def:gen_credible}
	A single-action auction is \textbf{$\experiment$-credible} if for all $\type$ and $\cbr{\reportedAction{i}} \in \devActionSet(\type)$, we have
	\begin{equation*}
		\sum_{i} \actionTransfer_i(\reportedAction{i}) \le \sum_{i} \actionTransfer_i(\strategy(\type; \potentialBidders)).
	\end{equation*}
\end{definition}

\begin{lemma} \label{lem:direct_cred}
	For a single-action auction, define the augmented (direct) inverse $\widecheck{\experiment}_{i}^{-1}$ as $\widecheck{\experiment}_{i}^{-1}(\type) = \cbr{0} \cup \experiment_i^{-1}(\type_{j<i})$.
	Then for
	\[ \devValSet(\type) = \cbr{ \cbr{\reportedValue{i}} : \reportedValue{i} \in \widecheck{\experiment}^{-1}_i(\type), \sum_{i} \quantity_i(\reportedValue{i}) \le 1 },\]
	the auction is credible if and only if for all $\type$, and $\cbr{\reportedValue{i}} \in \devValSet(\type)$,
	\[ \sum_i \transfer_i(\reportedValue{i}) \le \sum_i \transfer_i(\type). \]
\end{lemma}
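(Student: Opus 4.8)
The plan is to treat Lemma~\ref{lem:direct_cred} as a translation of the $\experiment$-credibility condition (\cref{def:gen_credible}) from the language of action profiles into the language of the direct mechanism $(\quantity,\transfer)$. First I would invoke the single-action revelation principle (\cref{lem:one_shot_rev_principle}): for every value profile $\type'$ the equilibrium play $\strategy(\type';\potentialBidders)$ realizes the direct outcome, so $\actionQuantity_i(\strategy(\type';\potentialBidders))=\quantity_i(\type')$ and $\actionTransfer_i(\strategy(\type';\potentialBidders))=\transfer_i(\type')$ for every $i$. In particular the benchmark obeys $\sum_i\actionTransfer_i(\strategy(\type;\potentialBidders))=\sum_i\transfer_i(\type)$, so once the two deviation sets are matched up, the inequality in \cref{def:gen_credible} and the inequality in the statement become literally identical.

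The core of the argument is to exhibit a transfer- and allocation-preserving bijection between the safe action deviations $\devActionSet(\type)$ and the safe value deviations $\devValSet(\type)$. Given $\{\reportedAction{i}\}\in\devActionSet(\type)$, each component $\reportedAction{i}\in\devActionSet_i(\type_{j\le i})$ is, by \cref{def:safe_dev}, the action profile $(\strategy_i(\type_i;\experiment_i(\tilde\type_{j<i})),\strategy_{-i}(\type_i,\tilde\type_{-i};\potentialBidders))$ for some witness $\tilde\type_{-i}$, and this is precisely the equilibrium play of the value profile $\reportedValue{i}:=(\type_i,\tilde\type_{-i})$. Hence by the revelation principle $\actionTransfer_i(\reportedAction{i})=\transfer_i(\reportedValue{i})$ and $\actionQuantity_i(\reportedAction{i})=\quantity_i(\reportedValue{i})$. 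I would define the forward map by $\{\reportedAction{i}\}\mapsto\{\reportedValue{i}\}$ and the backward map by running the same identification in reverse, using surjectivity of each $\experiment_i$ to realize any admissible value profile as equilibrium play.

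It then remains to check that this map carries the membership and feasibility constraints defining $\devActionSet(\type)$ exactly onto those defining $\devValSet(\type)$. For the per-bidder constraint, the safety requirement in \cref{def:safe_dev}---that $\tilde\type_j=\type_j$ for every earlier bidder $j<i$ with $\type_j\ne 0$, while the reports of earlier value-$0$ bidders and of all later bidders are unconstrained---must be shown to coincide with $\reportedValue{i}\in\widecheck{\experiment}^{-1}_i(\type)=\{0\}\cup\experiment_i^{-1}(\type_{j<i})$, where the $\{0\}$ summand records the auctioneer's always-safe option of presenting bidder $i$ a world in which she is dropped, contributing $\quantity_i=\transfer_i=0$ to both sums. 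For the joint constraint, since allocations are preserved component-by-component, $\sum_i\actionQuantity_i(\reportedAction{i})\le 1$ holds iff $\sum_i\quantity_i(\reportedValue{i})\le 1$, so the bijection respects feasibility. With the bijection in hand, $\sum_i\actionTransfer_i(\reportedAction{i})\le\sum_i\transfer_i(\type)$ holds for every $\{\reportedAction{i}\}\in\devActionSet(\type)$ if and only if $\sum_i\transfer_i(\reportedValue{i})\le\sum_i\transfer_i(\type)$ holds for every $\{\reportedValue{i}\}\in\devValSet(\type)$, which is exactly the asserted equivalence. I expect the main obstacle to be this per-bidder matching step: one must argue carefully that the action-level safety constraints, phrased in terms of what each bidder can detect from her own observation, translate into the value-level description $\widecheck{\experiment}^{-1}_i$ without admitting either spurious or missing reports, and in particular that the $\{0\}$ augmentation captures neither more nor fewer safe reports than \cref{def:safe_dev} allows.
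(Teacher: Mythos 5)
Your proposal is correct and takes essentially the same route as the paper: the paper's entire proof is a one-line application of \cref{lem:one_shot_rev_principle}---the unique correspondence between $(\actionQuantity,\actionTransfer)$ and $(\quantity,\transfer)$---to \cref{def:safe_dev} and \cref{def:gen_credible}, which is exactly the outcome-preserving identification of safe action deviations with safe value deviations (each $\reportedAction{i}$ being the equilibrium play of a witness profile $(\type_{i},\tilde{\type}_{-i})$) that you spell out. The per-bidder constraint matching you flag as the main obstacle, including how the $\cbr{0}$ augmentation lines up with the zero-value exception in \cref{def:safe_dev}, is precisely the part the paper leaves implicit, so your treatment is, if anything, more explicit than the published argument.
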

\begin{proof}
	Apply \cref{lem:one_shot_rev_principle}, specifically the unique mapping between $(\quantity,\transfer)$ and $(\actionQuantity,\actionTransfer)$ to \cref{def:safe_dev,def:credible} to see that the lemma holds.
\end{proof}

\begin{lemma} \label{lem:wsp_not_ssp}
	Suppose a single-action auction is \weakly~shill-proof, but not \strongly~shill-proof.
	Then, there exist $\realBidders$, $\type_{\realBidders}$, and $\type_{-\realBidders}$ such that
	\begin{equation} \label{eq:ssp_prof_deviation}
		\sum_{k \in \realBidders} \transfer_k(\type_{\realBidders},\type_{-\realBidders}) > \sum_{k \in \realBidders} \transfer_k(\type_{\realBidders},0).
	\end{equation}
\end{lemma}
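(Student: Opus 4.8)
The plan is to unpack the definition of strong shill-proofness and then translate a failing ex-post constraint into the direct-mechanism statement via the single-action revelation principle. Since the hypothesis already gives that the auction is weakly shill-proof, the first clause of \cref{def:ssp} holds, so the \emph{only} way for the auction to fail to be strongly shill-proof is for the ex-post inequality in \cref{def:ssp} to be violated. Hence there exist a shill set $\shillBidders$ (with real bidders $\realBidders = \potentialBidders \backslash \shillBidders$), a profitable shill deviation $\strategy_{\shillBidders}'$, and a profile $\type_{-\shillBidders}$ of real-bidder types for which
\[
\sum_{j \in \realBidders} \actionTransfer_j\p{\strategy(0,\type_{-\shillBidders};\realBidders)} < \sum_{j \in \realBidders} \actionTransfer_j\p{\strategy_{\shillBidders}',\strategy_{-\shillBidders}(0,\type_{-\shillBidders};\realBidders)}.
\]

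Next I would pass from this extensive-form statement to the direct mechanism $(\quantity,\transfer)$ supplied by \cref{lem:one_shot_rev_principle}. The right-hand side is the total transfer extracted from the real bidders when the shills play $\strategy_{\shillBidders}'$ and the real bidders respond in equilibrium to what they observe; the left-hand side is the equilibrium play with the shills reporting $0$. Because shill deviations lie in $\Sigma_{\shillBidders}$ (the ``as-if-real'' strategies of \cref{def:game_eq}, which in the single-action/augmented-direct game coincide with all deviations), the definition of $\Sigma_{\shillBidders}$ supplies, for the fixed $\type_{-\shillBidders}$, a shill report profile $\type_{\shillBidders}$ such that the deviant play reproduces the equilibrium play of the all-real profile with shill coordinates $\type_{\shillBidders}$ and real coordinates $\type_{-\shillBidders}$. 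Applying the outcome correspondence of \cref{lem:one_shot_rev_principle} termwise, the right-hand side equals $\sum_{j \in \realBidders} \transfer_j\p{\type_{\shillBidders},\type_{-\shillBidders}}$ and the left-hand side equals $\sum_{j \in \realBidders} \transfer_j\p{0,\type_{-\shillBidders}}$.

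To finish, I would relabel. Setting $\type_{\realBidders} := \type_{-\shillBidders}$ and $\type_{-\realBidders} := \type_{\shillBidders}$, and noting that the transfer is a function of the full type profile regardless of how its coordinates are listed, the reversed ex-post inequality becomes exactly
\[
\sum_{k \in \realBidders} \transfer_k(\type_{\realBidders},\type_{-\realBidders}) > \sum_{k \in \realBidders} \transfer_k(\type_{\realBidders},0),
\]
which is \cref{eq:ssp_prof_deviation}.

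I expect the one genuinely delicate step to be the translation in the second paragraph: a shill deviation in a single-action game can in principle be \emph{adaptive}, conditioning on the prior actions the shills observe, so one must argue that, once $\type_{-\shillBidders}$ is fixed, the deviant play collapses to a single realized report profile that is itself on-path for some fixed type vector. This is precisely what membership $\strategy_{\shillBidders}' \in \Sigma_{\shillBidders}$ buys us, together with the fact that the direct game is deterministic given the realized reports, so that ``committing'' to the report $\type_{\shillBidders}$ and ``adaptively arriving'' at it induce identical observations for the real bidders and hence identical transfers.
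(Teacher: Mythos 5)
Your proposal is correct and takes essentially the same route as the paper's own proof: weak shill-proofness reduces the failure of strong shill-proofness to a violation of the ex-post inequality in \cref{def:ssp}, membership of the deviation in $\Sigma_{\shillBidders}$ (from \cref{def:game_eq}) collapses the (possibly adaptive) deviant play to on-path play for some all-real type profile, and the correspondence in \cref{lem:one_shot_rev_principle} converts this into a profitable misreport in the direct mechanism. The only detail worth making explicit is that rewriting the left-hand side $\sum_{j \in \realBidders} \actionTransfer_j\p{\strategy(0,\type_{-\shillBidders};\realBidders)}$ as $\sum_{j \in \realBidders} \transfer_j\p{0,\type_{-\shillBidders}}$ invokes weak shill-proofness a second time, since the correspondence in \cref{lem:one_shot_rev_principle} is stated relative to $\strategy(\,\cdot\,;\potentialBidders)$ rather than $\strategy(\,\cdot\,;\realBidders)$.
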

\begin{proof}
	Suppose that $(\game,\strategy)$ is \weakly~shill-proof, but not \strongly~shill-proof.
	Because $(\game,\strategy)$ is \weakly~shill-proof, for all $\type$ and $\realBidders,\realBidders'$, we can define $\hat{\strategy}(\type) \equiv \strategy(\type;\realBidders) = \strategy(\type;\realBidders')$.
	Since $(\game,\strategy)$ is not \strongly~shill-proof, $\hat{\strategy}$ must not be an ex-post strategy for the shill bidders.
	So, for some realization of $\realBidders$ and $\type_{\realBidders}$ there exists a profitable deviation for the shill bidders; examining the set of possible deviations $\Sigma_{\shillBidders}$ in \cref{def:game_eq}, we see that any profitable deviating actions induces a profitable misreport $\type_{-\realBidders}$ in the direct mechanism for some $\realBidders$ and $\type_{\realBidders}$; proving \cref{eq:ssp_prof_deviation} can be satisfied.
\end{proof}

\begin{lemma} \label{lem:ssp_val_equiv}
	If a single-action auction is \strongly~shill-proof, then for all $\realBidders$, $i \notin \realBidders$, $\type_i$, and $\type_{-i}$,
	\iftoggle{compressLines}{$\sum_{k \in \realBidders} \transfer_k(\type_i,\type_{-i}) \le \sum_{k \in \realBidders} \transfer_k(0,\type_{-i}).$}{\[ \sum_{k \in \realBidders} \transfer_k(\type_i,\type_{-i}) \le \sum_{k \in \realBidders} \transfer_k(0,\type_{-i}). \]}
\end{lemma}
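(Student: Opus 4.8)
The plan is to derive the inequality directly from the ex-post optimality condition in \cref{def:ssp}, specialized to the singleton shill set $\shillBidders = \{i\}$, and then to collapse the bidder sums using the winner-paying property (\cref{lem:winner_paying}). Via the revelation principle for single-action auctions (\cref{lem:one_shot_rev_principle}), a shill bidder's deviation corresponds to submitting a report, so reporting $\type_i$ in place of the truthful $0$ is an admissible deviation. Applying \cref{def:ssp} with $\shillBidders = \{i\}$ and the remaining bidders $\potentialBidders \setminus \{i\}$ playing the profile $\type_{-i}$ yields
\[
\sum_{k \in \potentialBidders\setminus\{i\}} \transfer_k(0,\type_{-i}) \ge \sum_{k \in \potentialBidders\setminus\{i\}} \transfer_k(\type_i,\type_{-i}),
\]
and this holds for every $\type_{-i}$, since \cref{def:ssp} quantifies over all real-bidder type vectors.

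Next I would reduce both sides to a single term. Let $w$ denote the winner under $(\type_i,\type_{-i})$ and $w'$ the winner under $(0,\type_{-i})$; by winner-paying, every non-winner transfers $0$. I split into two cases. If $w = i$, then no real bidder is allocated the item under $(\type_i,\type_{-i})$, so $\sum_{k\in\realBidders}\transfer_k(\type_i,\type_{-i}) = 0$, while $\sum_{k\in\realBidders}\transfer_k(0,\type_{-i})$ is the (nonnegative) transfer of the winner under $(0,\type_{-i})$ together with zeros, and the claim holds. If $w \ne i$, then lowering $i$'s report from $\type_i$ to $0$ cannot displace the highest-priority bidder $w$, so by \cref{def:orderly} we have $w' = w$; winner-paying reduces the preceding displayed inequality to $\transfer_w(0,\type_{-i}) \ge \transfer_w(\type_i,\type_{-i})$, whence
\[
\sum_{k\in\realBidders}\transfer_k(\type_i,\type_{-i}) = \transfer_w(\type_i,\type_{-i})\,\indicator[w\in\realBidders] \le \transfer_w(0,\type_{-i})\,\indicator[w\in\realBidders] = \sum_{k\in\realBidders}\transfer_k(0,\type_{-i}),
\]
completing the argument.

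The main obstacle is that the target inequality involves only the sum over $\realBidders$, whereas applying \cref{def:ssp} with the \emph{actual} shill set $\potentialBidders\setminus\realBidders$ would bound both $\sum_{\realBidders}\transfer_k(\type_i,\type_{-i})$ and $\sum_{\realBidders}\transfer_k(0,\type_{-i})$ from above by the all-zero-report revenue, without ordering them against each other. The resolution—and the step I expect to be most delicate—is to invoke strong shill-proofness with the singleton deviating set $\{i\}$ (legitimate because \cref{def:ssp} holds for every $\shillBidders$) and then use winner-paying to discard the transfers of all bidders other than the common winner $w$. I would also double-check the direction: monotonicity alone suggests $\transfer_w(\type_i,\type_{-i}) \ge \transfer_w(0,\type_{-i})$, so it is precisely the ex-post no-shill condition that forces these equal and delivers the inequality in the stated direction.
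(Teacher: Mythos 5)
Your proof is correct, and it is actually more complete than the paper's own argument. The paper proves this lemma by contradiction in three lines: a violation of the displayed inequality is read as saying that reporting $0$ is not an ex-post best response for shill bidders in the direct game, and the correspondence condition (i) of \cref{lem:one_shot_rev_principle} lifts this to a profitable deviation in the extensive form, contradicting \cref{def:ssp}. What the paper glosses over---and what you supply---are exactly the two bridging steps needed to make that unfolding rigorous when $\type_{-i}$ contains nonzero reports by \emph{other} shill bidders: first, applying \cref{def:ssp} with the singleton deviating set $\shillBidders=\{i\}$, so that all remaining bidders (including other shills) are treated as real bidders with values $\type_{-i}$; this is what legitimizes arbitrary $\type_{-i}$, since \cref{def:ssp} only compares deviations against the all-truthful baseline and cannot by itself order two deviating profiles against each other. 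Second, the reduction from the resulting inequality over $\potentialBidders\setminus\{i\}$ to the sum over $\realBidders$, via winner-paying and the observation that lowering $i$'s report cannot change the identity of a winner $w\neq i$. Both steps are genuinely needed for the lemma as it is used (in \cref{prop:sa_cred_nesting} it is effectively chained one shill report at a time, with other shills' reports held at nonzero values), so your version is the more faithful proof. Two caveats. Your case analysis invokes \cref{def:orderly} and \cref{lem:winner_paying} (hence optimality), neither of which is a stated hypothesis of the lemma; this matches the paper's own practice (its companion lemmas \cref{lem:wsp_form} and \cref{lem:msp_form} apply winner-paying to orderly single-action auctions, and \cref{prop:sa_cred_nesting} assumes orderliness), but they should be flagged as maintained assumptions rather than consequences of strong shill-proofness alone. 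Also, in the case $w=i$, the nonnegativity of $\sum_{k\in\realBidders}\transfer_k(0,\type_{-i})$ does not follow from ex-post IR; it does follow from your own displayed inequality, since winner-paying makes its right-hand side equal to zero when $i$ wins, so the full sum under $(0,\type_{-i})$---which is the winner's transfer there---must be nonnegative.
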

\begin{proof}
	Towards contradiction, suppose that there exists $\realBidders$, $i \notin \realBidders$, $\type_i$, and $\type_{-i}$ such that
	\iftoggle{compressLines}{$\sum_{k \in \realBidders} \transfer_k(\type_i,\type_{-i}) > \sum_{k \in \realBidders} \transfer_k(0,\type_{-i}).$}{\[ \sum_{k \in \realBidders} \transfer_k(\type_i,\type_{-i}) > \sum_{k \in \realBidders} \transfer_k(0,\type_{-i}). \]}
	So in the direct game reporting $0$ is not a dominant strategy for shill bidders.
	This implies, from \cref{cond:one_shot_corr} of \cref{lem:one_shot_rev_principle}, that there exists a deviation in the auction such that for some value vectors, the seller raises more revenue.
	Therefore, the auction is not \strongly~shill-proof.
\end{proof}

\subsubsection*{Proof of \cref{prop:sa_cred_nesting}}
\textbf{\Weak~Shill-Proofness $\to$ $(\publicExperiment)$-Credibility. }
Suppose the auction is not $(\publicExperiment)$-credible.
Then, combining \cref{lem:direct_cred} with the ex-post IR condition, there exist $\type,\cbr{\reportedValue{i}} \in \devValSet(\type)$ and $k^*$ such that $\transfer_{k^*}(\reportedValue{k^*}) > \transfer_{k^*}(\type)$.
Applying the definition of orderly and the winner-paying property, for all $j \ne k^*$, $\transfer_j(\reportedValue{k^*}) = 0$. 
Since $\publicExperiment$, for all $j \le k$, it is the case that $\reportedValue{k^*}_j = \type_j$ or $\type_j = 0$.
Let $\realBidders = \set{k^*}$.
Then,
\begin{align*}
	\sum_{i \in \realBidders} \transfer_i(\reportedValue{k^*}) &= \transfer_{k^*}(\reportedValue{k^*}) \\
	&\ge \transfer_{k^*}\p{\type_1,\ldots,\type_{k^*},\reportedValue{k^*}_{k^*+1},\ldots,\reportedValue{k^*}_{\numBidders}} \\
	&\ge \transfer_{k^*}\p{\type_1,\ldots,\type_{k^*},0,\ldots,0} \\
	&= \sum_{i \in \realBidders} \transfer_i(\type_{\realBidders},0).
\end{align*}
Thus, we can apply \cref{lem:wsp_val_equiv_general} to conclude that the auction is not \weakly~shill-proof.

\textbf{$\experiment$-Credibility $\to$ \Weak~Shill-Proofness. }
Suppose the auction is $\experiment$-credible.
Towards contradiction, suppose the auction is not \weakly~shill-proof.
So, there exists $\realBidders$ and $\type$ such that $\strategy(\type;\realBidders) \ne \strategy(\type;\potentialBidders)$.
In particular, this means that shill bidders have, in expectation, a profitable deviation relative to acting as real bidders with valuation $0$.
If this is true in expectation, there must then exist $\type=(\type_{\realBidders},0)$ and $\tilde{\type}_{-\realBidders}$ such that
\[ \sum_{i \in \realBidders} \transfer_i((\type_{\realBidders},\tilde{\type}_{-\realBidders})) > \sum_{i \in \realBidders} \transfer_i((\type_{\realBidders},0)). \]
Now, let us consider the messaging deviation
\[ \cbr{\reportedValue{i}}_{i \in \potentialBidders} = \begin{cases}
	\p{\type_{\realBidders},\tilde{\type}_{-\realBidders}} & i \in \realBidders \\
	\p{\type_{\realBidders},0} & \ow
\end{cases} .\]
By the definition of credibility, the auctioneer can report any value to other bidders when the value reported to him is $0$ and bidders with value $0$ are told the other bidders' true reports.
Therefore, $\cbr{\reportedValue{i}} \in \devValSet(\type_{\realBidders},0)$ and
\[ \sum_i \transfer_i(\reportedValue{i}) = \sum_{i \in \realBidders} \transfer_i(\type_{\realBidders},\tilde{\type}_{-\realBidders}) + \sum_{i \notin \realBidders} \transfer_i(\type_{\realBidders},0) > \sum_i \transfer_i((\type_{\realBidders},0)). \]
This contradicts \cref{lem:direct_cred}, and so the auction must be \weakly~shill-proof.

\textbf{$(\privateExperiment)$-Credibility $\to$ \Strong~Shill-Proofness.}
Suppose that the auction is not \strongly~shill-proof and $\privateExperiment$. 
There are two cases to consider: either the auction is not \weakly~shill-proof or it is.
If the auction is not \weakly~shill-proof, then we can apply the previous case to conclude the auction is not $(\privateExperiment)$-credible.
If the auction is \weakly~shill-proof, but not \strongly~shill-proof, then by \cref{lem:wsp_not_ssp}, there exists $\realBidders,k^* \in \realBidders$, and $\type \st \transfer_{k^*}(\type) > \transfer_{k^*}(\type_{\realBidders},0)$.
Thus, we can construct the following profitable auctioneer reporting deviation:
\[ \cbr{\reportedValue{i}}_{i \in \potentialBidders} = \begin{cases}
	\p{\type_{\realBidders},\type_{-\realBidders}} & i = k^* \\
	\p{\type_{\realBidders},0} & \ow
\end{cases} .\]
Since $\privateExperiment$, we know that $\cbr{\tilde{\type}^{\leadsto i}} \in \devValSet(\type_{\realBidders},0)$.
The total transfers are then
\[ \sum_{i} \transfer_i\p{\reportedValue{i}} = \transfer_{k^*}(\type) + \sum_{i \ne k^*} \transfer_i\p{\type_{\realBidders},0} > \sum_{i} \transfer_i\p{\type_{\realBidders},0}. \]
Hence, by \cref{lem:direct_cred}, we see that the auction is not credible.

\textbf{\Strong~Shill-Proofness $\to$ $\experiment$-Credibility. }
Suppose that the auction is not $\experiment$-credible.
Then, combining \cref{lem:direct_cred} with the ex-post IR condition, there exists $\type,\cbr{\reportedValue{i}} \in \devValSet(\type)$ and $k^*$ such that $\transfer_{k^*}(\reportedValue{k^*}) > \transfer_{k^*}(\type)$.
Recall, by the definition of $\experiment^{-1}$, that $\reportedValue{k^*}_{k^*} = \type_{k^*}$.
Suppose $\realBidders = \cbr{k^*}$.
Then,
\[ \sum_{i \in \realBidders} \transfer_{i}(\reportedValue{k^*}) = \transfer_{k^*}(\reportedValue{k^*}) > \transfer_{k^*}(\type) \ge \transfer_{k^*}(\type_{k^*},0) = \sum_{i \in \realBidders} \transfer_{k^*}(\type_{k^*},0). \]
Therefore, by \cref{lem:ssp_val_equiv}, the auction is not \strongly~shill-proof.
\qed

\end{document}